\newcommand{\be}{\begin{equation}}
	\newcommand{\ee}{\end{equation}}
\newcommand{\bea}{\begin{eqnarray}}
	\newcommand{\eea}{\end{eqnarray}}
\newcommand{\beas}{\begin{eqnarray*}}
	\newcommand{\eeas}{\end{eqnarray*}}
\newcommand{\eq}[1]{\begin{equation}\begin{aligned}#1\end{aligned}\end{equation}}
\newcommand{\eqn}[1]{\begin{align*}#1\end{align*}}
\newcommand{\bbE}{\mathbb E}
\newcommand{\bbF}{\mathbb F}
\newcommand{\bbI}{\mathbb I}
\newcommand{\bbN}{\mathbb N}
\newcommand{\bbP}{\mathbb P}
\newcommand{\bbR}{\mathbb R}
\newcommand{\bbZ}{\mathbb Z}
\newcommand{\scC}{\mathcal C}
\newcommand{\scF}{\mathcal F}
\newcommand{\scM}{\mathcal M}
\newcommand{\scR}{\mathcal R}
\newcommand{\scU}{\mathcal U}
\newcommand{\veps}{\varepsilon}
\newcommand{\norm}[1]{\ensuremath{\left\| #1 \right\|}}
\newcommand{\abs}[1]{\ensuremath{\left| #1 \right|}}
\newcommand{\indicator}[1]{\ensuremath{\mathbb{I}_{#1}}}
\newcommand*\widebar[1]{%
	\hbox{%
		\vbox{%
			\hrule height 0.5pt 
			\kern0.5ex
			\hbox{%
				\kern-0.1em
				\ensuremath{#1}%
				\kern-0.1em
			}%
		}%
	}%
} 
\newcommand{\crl}[1]{\ensuremath{ \left\{ #1 \right\} }}
\newcommand{\edg}[1]{\ensuremath{ \left[ #1 \right] }}
\newcommand{\brak}[1]{\ensuremath{\left( #1 \right)}}
\newtheorem{assumption}{Assumption}
\newtheorem{theorem}{Theorem}[section]
\newtheorem*{algo}{RWPI Protocol}
\newtheorem*{prob}{Problem}
\newtheorem{proposition}[theorem]{Proposition}
\newtheorem{remark}[theorem]{Remark}
\newtheorem{example}[theorem]{Example}
\newtheorem{examples}[theorem]{Examples}
\newtheorem{foo}[theorem]{Remarks}
\newenvironment{Remark}{\begin{remark}\rm}{\end{remark}}
\title{Data-driven Multiperiod Robust Mean-Variance Optimization}
\author{
Xin Hai \&	Gregoire Loeper \& Kihun Nam\\ 
	Monash University\\
	Clayton, VIC 3800, Australia
}
\date{\today}
\begin{document}
\maketitle
\begin{abstract}
We study robust mean-variance optimization in multiperiod portfolio selection by allowing the true probability measure to be inside a Wasserstein ball centered at the empirical probability measure. Given the confidence level, the radius of the Wasserstein ball is determined by the empirical data. The numerical simulations of the US stock market provide a promising result compared to other popular strategies.\\[2mm]
{\bf Keywords:} Mean-Variance, Robust Portfolio Selection, Wasserstein Distance, Modern Portfolio Theory\\[2mm]
\end{abstract}
\setcounter{equation}{0}
\section{Introduction}
\label{section1}
In this article, we study robust mean-variance optimization in multiperiod portfolio selection. In particular, we allow the true probability measure to be inside a Wasserstein ball specified by the empirical data and the given confidence level. We transform our optimization problem into a non-robust minimization problem with a penalty, which provides a tractable model for robust mean-variance optimization. This extends the single-period model of \cite{blanchet2021distributionally} to a multiperiod model. Then, we apply our framework to the US stock market on five different 10-year intervals between 2002 and 2019, which provides Sharpe ratios competitive with the equal-weighted portfolio, the classical Markowitz portfolio, and various robust portfolio strategies.

Since the seminal result of \cite{markowitz1952portfolio} on the static mean-variance portfolio optimization, the portfolio optimization theory has received much attention from academics and the industry. Mean-variance optimization has become popular because it has a clear and informative target function, which contains explicit profit and risk terms and describes the trade-off between them. Unfortunately, the model is extremely sensitive to the mean and covariance of the underlying stocks, and both the empirical mean and covariance can deviate significantly from the true ones.

To solve this issue, robust formulations are developed in the literature, allowing the possibility that the true probability measure may differ from the empirical one. For example, \cite{lobo2000worst} provided a worst-case robust analysis where the mean and covariance of returns belong to prescribed sets. \cite{pflug2007ambiguity} studied the robust Markowitz problem when the true probability distribution is not perfectly known. They used a confidence set for the probability distribution to find the robustness. However, both papers did not provide guidance for an appropriate selection of the uncertainty set's size. In addition to the discrete-time portfolio selection problem, \cite{pham2022portfolio} proposed a separation principle for solving the robust mean-variance problem in continuous time, when the expected return and correlation matrix are ambiguous.

The conventional approach to obtain a robust solution involves 1) specification (shape and size) of an ambiguity set, which is the collection of candidate probability measures for the true probability measure, and 2) analysis of the worst-case scenario.

Typical choices for the ambiguity set are a set of measures equivalent to the empirical measure\footnote{for example, the drift of stock prices can be allowed in a certain set} \citep[e.g.,][]{hansen2001robust}, an inverse image of measurable function \footnote{for example, the mean and variance can be allowed to be in certain specified intervals} \citep[e.g.,][]{lobo2000worst,tutuncu2004robust}, and a topological neighborhood of the empirical probability measure\footnote{for example, the true probability measure is allowed to be in a Wasserstein ellipsoid containing the empirical probability measure} (e.g., \citealp[Chapter 12]{fabozzi2007robust}; \citealp{hota2019data,fournier2015rate}). 
Each class of the ambiguity set has its strength and weakness. 

The ambiguity set's size is important as well. If one allows too much uncertainty, the solution would be too conservative. Conversely, if the size of the ambiguity set is too small, then the robustness is lost. Therefore, choosing the appropriate size of the ambiguity set is important. Conventionally, the one performing the optimization chooses the ambiguity set's size based on separate statistical estimations. Recently, data-driven approaches, which combine the estimation and the optimization steps, were proposed in the literature. A short list includes \cite{delage2010distributionally}, \cite{esfahani2018data}, and \cite{blanchet2019data}.

In particular, \cite{blanchet2019robust} studied the case where their ambiguity set is defined as a Wasserstein ball of probability measures. They proposed a systemic way of choosing the suitable radius of a Wasserstein ball for a given confidence level of true probability lying on the ball. Robust Wasserstein Profile Inference (RWPI) was used in \cite{blanchet2021distributionally} to obtain a data-driven robust protocol for single-period mean-variance optimization. They transformed the robust mean-variance optimization into a non-robust mean-variance optimization with a penalty function, which is computationally tractable. In addition, they empirically compared their strategies with some popular methods and achieved better average returns against most models.

Two interesting questions remain. Can we use the method for multiperiod models, and how does the strategy perform compared to other strategies? We study these questions in this article.

A multiperiod robust mean-variance model is inherently different from the single-period one. First, the mean-variance optimization is not time-consistent. In other words, optimal control in a sub-interval is not necessarily the optimal control in the whole interval. As a result, we cannot iterate the single-period model to obtain the optimal control for the multiperiod model. Another difference is the information structure. Our optimal control (investment) has to be adapted to the filtration: our optimal control at a given time will depend on the historical data up to that time.

To circumvent the problem, we represent the optimal control as a function of historical stock prices and approximate it using the Taylor expansion. Then, the multiperiod mean-variance optimization problem transforms into a high-dimensional single-period problem. While the single-period problem we obtain is not the same as \cite{blanchet2021distributionally} due to the adaptedness condition, one can check that the result in \cite{blanchet2019robust} can still be applied. As a result, we prove that the robust mean-variance optimization problem is equivalent to a non-robust optimization with a penalty term. 

We perform numerical simulations on the top 100 US stocks' adjusted price data with and without transaction costs. We selected five ten-year periods 2002-2012, 2004-2014, 2006-2016, 2008-2018, and 2009-2019. We use the first two years to estimate the parameters and the other eight years to test the model. We compared the performance (the mean, the variance, and the Sharpe ratio) of the 2- and 3- period models with the equal-weighted portfolio, the classical Markowitz model, the single-period model in \cite{blanchet2021distributionally}, and the robust models used in \cite{xing2014robust,dai2019sparse, xidonas2017robust}; \citet[Chapter 12]{fabozzi2007robust}; and \cite{won2020robust}.  To compare the $n$-period models with the single-period model, we split the portfolio into $n$ sub-portfolios starting at day one, day two,..., and day $n$ and track the daily wealth of the portfolio. 

The numerical results are impressive. The 2-period model outperforms most non-robust and robust strategies other than the 3-period model with and without the transaction costs. The only exception is the 2008-2018 simulation: the 2-period model showed the moderately high Sharpe ratio and outperformed the equal-weighted, NC, and SP strategies. For other simulations, the 2-period model showed the highest Sharpe ratio and the lowest risk. When we analyzed similarly for the 3-period model using the top 50 stocks, the 3-period model outperformed the single and 2-period models except for the 2004-2014 and 2009-2019 simulations.

The rest of the paper is organized as follows. In Section \ref{section2}, we formulate the robust mean-variance optimization model. Section \ref{section3} concerns the approximation for the multi-period investment strategy. Section \ref{section4} presents the feasible region to the optimal problem. We figure out a new method to consider transaction costs and rebalancing for our multi-period model in Section \ref{section5}. In Section \ref{section6} we provide the numerical simulations and compare our 2-period model with some conventional methods. Section \ref{section7} shows the effect of the number of periods. Finally, Section \ref{section8} concludes the paper. 

\section{Problem Formulation}
\label{section2}
	
Consider a market with $n$ risky assets and one risk-free asset without an interest rate. Let $(\Omega,\scF,\bbF, P^*)$ be a filtered probability space with $n$ discrete stochastic processes $\{S^i_t: t=1,...,T\}, i=1,2,..,n$ and $\bbF$ is the filtration generated by $\{S^i:i=1,2,...,n\}$. Here, $S^i_t$ represents the $i$th risky asset price at time $t$. Let us denote $R_t^i$ the return of risky asset $i$ on the time interval $(t-1, t]$ for $t\in[1,T]$, that is $R_t^i=(S_{t}^i-S_{t-1}^i)/S_{t-1}^i $. Then, we can define a $nT$-dimensional return vector $\boldsymbol{R}=(R_1^1,...,R_1^n,R_2^1,...,R_2^n,...,R_{T}^1,$ $...,R_{T}^n)$ that encodes the path of asset prices. Throughout this paper, any vector is understood to be a column vector and the transpose of $\boldsymbol{R}$ is denoted by $\boldsymbol{R}^{\intercal}$. 
 
Throughout the article, we use the notation ${\bf 1}$ for a vector of appropriate dimension where each element is $1$. On the other hand, $\bbI_{E}$ denotes the indicator function; that is, if $E$ is satisfied, then its value is $1$, and otherwise, zero. The norm $\norm{\cdot}_p$ denotes the $l_p$ norm in the Euclidean space.
	
	We let $\pi_t^i$ be the amount of money invested in the $i$th risky asset on the time interval $(t-1, t]$ and 
	$\boldsymbol{\pi}=\left(\pi_1^1,...,\pi_1^n,\pi_2^1,...,\pi_2^n,...,\pi_{T}^1,...,\pi_{T}^n\right).$
	Since the filtration is generated by $\crl{S^i:i=1,2,...,n}$ and $\pi$ should be predictable, there exist measurable functions $f^i:\crl{1,2,..,T}\times\brak{\bbR^{n}}^T\to\bbR, i=1,2,...,n$ such that
	\begin{itemize}
		\item $f^i(T-t,r_1, r_2,...,r_T)$ does not depend on the last variables $r_{t},r_{t+1},...,r_{T}$ and
		\item For each $i=1,2,...,n$, \begin{align*}
			\pi^i_t=f^i(T-t,R_1,R_2,...,R_{T}).
		\end{align*}
	\end{itemize}
In this article, we will use the second-order Wasserstein metric on the space of probability measures
	\begin{align*}
		W_2(\mu,\nu):=\brak{\inf_{(X,Y)\in\scC(\mu,\nu)}\bbE|X-Y|^2}^{1/2},
	\end{align*}
	where
	\begin{align*}
		\scC(\mu,\nu):=\crl{(X,Y): \bbP\circ X^{-1} =\mu, \bbP\circ Y^{-1}=\nu}.
	\end{align*}
	Now let us state our problem. For a given $\delta>0$, we define an ambiguity set by $\delta$-neighborhood of $Q$, 
	\eqn{\scU_\delta(Q):=\{P:W_2(P,Q)\leq\delta\},}
	where $Q$ is the empirical probability we obtained from historical data,
	\eqn{Q(\cdot)=\frac{1}{N}\sum_{i=1}^{N}\delta_{R^{(i)}}(\cdot),
	}
	where $(R^{(i)})_{i=1, 2, ..., N}$ are realizations of $\boldsymbol{R}$ and $\delta_{R^{(i)}}(\cdot)$ is the indicator function.
	
	The feasible region of portfolio investment is given by
	\begin{align}\label{feasible_R}
		\scF^R_{(\delta,\bar{\alpha})}:=
		\left\{\boldsymbol{\pi}:\begin{array}{l}
			\min_{P\in\scU_\delta(Q)}
			E_P\left[\boldsymbol{\pi}^\intercal\boldsymbol{R}\right]\geq\bar{\alpha}, \sum_{i=1}^n\pi_t^i=1\\
			\pi^i \text{ is predictable for each } i\in\{1,2,..., n\}\\
			\text{For given $t$, }\pi^i_t\text{ are analytic functions of $\boldsymbol{R}$ for each } i\in\{1,2,..., n\}
		\end{array}
		\right\}
	\end{align}
	Here, $\bar \alpha$ is the worst acceptable expected return. 
		\begin{Remark}
		It is reasonable to restrict our investment strategy function $f^i(x,\cdot)$ to be integrable with respect to some probability measure $\lambda$ since the investment in practice has constraints in their size. Since the set of polynomial is dense in $L^1(\mathbb{R}^{nT}, \lambda)$, any integrable investment strategies satisfying $\min_{P\in\scU_\delta(Q)}
		E_P\left[\boldsymbol{\pi}^\intercal\boldsymbol{R}\right]\geq\bar{\alpha}$ can be approximated by the strategies in $\scF^R_{(\delta,\bar\alpha)}$. On the other hand, restricting our feasible region to polynomials enables us to apply the Taylor expansion technique to facilitate the computation.
	\end{Remark}
	
	For a given confidence level $1-\delta_0$, the investor needs to solve the following optimization problem,
	\begin{equation}\label{eq1_R}
		\begin{aligned}
			\inf_{\boldsymbol{\pi}\in\scF^R_{(\delta,\Bar{\alpha})}}\max_{\lambda\geq \bar\alpha}\brak{\sup_{P\in \scU_\delta(Q),E_P[\boldsymbol{\pi}^{\intercal}\boldsymbol{R}]=\lambda}E_P\edg{\brak{\boldsymbol{\pi}^\intercal\boldsymbol{R}}^2-\lambda^2}}
		\end{aligned}
	\end{equation}
where $\delta$ and $\bar\alpha$ are chosen so that 
\[\Pi^R_{P^*}\subset\brak{\cup_{P\in\scU_\veps(Q)}\Pi^R_P}\cap \scF^R_{(\delta,\bar\alpha)}\]
with probability more than $1-\delta_0$. Here, $\Pi^R_\mu$ is the set of all solution of \eqref{eq1_R} when $Q$ and $(\delta,\bar\alpha)$ are replaced by $\mu$ and $(0,\lambda)$, respectively. 

In order to simplify the predictability condition on $\scF^R_{(\delta,\bar\alpha)}$ to a computationally tractable model, we approximate the investment strategy using the Taylor series to approximate
$\boldsymbol{\pi}^\intercal\boldsymbol{R}\approx A^\intercal M$ and 
$\sum_{i=1}^n\pi_t^i=\sum_{i=1}^n(A^i_t)^\intercal N_t^i=1$.
The $A$ represents the coefficients of the Taylor series approximation of $\boldsymbol{\pi}^\intercal\boldsymbol{R}$ with respect to $\boldsymbol{R}$ and $M=(R_j^e,R_d^cR_j^e,R_b^aR_d^cR_j^e,\cdots)$, where $a, c, e = 1,...,n$ and $b, d, j= ts-s+1,...,ts$. Here, $R_b^a$ stands for $R_{ts-s+1}^1,...,R_{ts}^n$ and it is arranged in the order of `$abcdej$', which is the same in the terms $R_d^cR_j^e$ and $R_b^aR_d^cR_j^e$: see Section \ref{section3} and \ref{section4} for detail. Then, we set 
	\begin{align}\label{feasible}
	\scF_{(\delta,\bar{\alpha})}:=
	\left\{A: \min_{P\in\scU_\delta(Q)}
		E_P\left[A^\intercal M\right]\geq\bar{\alpha}, \sum_{i=1}^n(A^i_t)^\intercal N_t^i=1
	\right\}.
\end{align}
Under such approximation, our problem \eqref{eq1} transforms to the following problem:\\
\begin{prob}
For a given confidence level $1-\delta_0$, the investor needs to solve the following optimization problem,
\begin{equation}\label{eq1}
	\begin{aligned}
		\inf_{A\in\scF_{(\delta,\Bar{\alpha})}}\max_{\lambda\geq \bar\alpha}\brak{\sup_{P\in \scU_\delta(Q),E_P[A^{\intercal}M]=\lambda}E_P\edg{\brak{A^\intercal M}^2-\lambda^2}},
	\end{aligned}
\end{equation}
where $\delta$ and $\bar\alpha$ are chosen so that 
\[\Pi_{P^*}\subset\brak{\cup_{P\in\scU_\veps(Q)}\Pi_P}\cap \scF_{(\delta,\bar\alpha)}\]
with probability more than $1-\delta_0$. Here $\Pi_P$ represents an optimal solution of the non-robust multiperiod model under $P$.
\end{prob}
In order to use \cite{blanchet2021distributionally}, we define $\Pi_P$ as the set of all solutions $A$ of 
\begin{equation}
	\label{non-robust}
	\begin{aligned}
		\min_{A}\; A^\intercal E_P[MM^\intercal]A\quad \text{ subject to }\quad &E_P\edg{A^\intercal M}=\lambda\text{ and }\sum_{i=1}^n(A^i_t)^1=1 \text{ for each } t.
	\end{aligned}
\end{equation} 
The $(A^i_t)^1:=f^i_t$ represents the constant term of the Taylor series of $\pi^i_t$: see \eqref{eq7} and \eqref{eq8_1}.
\begin{remark}
    The condition $\sum_{i=1}^n(A^i_t)^1=1$ is an approximation for the condition $\sum_{i=1}^n\pi_t^i=\sum_{i=1}^n(A^i_t)^\intercal N_t^i=1$ with Taylor series approximation of $0^{th}$ order. Such approximation enables us to obtain $\lambda^*_0$ in the first step of RWPI protocol in Section \ref{specification}.
\end{remark}



Throughout this article, we assume the following technical conditions:
	\begin{assumption}
		\label{as1}
		The problem \eqref{non-robust} has a unique solution $A$ under any $P$, that is $\Pi_{P}=\crl{A}$.
	\end{assumption}
	\begin{assumption}
		\label{as2}
		$\boldsymbol{R}$ has a probability density function and the underlying return time series $\crl{\boldsymbol{R}_t:t\geq0}$ is a stationary, ergodic process satisfying $E_{P^*}(\|\boldsymbol{R}\|^{2T}_{2T})<\infty$ for each $t\geq0$.
	\end{assumption}
	\begin{assumption}
		\label{as3}
		For every measurable function $h(\cdot)$, we have
		\eqn{N^{1/2}[E_{Q}(h(\boldsymbol{R}))- E_{P^*}(h(\boldsymbol{R}))]\Rightarrow N(\boldsymbol{0},Cov[h(\boldsymbol{R})])}
		as $N\to\infty$, where
		\eqn{Cov[h(\boldsymbol{R})]:=\lim_{N \to \infty}Var_{P^*}\Big(N^{-1/2}\sum^N_{i=1}h(\boldsymbol{R}^{(i)})\Big)}
		satisfying $|h(w)|\leq k(1+\|w\|^2_2)$ for some $k>0$. Here, the $N(\cdot,\cdot)$ represents the normal distribution under $P^*$.
	\end{assumption}

	\begin{assumption}
		\label{as4}
For any matrix $\hat\alpha\in\bbR^{k\times k}$ and a vector $\hat\beta\in\bbR^k$ such that $\hat\alpha \neq \boldsymbol{0}$ or $\hat\beta\neq \boldsymbol{0}$, 
\eqn{P^*(\|\hat\alpha \boldsymbol{R}+ \hat\beta\|_2>0)=1.}
	\end{assumption}
	
		\begin{assumption}
		\label{as5}
		Let $S_t$ be the price of an asset at time point $t$ in the $L^2$ space. We define our return $R_t=(S_t-S_{t-\Delta t})/S_{t-\Delta t}$ on the time period $(t-\Delta t, t]$. When $\Delta t$ converges to $0$, the return $R_t$ also converges to $0$. 
	\end{assumption}
The following proposition enables us to apply the single-period result of \cite{blanchet2021distributionally} to our problem.
	\begin{proposition}
		\label{Remark 2}
		Assumption \ref{as2}, \ref{as3}, and \ref{as4} also hold for $M$ as well as $\boldsymbol{R}$.
	\end{proposition}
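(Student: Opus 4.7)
The plan is to reduce each of Assumptions~\ref{as2}--\ref{as4} for $M$ to the corresponding assumption for $\boldsymbol{R}$ via the polynomial map $\phi$ defined by $M=\phi(\boldsymbol{R})$, whose components are distinct monomials of bounded degree, say $d$, in the entries of $\boldsymbol{R}$.

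I would treat Assumption~\ref{as4} first, as it is the cleanest. Given $\hat\alpha\in\mathbb{R}^{k\times k}$ and $\hat\beta\in\mathbb{R}^k$ with $(\hat\alpha,\hat\beta)\neq(\boldsymbol{0},\boldsymbol{0})$, either $\hat\alpha=\boldsymbol{0}$ (whence $\hat\alpha M+\hat\beta=\hat\beta\neq \boldsymbol{0}$ identically), or $\hat\alpha\neq\boldsymbol{0}$ and the fact that the monomials comprising $M$ are pairwise distinct, and hence linearly independent as polynomials, forces $\hat\alpha M+\hat\beta$ to be a non-trivial $\mathbb{R}^k$-valued polynomial in $\boldsymbol{R}$. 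At least one of its scalar components is then a non-zero polynomial whose zero set is a proper algebraic subvariety of $\mathbb{R}^{nT}$ and thus has Lebesgue measure zero; the density of $\boldsymbol{R}$ granted by Assumption~\ref{as2} therefore yields $P^*(\hat\alpha M+\hat\beta=\boldsymbol{0})=0$.

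For Assumption~\ref{as2} applied to $M$, stationarity and ergodicity of $\{M_t\}=\{\phi(\boldsymbol{R}_t)\}$ are immediate because $\phi$ is deterministic and measurable. The density assertion is delicate: $M$ actually lives on an algebraic subvariety of dimension at most $nT$, so it has no density on $\mathbb{R}^k$ in the usual sense, and I would argue that the only downstream use of the density, in the sense required by \cite{blanchet2021distributionally}, is the non-degeneracy codified by Assumption~\ref{as4}, which has just been verified from the density of $\boldsymbol{R}$ itself. For the $2T$-th moment, H\"older's inequality applied to a monomial of degree at most $d$ gives
\[
E_{P^*}\bigl[|R_{k_1}\cdots R_{k_d}|^{2T}\bigr]\le\prod_{j=1}^{d}\bigl(E_{P^*}|R_{k_j}|^{2Td}\bigr)^{1/d},
\]
so $E_{P^*}[\|M\|_{2T}^{2T}]<\infty$ provided Assumption~\ref{as2} is read as furnishing a moment of $\boldsymbol{R}$ of order $2Td$, which is the implicit strengthening the polynomial lift forces.

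Finally, for Assumption~\ref{as3}, I would take an arbitrary measurable $g:\mathbb{R}^k\to\mathbb{R}$ with $|g(m)|\le k(1+\|m\|_2^2)$ and set $h:=g\circ\phi$. The identities $E_Q[g(M)]=E_Q[h(\boldsymbol{R})]$ and $E_{P^*}[g(M)]=E_{P^*}[h(\boldsymbol{R})]$ reduce the desired CLT for $g(M)$ to a CLT for $h(\boldsymbol{R})$. The main obstacle is that $h$ has growth of order $\|r\|^{2d}$ rather than $\|r\|^2$, so Assumption~\ref{as3} as literally stated does not cover $h$; the resolution is that under the enhanced moment bound of the previous paragraph together with the stationary ergodic structure, the classical CLT for functionals of stationary ergodic sequences (for example via a martingale-approximation or Ibragimov--Linnik argument) still produces asymptotic normality with the required limiting covariance $\mathrm{Cov}[g(M)]$. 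Matching the moment order to the polynomial degree is, as I see it, the only real technical point; the rest of the argument is a transparent change of variables.
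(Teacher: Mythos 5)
Your proposal follows essentially the same route as the paper's own proof: reduce each assumption for $M$ to the corresponding one for $\boldsymbol{R}$ by composing with the polynomial map, invoking preservation of stationarity and ergodicity under measurable transformations for Assumption \ref{as2}, setting $h=\tilde h\circ g$ for Assumption \ref{as3}, and using the density of $\boldsymbol{R}$ to kill the zero set of a non-trivial polynomial for Assumption \ref{as4}. If anything, your version is more careful than the paper's: the paper asserts that $M$ has a PDF (it does not, as you observe, since $M$ is supported on a lower-dimensional algebraic variety), records only a second-moment bound for $M$ rather than the higher-order moment of $\boldsymbol{R}$ that the polynomial lift actually requires, and does not address the quadratic-growth mismatch in Assumption \ref{as3} that you correctly flag as the one genuine technical point.
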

	\begin{proof}
We know that there exists the transformation function $g(\cdot)$ from $\boldsymbol{R}$ to $M$ for $s$-period satisfying $M_t = g(\boldsymbol{R}_{ts-s+1},...,\boldsymbol{R}_{ts})$. According to Theorem 3.35 of \cite{white2014asymptotic}, we know if $\crl{\boldsymbol{R}_t}$ is stationary and ergodic process, then $\crl{M_t}$ is also stationary and ergodic process satisfying the property $E_{P^*}(\|\boldsymbol{M}\|^2_2)<\infty$.

On the other hand, Assumption \ref{as3} also holds for $M$ if we let $h=\tilde h\circ g$ in Assumption \ref{as3}, where
$\tilde h$ is an arbitrary measurable function.

Assumption \ref{as4} also holds for $M$. For $k$-dimensional $M$, any matrix $\hat\alpha\in\bbR^{k\times k}$ and a vector $\hat\beta\in\bbR^k$ such that $\hat\alpha \neq \boldsymbol{0}$ or $\hat\beta\neq \boldsymbol{0}$, we have
\eqn{P^*(\|\hat\alpha M+ \hat\beta\|_2>0)=1.}
Since the PDF of $M$ can be derived from the PDF of $\mathbf{R}$,
\eqn{P^*(\hat\alpha x+\hat\beta=\boldsymbol{0})=0,}
for any given $\hat\alpha\in \mathbb{R}^{k\times k}\backslash\{\boldsymbol{0}\},\hat\beta\in\mathbb{R}^k\backslash\{\boldsymbol{0}\}$, and any $x\in\mathbb{R}^k$. This implies that
\eqn{P^*(\|\hat\alpha M+ \hat\beta\|_2=0)=0.}
This completes the proof.
\end{proof}

\subsection[Specification of the ambiguity set]{Specification of $\delta$ and $\bar\alpha$}\label{specification}
We choose $\delta$ as the minimum uncertainty level so that the optimal solution of the non-robust multiperiod model under $P^*$ is in the plausible estimate of $A^*$ with confidence level $1-\delta_0$ which the investor assigns. In other words, we define
	\begin{align*}
		\delta :=\min\crl{\varepsilon:P^*\left(A^*\in\bigcup_{P\in \scU_{\varepsilon}(Q)}\Pi_P\right)\geq 1-\delta_0}
	\end{align*}
where $\Pi_P$ is the set of all solutions of the non-robust multiperiod portfolio selection problem (\ref{non-robust}).

Likewise, for given $\delta$, we choose $\bar\alpha$ as the maximum value that we do not rule out $A^*$, that is
\begin{align*}
	\bar\alpha:=\max\crl{\alpha: P^*\brak{A^*\in \scF_{(\delta,\bar{\alpha})}}\geq 1-\delta_0 }.
\end{align*}

In practice, one can use the framework illustrated in \cite{blanchet2021distributionally} to find $\delta$ and $\bar\alpha$ for given data. For readers' convenience, we will describe it here.
\begin{algo}
	\begin{enumerate}
		\item Find $(A^*,\lambda_0^*)$ by solving \eqref{non-robust} for $P=Q$ using the Lagrange multiplier method:
\eq{\label{eq2.3}&2\Sigma_NA^*-\lambda_0^*\mu_N-\brak{\sum_{t=1}^T\lambda_t^* \boldsymbol{\tilde 1}_t}=\boldsymbol{0},\\
	&(A^*)^\intercal \mu_N-\lambda = 0,\\
	&\sum_{i=1}^n(A^{*,i}_t)^1 =1\text{ for all }t=1,2,...,T}
where $(\mu_N,\Sigma_N):=(E_{Q}[M],E_{Q}[MM^\intercal])$ and $\boldsymbol{\tilde 1}_t:=(\partial_A\brak{\sum_{i=1}^n(A^i_t)^1}^\intercal$.
		\item Let $h(x)=x+2(\lambda_0^*)^{-1}xx^\intercal A^*$. For a random variable $Z\sim N(0, \Upsilon_h)$ where $$\Upsilon_h:=\lim_{N\to\infty}Var_{P^*}(N^{-1/2}\Sigma_{k=1}^Nh(M_k)),$$ we let
		\[
		\delta^*:=\min\crl{\delta:P^*\left(\frac{\|Z\|^2_2}{4(1-\mu^\intercal_N\Sigma^{-1}_N\mu_N)}\leq\delta N\right)\geq 1-\delta_0}.
		\]
		\item Let $(m_j)_{j\geq 1}$ be the realizations of $M$. For a random variable $\tilde Z\sim N(0,\Upsilon_{A^*})$ where
		\[		\Upsilon_{A^*}:=\lim_{\bar N\to\infty}Var_{P^*}\left(\bar N^{-1/2}\sum_{j=1}^{\bar N}(A^*)^\intercal m_j/\|A^*\|_2\right),
		\]
		let $\bar \alpha^*:=\lambda -\sqrt{\delta}\norm{A^*}_2\max\crl{s_0,s_0'}$ where
		\begin{align*}
  s_0&:=\min\crl{s:P^*\brak{\tilde Z\geq\sqrt{\delta N}(1-s_0)}\geq 1-\delta_0}\\
			s'_0&:=\frac{\lambda+\sqrt{\delta}\norm{A^*}_2-(A^*)^\intercal \mu_N}{\sqrt{\delta}\norm{A^*}_2}
		\end{align*}
  \item $(\delta^*, \bar\alpha^*)$ is the proposed value.
	\end{enumerate}
\end{algo}
\begin{Remark}
Step 2 is proved for the case where $(\mu_*,\sigma_*):=(E_P[M],E[MM^\intercal])$ and $\bar \lambda_0$ defined in the proof instead of $(\mu_N,\Sigma_N)$ and $\lambda_0^*$, but we use the empirical estimates to obtain $\delta^*$. In Step 3, $\Upsilon_{A^*}$ is obtained asymptotically.
\end{Remark}
\begin{Remark} When we obtain the proposed $\bar\alpha$ in Step 3, we ignore the condition $\sum_{i=1}^n(A^i_t)^\intercal N_t^i=1$ in the restriction of the feasible set \eqref{feasible}. However, we impose the condition when we perform the simulations in Section \ref{section5} to Section \ref{section7}.
\end{Remark}

\begin{proof} We don't need to prove Step 1 and Step 4. 
First, let us prove Step 3 assuming that we have obtained $\delta^*$ in Step 2.\\
\noindent (Proof of Step 3) Instead of finding $\bar\alpha^*$ directly, we will find the appropriate value for
$s_0=\displaystyle\frac{1}{\sqrt{\delta}\|A^*\|_2}(\lambda-\Bar{\alpha})$. Since 
\eqn{\min_{P\in \scU_\delta(Q)} E_P[(A^*)^{\intercal}M]=(A^*)^\intercal \mu_N-\sqrt{\delta}\|A^*\|_2\geq\Bar{\alpha}}
(see Section \ref{section4}) and $\lambda=(A^*)^\intercal E_{P^*}[M]$, we have
\eqn{(A^*)^\intercal(\mu_N-E_{P^*}(M))\geq\|A^*\|_2\sqrt{\delta}(1-s_0).}
Due to Proposition 1 in \cite{blanchet2021distributionally}, we know 
\eqn{N^{1/2}\left[\frac{(A^*)^\intercal(\mu_N-E_{P^*}(M))}{\|A^*\|_2}\right]\Rightarrow \tilde Z \sim N(0,\Upsilon_{A^*}),}
where
\eqn{\Upsilon_{A^*}:=\lim_{N\to\infty}Var_{P^*}\left({ N}^{-1/2}\sum_{j=1}^{ N}(A^*)^\intercal m_j/\|A^*\|_2\right).}
Now one can  choose $s_0$ such that the following inequality \eqref{s_0} holds with the confident level $1-\delta_0$,
\eq{\label{s_0}\tilde Z
	\geq\sqrt{\delta N}(1-s_0).}
Since 
\[
\frac{(A^*)^\intercal\mu_N-\lambda}{\|A^*\|_2}\geq \sqrt{\delta}(1-s_0)
\]
may not always hold, we define $s'_0$ to be the value
\[
\frac{(A^*)^\intercal\mu_N-\lambda}{\|A^*\|_2}= \sqrt{\delta}(1-s'_0)
\]
and let $s=\max\crl{s_0,s_0'}$. For such $s$, our estimate for $\bar\alpha$ becomes $\lambda -\sqrt{\delta}\norm{A^*}_2s.$

\noindent (Proof of Step 2) Let $(\mu_*,\Sigma_*)=(E_{P^*}[M],E_{P^*}[MM^\intercal])$. Assume that we have the solution of \eqref{non-robust} when $P=P^*$ and let $(\bar A, \bar \lambda_0, (\bar \lambda_t)_{t=1,2,..,T})$ be its corresponding Lagrange equation given as
\eq{\label{asdi}&2\Sigma_* \bar A-\bar\lambda_0\mu_*-\sum_{t=1}^T\bar\lambda_t \boldsymbol{\tilde 1}_t=0,\\
	&\bar A^\intercal \mu_*-\lambda = 0,\\
	&\sum_{i=1}^n(\bar A^{i}_t)^1 =1\text{ for all }t=1,2,...,T}
 Here, the well-definedness is guaranteed by the Assumption \ref{as1}.
Let
\[
\bar\mu =\frac{1}{\lambda_0}\brak{2\Sigma_N \bar A-\sum_{t=1}^T\bar\lambda_t \boldsymbol{\tilde 1}_t}
\]
Following the idea of RWPI approach introduced in \cite{blanchet2019robust} and Section 4.1 of \cite{blanchet2021distributionally}, let the RWP function be
\eqn{\scR_N(\Sigma_N,\bar\mu)=\inf_{P}\crl{W_2(P,Q):\begin{aligned}E_P[MM^\intercal]&=\Sigma_N,E_P[M]=\bar\mu\end{aligned} }}
where $\bar \mu=h(\Sigma_N)$.
Then, our choice for $\delta$ is given by
\eqn{
	\delta^* :=\inf\crl{\delta:P^*\left(\scR_N(\Sigma_N,\bar \mu)\leq\delta\right)\geq 1-\delta_0}.
}
Note that, by the argument in the proof of Theorem 2 in \cite{blanchet2021distributionally}, we have
\[
N\scR_N(\Sigma_N,\mu_N)\Rightarrow \frac{\norm{Z}_2^2}{4(1-\mu_*^\intercal\Sigma_*^{-1}\mu_*)}\quad\text{ as }N\to\infty
\]
where $Z$ is the distributional limit of $N^{-1/2}\sum_{i=1}^N\brak{m_i-\bar \mu}$. 
Note that, by multiplying \eqref{asdi} by $\bar A^\intercal$ and substituting $\lambda$ for $\bar A^\intercal\mu_*$ and $\bar A^\intercal\bar\mu$, we obtain the following equations
\begin{align*}
2\Sigma_*  \bar A&-\bar\lambda_0\mu_*-\sum_{t=1}^T\bar\lambda_t \boldsymbol{\tilde 1}_t=0, &2\bar A^\intercal \Sigma_* \bar A&-\bar\lambda_0\lambda-\bar A^\intercal\sum_{t=1}^T\bar\lambda_t \boldsymbol{\tilde 1}_t  =0\\
2\Sigma_N  \bar A&-\bar\lambda_0\bar\mu-\sum_{t=1}^T\bar\lambda_t \boldsymbol{\tilde 1}_t=0, &2\bar A^\intercal \Sigma_N \bar A&-\bar\lambda_0\lambda-\bar A^\intercal\sum_{t=1}^T\bar\lambda_t \boldsymbol{\tilde 1}_t =0
\end{align*}
for some $\bar\lambda_t$ and $\lambda_t^*$ for $t=1,2,...,T$.
Therefore, if we let $H_N:=\Sigma_N-\Sigma_*$, we have
\begin{align*}
    \bar\mu-\mu_* = \frac{2}{\bar \lambda_0}H_N\bar A.
\end{align*}
By Proposition \ref{Remark 2}, for $h_1(x)=x$ and $h_2(x)=xx^T$, we have
\begin{align}
    N^{-1/2}\sum_{i=1}^{N}\brak{m_i-\mu_*}&\Rightarrow N(0,\Upsilon_{h_1})\\
    N^{1/2} H_N&\Rightarrow N(0,\Upsilon_{h_2})
\end{align}
Since
\[\begin{aligned}
N^{-1/2}\sum_{i=1}^N\brak{m_i-\bar \mu}&= N^{-1/2}\sum_{i=1}^N\brak{m_i-\mu_*}+N^{1/2}\brak{\mu_*-\bar \mu}\sim Z_0+\frac{2}{\bar \lambda_0}Z_1\bar A
\end{aligned}
\]
for some $Z_0\sim N(0,\Upsilon_{h_1})$ and $Z_1\sim N(0,\Upsilon_{h_2})$. Therefore, $Z$ can be chosen so that
\[
Z\sim N(0,\Upsilon_h)
\]
with $h(x)=x+2\bar\lambda_0^{-1}xx^\intercal \bar A$. Since $\bar\lambda_0$ and $\bar A$ can be approximated by $\lambda_0^*$ and $A^*$, we proved the Step 2.
\end{proof}

\section{Approximation for Investment Strategy}
\label{section3}
From our definition \eqref{feasible} of feasible region, there exist measurable functions $f^i$
\begin{align}
    \label{eq6}\pi_t^i=f^i(T-t,R_1,R_2,...,R_{t-1},R_{t},...,R_T),
\end{align}
which are analytic with respect to $\mathbf{R}$. Since $f^i$ are analytic, we can use arbitrary order Taylor series to approximate $f^i$. Using Taylor series, we can transform the multi-period model to single period model. 

Without losing generality, we will provide the framework for the second order Taylor series approximation. Since $\pi^i$ should be predictable, when $b\geq t$ or $d\geq t$, we let the components $\partial f^i(T-t,0,...,0)/\partial x_d=0$ and $\partial^2 f^i(T-t,0,...,0)/\partial x_d\partial x_b=0$. 
Therefore,
\begin{equation}\label{eq7}
\begin{aligned}
	\pi_t^i
	& \approx f^i(T-t,0,...,0)+\frac{\partial}{\partial x^c_d}f^i(T-t,0,...,0)\cdot R^c_d \indicator{d\leq t-1}\\
	&\quad+\frac{1}{2}\frac{\partial^2}{\partial x^a_b\partial x^c_d}f^i(T-t,0,...,0)\cdot R^a_bR^c_d \indicator{b,d\leq t-1}.
\end{aligned}
\end{equation}
Here, for notational brevity, we use Einstein notation.\footnote{When an index variable appears more than twice in a single term, it implies the summation of that term over all the values of the index.} Then, the final wealth can be expressed as
\begin{equation}\label{eq8}
\begin{aligned}
\boldsymbol{\pi}^{\intercal}\boldsymbol{R}&=
\pi_1^1R_1^1+...+\pi_1^nR_1^n+...+\pi_t^iR_t^i+...+\pi_T^1R_T^1+...+\pi_T^nR_T^n\\
&=f^i(T-t,0,...,0)\cdot R_t^i +\frac{\partial}{\partial x^c_d}f^i(T-t,0,...,0)\cdot R^c_dR_t^i\indicator{d\leq t-1}\\
&\quad+\frac{1}{2}\frac{\partial^2}{\partial x^a_b\partial x^c_d}f^i(T-t,0,...,0)\cdot R^a_bR^c_dR_t^i\indicator{b,d\leq t-1}\\
&=f_t^i\cdot R_t^i+g_{dt}^{ci}\cdot R_{d}^cR_{t}^i+h_{bdt}^{aci}\cdot R_b^aR_d^cR_t^i
\end{aligned}
\end{equation}
where \eq{\label{eq8_1}&f_t^i=f^i(T-t,0,...,0),\quad g_{dt}^{ci}=\frac{\partial}{\partial x_d^c}f^i(T-t,0,...,0)\indicator{d\leq t-1},\\ &h_{bdt}^{aci}=\frac{1}{2}\frac{\partial^2}{\partial x^a_b\partial x^c_d}f^i(T-t,0,...,0)\indicator{b,d\leq t-1}}
This indicates a summation with the index running through the integral numbers 
$a,c,i=1,...,n$ and $b,d,t=1,...,T$ in the order of `$abcdit$'.

Next, we try to transform the formulation of summation (\ref{eq8}) to the formulation of vector product
\eq{\label{eq9}\boldsymbol{\pi}^{\intercal}\boldsymbol{R}&=f_t^i\cdot R_t^i+g_{dt}^{ci}\cdot R_{d}^cR_{t}^i+h_{bdt}^{aci}\cdot R_b^aR_d^cR_t^i\\
&=f_1^1\cdot R_1^1+\cdots+f_T^n\cdot R_T^n+g_{1,1}^{1,1}\cdot R_1^1R_1^1+\cdots+g_{T,T}^{n,n}\cdot R_T^nR_T^n\\
&\quad+h_{1,1,1}^{1,1,1}\cdot R_1^1R_1^1R_1^1+\cdots+h_{T,T,T}^{n,n,n}\cdot R_T^nR_T^nR_T^n\\
&=\Big(f_t^i,g_{dt}^{ci},h_{bdt}^{aci}\Big)^\intercal\Big(R_t^i,R_{d}^cR_{t}^i,R_b^aR_d^cR_t^i\Big),}
where\eqn{&(f_t^i)=(f_1^1,f_1^2,...,f_T^n), (g_{dt}^{ci})=(g_{11}^{11},g_{11}^{21},...,g_{TT}^{nn}), (h_{bdt}^{aci})=(h_{111}^{111},h_{111}^{211},...,h_{TTT}^{nnn}),
(R_t^i)=(R_1^1,R_1^2,...,R_T^n),\\&(R_{d}^cR_{t}^i)=(R_1^1R_1^1,R_1^2R_1^1,..., R_T^nR_T^n),(R_b^aR_d^cR_t^i)=(R_1^1R_1^1R_1^1,R_1^2R_1^1R_1^1,...,R_T^nR_T^nR_T^n).}
{\color{black}Here, the $p$-th component of a vector $(v^{lmn}_{opq})_{l,m,n,o,p,q}$ is $v^{aci}_{bdt}$ where,
\eqn{&t=\left\lceil\frac{p}{T^2n^3}\right\rceil, i=\left\lceil\frac{p-(t-1)n^3T^2}{n^2T^2}\right\rceil,d=\left\lceil\frac{p-(t-1)n^3T^2-(i-1)n^2T^2}{n^2T}\right\rceil,\\
&c=\left\lceil\frac{p-(t-1)n^3T^2-(i-1)n^2T^2-(d-1)n^2T}{nT}\right\rceil,\\
&b=\left\lceil\frac{p-(t-1)n^3T^2-(i-1)n^2T^2-(d-1)n^2T-(c-1)nT}{n}\right\rceil\\
&a= p-(t-1)n^3T^2-(i-1)n^2T^2-(d-1)n^2T-(c-1)nT-(b-1)n
}
where $\lceil x \rceil:=\min\crl{z\in\bbZ: x\leq z}$.
}

Let vector $A$ and vector $M$ be
\eqn{A=(f_t^i,g_{dt}^{ci},h_{bdt}^{aci}),\;M=( R_t^i,R_{d}^cR_{t}^i,R_b^aR_d^cR_t^i).}
Therefore, our optimization problem \eqref{eq1} becomes
\begin{equation}\label{eq10}
	\begin{aligned}
	\inf_{A\in\scF_{(\delta,\Bar{\alpha})}}\max_{\lambda\geq \bar\alpha}\brak{\sup_{P\in \scU_\delta(Q),E_P[A^{\intercal}M]=\lambda}E_P\edg{\brak{A^\intercal M}^2-\lambda^2}}
	\end{aligned}
\end{equation}
Fixing $E_P[A^{\intercal}M]=\lambda$ is helpful because the inner maximization problem is now linear in $P$. 
\begin{theorem}
The primal optimization problem given in (\ref{eq1}) is equivalent to the following dual problem
\eq{\inf_{A\in\scF_{(\delta,\Bar{\alpha})}}\sqrt{A^\intercal Var_Q(M)A}+\sqrt{\delta}\|A\|_2.\label{eq15}}
\end{theorem}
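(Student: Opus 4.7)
The plan is to reduce the nested inf--max--sup in \eqref{eq1} to an explicit one-variable expression in $A$ via the Wasserstein variance identity that drives the single-period theory of \cite{blanchet2021distributionally}. The first observation is that under the constraint $E_P[A^\intercal M]=\lambda$ the integrand collapses to a variance: $E_P[(A^\intercal M)^2-\lambda^2]=Var_P(A^\intercal M)$. Moreover, every $A\in\scF_{(\delta,\bar\alpha)}$ forces $E_P[A^\intercal M]\geq\bar\alpha$ for all $P\in\scU_\delta(Q)$ by definition of the feasible set, so the outer $\max_{\lambda\geq\bar\alpha}$ can be absorbed into the inner supremum by setting $\lambda=E_P[A^\intercal M]$. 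The inner problem therefore collapses to $\sup_{P\in\scU_\delta(Q)}Var_P(A^\intercal M)$.

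The core step is to establish the identity
\[
\sup_{P\in\scU_\delta(Q)}\sqrt{Var_P(A^\intercal M)}=\sqrt{A^\intercal Var_Q(M)A}+\sqrt{\delta}\|A\|_2.
\]
For the $\leq$ direction I would take an optimal $W_2$-coupling $(X,Y)$ of $P$ and $Q$ with $\bbE\|X-Y\|_2^2\leq\delta$, set $a=A^\intercal X$ and $b=A^\intercal Y$, and apply Cauchy--Schwarz to obtain $\|a-b\|_{L^2}\leq\sqrt{\delta}\|A\|_2$. Combining this with the variational identity $\|a-\bbE a\|_{L^2}=\inf_c\|a-c\|_{L^2}\leq\|a-\bbE b\|_{L^2}\leq\|a-b\|_{L^2}+\|b-\bbE b\|_{L^2}$ yields the bound. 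For the reverse inequality I would invoke the Wasserstein strong-duality representation used in the proof of Theorem 1 of \cite{blanchet2021distributionally}, constructing a perturbation of $Q$ that shifts mass along the direction $A$ so as to asymptotically inflate the variance and attain the supremum. This is precisely what Proposition \ref{Remark 2} is designed to enable: the regularity, moment, and non-degeneracy hypotheses on $M$ required by the single-period proof are inherited from those on $\boldsymbol{R}$, so the argument transfers almost verbatim to our augmented vector.

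Finally, squaring yields $\sup_{P\in\scU_\delta(Q)}Var_P(A^\intercal M)=\brak{\sqrt{A^\intercal Var_Q(M)A}+\sqrt{\delta}\|A\|_2}^2$. Since the feasible set $\scF_{(\delta,\bar\alpha)}$ does not depend on $\lambda$ and the map $x\mapsto x^2$ is strictly increasing on $[0,\infty)$, the outer infimum in $A$ of the squared quantity is attained by the same $A$ as the infimum of its square root, which is precisely \eqref{eq15}. The main obstacle will be the reverse half of the variance identity: the forward inequality and the algebraic collapse of the outer maximum are essentially bookkeeping, whereas sharpness requires the full Wasserstein duality machinery combined with the moment and non-degeneracy properties transferred to $M$ via Proposition \ref{Remark 2}.
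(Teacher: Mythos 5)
Your proposal is correct, but it reorganizes the argument differently from the paper. The paper keeps the inner supremum constrained to $E_P[A^\intercal M]=\lambda$, quotes Proposition A.3 of \cite{blanchet2021distributionally} to evaluate it as an explicit function $l(A,\lambda)$, completes the square to get $l(A,\lambda)-\lambda^2=\bigl(\sqrt{A^\intercal Var_Q(M)A}+\sqrt{\delta\|A\|_2^2-(\lambda-E_Q[A^\intercal M])^2}\bigr)^2$, and only then maximizes over $\lambda$, the maximum sitting at $\lambda=E_Q[A^\intercal M]$ (admissible on $\scF_{(\delta,\bar\alpha)}$). You instead first collapse $\max_{\lambda\ge\bar\alpha}\sup_{P:E_P[A^\intercal M]=\lambda}$ into $\sup_{P\in\scU_\delta(Q)}Var_P(A^\intercal M)$ — legitimate, since feasibility of $A$ makes the constraint $\lambda\ge\bar\alpha$ vacuous — and then prove the worst-case standard-deviation identity directly. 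Your upper bound via optimal coupling, Cauchy--Schwarz, and the $L^2$ triangle inequality is more elementary and self-contained than the paper's citation; your lower bound still leans on the cited duality, but it can in fact be made explicit and exact: take $Y\sim Q$ and set $X=Y+\sqrt{\delta}\,\tfrac{A}{\|A\|_2}\,\tfrac{A^\intercal Y-E_Q[A^\intercal Y]}{\sqrt{Var_Q(A^\intercal Y)}}$, which satisfies $\bbE\|X-Y\|_2^2=\delta$, preserves the mean of $A^\intercal M$, and gives $\sqrt{Var_P(A^\intercal M)}=\sqrt{A^\intercal Var_Q(M)A}+\sqrt{\delta}\|A\|_2$ exactly, so no asymptotic or duality argument is needed. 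What you lose relative to the paper is the $\lambda$-resolved formula $l(A,\lambda)$, which mirrors the structure used elsewhere in the single-period theory; what you gain is a shorter, mostly first-principles derivation. One caution: your coupling bound uses $\bbE\|X-Y\|_2^2\le\delta$, i.e.\ the ball $W_2^2(P,Q)\le\delta$, which is the convention implicit in the paper's formulas (all the $\sqrt{\delta}$ factors) but not the one literally written in its definition of $\scU_\delta(Q)$; state which normalization you are using so the constants are unambiguous.
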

\begin{proof}
Let's focus on solving the following problem
\eq{\sup_{P\in \scU_\delta(Q),E_P[A^{\intercal}M]=\lambda}E_P[(A^{\intercal}M)^2]\label{eq11}}
Let \begin{align}
l(A,\lambda):=E_Q[(A^\intercal M)^2]&+2(\lambda-E_Q[A^\intercal M])E_Q[A^\intercal M]+\delta \|A\|_2^2\\
&+2\sqrt{\delta\|A\|_2^2-(\lambda-E_Q[A^\intercal M])^2}\sqrt{A^\intercal Var_Q(M)A}.
\end{align}
Then, Proposition A.3 \cite{blanchet2021distributionally} tells you that, if $\delta\|A\|_2^2-(\lambda-E_Q[A^\intercal M])^2\geq0$, the value of (\ref{eq11}) is equal to $l(A,\lambda)$. According to the previous result, we have
\eqn{l(A,\lambda)-\lambda^2&=E_Q[(A^\intercal M)^2]-(E_Q[A^\intercal M])^2+\crl{\delta\|A\|_2^2-(\lambda-E_Q[A^\intercal M])^2}\\
&\quad+2\sqrt{\delta\|A\|_2^2-(\lambda-E_Q[A^\intercal M])^2}\sqrt{A^\intercal Var_Q(M)A}\\
&=A^\intercal Var_Q(M)A+\crl{\delta\|A\|_2^2-(\lambda-E_Q[A^\intercal M])^2}\\
&\quad+2\sqrt{\delta\|A\|_2^2-(\lambda-E_Q[A^\intercal M])^2}\sqrt{A^\intercal Var_Q(M)A}\\
&=\Big(\sqrt{A^\intercal Var_Q(M)A}+\sqrt{\delta\|A\|_2^2-(\lambda-E_Q[A^\intercal M])^2}\Big)^2.}
So the middle optimization problem in (\ref{eq10}) becomes
\eq{
	\max_{\lambda\geq \bar\alpha}\brak{\sup_{P\in \scU_\delta(Q),E_P[\boldsymbol{\pi}^{\intercal}\boldsymbol{R}]=\lambda}E_P\edg{\brak{\boldsymbol{\pi}^\intercal\boldsymbol{R}}^2-\lambda^2}}
	&=\max_{\lambda\geq \bar\alpha,\delta\|A\|_2^2-(\lambda-E_Q[A^\intercal M])^2\geq 0}[l(A,\lambda)-\lambda^2]
	\\&=\Big(\sqrt{A^\intercal Var_Q(M)A}+\sqrt{\delta}\|A\|_2\Big)^2.
}
\end{proof}
Note the optimal solution $A$ has the restriction that some components are $0$, but it does not change the value of our optimization problem (\ref{eq15}): the proof in the Appendix \ref{appendix}.

\section{Feasible Region}
\label{section4}
In the previous section, we obtain our robust mean-variance optimization dual problem. Before solving the optimal solution $A$ in (\ref{eq15}), we need to present the feasible region for the problem.

Firstly, we notice that the ambiguity set $\delta$ and the worst mean return target $\bar{\alpha}$ in the feasible region $ \mathcal{F}_{(\delta,\Bar{\alpha})}$ need to be chosen very carefully. On the one hand, if $\delta$ is chosen too large, the model ambiguity becomes to much and the solution would be too conservative. On the other hand, if $\delta$ is chosen too small, the effect of robustness will be too small to reflect the difference between the underlying and historical probability measures. Once $\delta$ has been selected, choosing $\bar{\alpha}$ is crucial, which needs to consider the size of $\delta$. It can protect the optimal solutions from being too aggressive and make the model more sensible. About the guidance on the choice of the size of $\delta$ and $\bar{\alpha}$, we refer to Blanchet and Chen (2018).

Note that the optimal solution $A$ in the feasible region $ \mathcal{F}_{(\delta,\Bar{\alpha})}$ has some conditions. Firstly, we assume the investment of risky assets in each period is $\sum_{i=1}^n\pi^i_t = 1$. In order to do the numerical simulation, we know that $\pi_t^i$ can be recovered from the data. We want to change $\pi_t^i$ to the formulation of vector product
\eqn{\pi^i_t\approx (A_t^i)^\intercal N_t^i.
}

In our paper, we previously use the second order Taylor expansion to $\pi^i_t$ and have
\eqn{\pi_t^i
	& \approx f^i(T-t,0,...,0)+\frac{\partial}{\partial x^c_d}f^i(T-t,0,...,0)\cdot R^c_d \indicator{d\leq t-1}\\
	&\quad+\frac{1}{2}\frac{\partial^2}{\partial x^a_b\partial x^c_d}f^i(T-t,0,...,0)\cdot R^a_bR^c_d \indicator{b,d\leq t-1}\\
	&=(f_t^i,g_{dt}^{ci},h_{bdt}^{aci})^\intercal (1,R_{d}^c,R_b^aR_d^c).}
By letting $A_t^i=(f_t^i,g_{dt}^{ci},h_{bdt}^{aci})$ and $N_t^i=(1,R_{d}^c,R_b^aR_d^c),$ we have
\eq{\sum^n_{i=1}\pi^i_t\approx \sum^n_{i=1}(A_t^i)^\intercal N_t^i=1.}
Next, note that $A$ in the feasible region $ \mathcal{F}_{(\delta,\Bar{\alpha})}$ has the other  constraint, that is
\[\min_{P\in\scU_\delta(Q)}[E_P(A^{\intercal}M)]\geq\Bar{\alpha} \iff -\max_{P\in \scU_\delta(Q)}[E_P((-A)^{\intercal}M)]\geq \Bar{\alpha}.\]
By the dual formulation in Proposition 1 of \cite{blanchet2019robust},
we have
\eq{\label{eq19}\max_{P\in \scU_\delta(Q)}[E_P((-A)^{\intercal}M)]=\inf_{\gamma\geq0}[\gamma\delta+\frac{1}{n}\sum_{i=1}^n\Phi_\gamma(m_i)]}
where \eq{\label{eq20}\Phi_\gamma(m_i)&=\sup_u\{(-A)^{\intercal}u-\gamma\|u-m_i\|_2^2\}\\
&=\sup_\Theta\{(-A)^{\intercal}(\Theta+m_i)-\gamma\|\Theta\|_2^2\}=\sup_\Theta\{(-A)^{\intercal}\Theta-\gamma\|\Theta\|_2^2\}-A^\intercal m_i\\
&=\sup_\Theta\{\|A\|_2\|\Theta\|_2-\gamma\|\Theta\|_2^2\}-A^{\intercal}m_i\\
&=\frac{\|A\|_2^2}{4 \gamma}-A^\intercal m_i}
Then, (\ref{eq19}) becomes
\eq{\label{eq21}\max_{P\in \scU_\delta(Q)}[E_P((-A)^{\intercal}M)]&=\inf_{\gamma\geq0}\{\gamma\delta+\frac{1}{n}\sum_{i=1}^n[\frac{\|A\|_2^2}{4\gamma}-A^{\intercal}m_i]\}\\
&=\inf_{\gamma\geq0}\{\gamma\delta+\frac{\|A\|_2^2}{4\gamma}-E_{Q}[A^{\intercal}M]\}\\
&=\sqrt{\delta}\|A\|_2-E_{Q}[A^{\intercal}M]}
or\eq{\label{eq22}\min_{P\in \scU_\delta(Q)}[E_P(A^{\intercal}M)]=E_{Q}[A^{\intercal}M]-\sqrt{\delta}\|A\|_2}
Therefore, we can have our feasible region
\eq{\mathcal{F}_{(\delta,\Bar{\alpha})}:=\crl{A:E_{Q}[A^{\intercal}M]-\sqrt{\delta}\|A\|_2\geq\Bar{\alpha}, \sum^n_{i=1}(A_t^i)^\intercal N_t^i=1}.}

\section{Implementation of Multi-period Investment Strategy}
\label{section5}
When one implements a multi-period investment strategy, rebalancing of the portfolio becomes no longer trivial. Assume that we have $T$-period strategies $$(\pi^{strategy}_t:=f(T-t, R_1,R_2,...,R_T))_{t=1,2,...,T}$$ where $f(T-t,r_1,r_2,...,r_T)$ only depends on the first $t$ variables $(T-t, r_1,r_2,...,r_{t-1})$. We consider the backtesting of the strategy on the time $\brak{s\in\bbZ:0\leq s\leq N}$ where $N\gg T$. A naive implementation would be, for $s,t\in\bbN, k\in\bbZ$ such that $s=kT+t$ and $1\leq t\leq T$,
\begin{align*}
	\pi^{naive}_s=f(T-t, R_{kT+1},R_{kT+2},...,R_{(k+1)T})
\end{align*}
for all $s\in[0,N]$. However, this is not a good rebalancing strategy because, for example, at time $T+1$, the strategy does not use any information from time $2$ to $T$. In addition, in our Taylor series approximation scheme, if there are differences between $f(T-t,0,0,...,0), t\in[1,T]$, we will need to rebalance our portfolio even when the stock prices are constant. This will result in huge transaction fee by default.

In order to circumvent these problem, we equally divide our portfolio into $T$ number of subportfolios which start at different times and follow the same naive strategy. 

For $i\in\crl{1,2,...,T}, s,t\in\bbN, k\in\bbN\cup\crl{0}$ such that $s=kT+t+(i-1)$ and $1\leq t\leq T$, we define the target investment strategy
\begin{align*}
	\pi_s:=\frac{1}{T}\sum_{i=1}^Tf(T-t, R_{kT+i},R_{kT+i+1},...,R_{(k+1)T+i-1})\indicator{i\leq s}.
\end{align*}
An example for $T=3$ is illustrated in Table \ref{tab:target}. 
\begin{table}[H]
	\begin{tabular}{|c|c|c|c|c|}
		\hline
		Time $(t-1,t]$& Subportfolio $\pi^{(1)}$      & Subportfolio $\pi^{(2)}$       & Subportfolio $\pi^{(3)}$& Target Investment $\pi$   \\ \hline
		$t=1$ & $f(2, R_1,R_2, R_3)$ & $0$                   & $0$&  $\frac{1}{3}\sum_{l=1}^3\pi^{(l)}_1$\\
		$t=2$ & $f(1, R_1,R_2, R_3)$ & $f(2,R_2,R_3,R_4)$    & $0$&  $\frac{1}{3}\sum_{l=1}^3\pi^{(l)}_2$              \\
		$t=3$ & $f(0, R_1,R_2, R_3)$ & $f(1,R_2,R_3,R_4)$    & $f(2,R_3,R_4,R_5)$& $\frac{1}{3}\sum_{l=1}^3\pi^{(l)}_3$\\
		$t=4$ & $f(2, R_4,R_5, R_6)$ & $f(0,R_2,R_3,R_4)$    & $f(1,R_3,R_4,R_5)$& $\frac{1}{3}\sum_{l=1}^3\pi^{(l)}_4$ \\
		$t=5$ & $f(1, R_4,R_5, R_6)$ & $f(2,R_5,R_6,R_7)$    & $f(0,R_3,R_4,R_5)$&$\frac{1}{3}\sum_{l=1}^3\pi^{(l)}_5$ \\
		$t=6$ & $f(0, R_4,R_5, R_6)$ & $f(1,R_5,R_6,R_7)$    & $f(2,R_6,R_7,R_8)$& $\frac{1}{3}\sum_{l=1}^3\pi^{(l)}_6$\\
		$t=7$ & $f(2, R_7,R_8, R_9)$ & $f(0,R_5,R_6,R_7)$    & $f(1,R_6,R_7,R_8)$&$\frac{1}{3}\sum_{l=1}^3\pi^{(l)}_7$ \\ 
		$t=8$ & $f(1, R_7,R_8, R_9)$ & $f(2,R_8,R_9,R_{10})$ & $f(0,R_6,R_7,R_8)$& $\frac{1}{3}\sum_{l=1}^3\pi^{(l)}_8$\\ \hline
	\end{tabular}
	\caption{Target investment for 3-period model}
	\label{tab:target}
\end{table}

In practice, it is not plausible to rebalance portfolio daily because of the transaction costs. Therefore, we implement the strategy in the following way.
At time $0$, we invest $\pi_1^*:= \frac{1}{T}\sum_{k=1}^{T} f(k-1,0,...,0)$ and
let it evolve until
\begin{align}\label{timetorebalance}
	\max_{i=1,...,n}\left|\displaystyle\frac{(\pi^i_1)^*(1+R^i_1)(1+R^i_2)\cdots(1+R^i_{t-1})-\pi^i_t}{(\pi^i_1)^*(1+R^i_1)(1+R^i_2)\cdots(1+R^i_{t-1})}\right|> 0.05.
\end{align}
When \eqref{timetorebalance} happens, we rebalance our portfolio to align with our target investment strategy $\pi$. Again, we wait until the $\pi^*$ deviates from $\pi$ more than 5\% and if it does, we rebalance our portfolio. We repeat the scheme until $N$. See the following algorithm.
\begin{itemize}
\item[]{\bf Algorithm}
	\begin{itemize}
		\item[(i)] At time $0$, we let $\pi_1^*:=\frac{1}{T} \sum_{k=1}^{T} f(k-1,0,...,0)$ and $\tau_0=1$.
		\item[(ii)] For $k\in\bbN$, define stopping times
		\begin{align*}
			\tau_k:=\min\crl{\tau_{k-1}<t\leq T:\max_{i=1,...,n}\left|\displaystyle\frac{(\pi^i_{\tau_{k-1}})^*(1+R^i_1)(1+R^i_2)\cdots(1+R^i_{t-1})-\pi^i_t}{(\pi^i_1)^*(1+R^i_1)(1+R^i_2)\cdots(1+R^i_{t-1})}\right|> 0.05},
		\end{align*}
	and let
	\begin{align*}
		(\pi^i_t)^*:=
		\begin{cases}
			(\pi^i_{\tau_{k-1}})^*(1+R^i_{\tau_{k-1}})(1+R^i_{\tau_{k-1}+1})\cdots(1+R^i_{t-1}),& \text{if } t\in(\tau_{k-1},\tau_k)\\
			\pi_{\tau_k} &\text{if } t= \tau_k
		\end{cases}.
	\end{align*}
	\item[(iii)] We let $(\pi^*_t)_{t=1,2,...,N}$ be our investment strategy.
	\end{itemize}
\end{itemize}
\begin{Remark}
	Note that the transaction costs of the above algorithm for time $[1,T]$ is not optimised. However, since $N\gg T$, the investments $\pi^*_t$ for $t<T$ have virtually no effect on the performance of investment overall. 
\end{Remark}

When we assume there exist transaction costs, we use the linear transaction costs of 0.2\%. More precisely, our transaction costs are given by
\begin{align*}
	TC=0.002\sum_{k\in\bbN,\tau_k\leq T}\sum^n_{i=1}\left|(\pi^i_{\tau_k-1})^*(1+R_{\tau_k-1})-(\pi^i_{\tau_k})^*\right|.
\end{align*}

\section{Empirical Studies}
\label{section6}
In this section, we consider the $L^2$-norm for our robust optimization model $(\ref{eq15})$. We use the first-order Taylor expansion in \eqref{eq7} for our investment strategy\footnote{According to our simulation which is not included in this article, the second-order Taylor expansion model produces essentially the same performance as the first order Taylor expansion model.} and compare its performance with those of well-known strategies.

The data provided by Bloomberg Terminal is used for the empirical study. We select 100 largest S\&P 500 companies by market cap at five different time points, Feb 1 2002, Jun 1 2004, Jun 1 2006, Aug 1 2008, and June 1 2009. For each time point, we use the daily adjusted closing price data for the previous 2 years to estimate our parameters and simulate the strategies for the following 8 years. We assume 
the net investment of each period is a fixed constant. We compare the means, the standard deviations, and the Sharpe ratio of wealth returns of our robust 2-period model and other conventional methods with and without the transaction costs.

{\color{black}We choose Feb 1 2002 as the first start date because the 9/11 attacks happened in 2001. Start dates Jun 1 2004 and Jun 1 2006 are chosen because the financial crisis occurred within the sample interval and we can observe the impact of these data on the model. After the financial crisis ended and the stock market gradually returned to normal, we select the last two start dates Aug 1 2008 and June 1 2009.} 

\subsection{Comparison of our model with conventional methods} 
In this section, we will compare the robust single and our 2-period models with other conventional methods like equal-weighted method, classical Markowitz model \citep{markowitz1952portfolio}, and the maxmin utility of mean-variance (UM) method \citep[Chapter 12]{fabozzi2007robust}.

Let $\mu_{\boldsymbol{R}}=E_P[\boldsymbol{R}]$, $\widebar{\mu_{\boldsymbol{R}}}=E_Q[\boldsymbol{R}]$, $K_{\boldsymbol{RR}}=E_P[\boldsymbol{RR}^\intercal]$, and $\overline{K_{\boldsymbol{RR}}}=E_Q[\boldsymbol{RR}^\intercal]$. The classical Markowitz mean-variance problem is to solve
\eqn{\max_{\boldsymbol{w}}\edg{ \boldsymbol{w}^\intercal \widebar{\mu_{\boldsymbol{R}}}-\frac{\gamma}{2} \boldsymbol{w}^\intercal \widebar{K_{\boldsymbol{RR}}} \boldsymbol{w}}\quad s.t\;\quad \textbf{1}^\intercal \boldsymbol{w}=1,
}
where $\gamma>0$ is a risk-aversion coefficient. One way of making the classical Markowitz model robust is by allowing the uncertainty in the mean return. In other words, for a uncertainty set $\scU$, we solve
\eqn{\max_{\boldsymbol{w}}\min_{\mu_{\boldsymbol{R}}\in \scU}\edg{ \boldsymbol{w}^\intercal \mu_{\boldsymbol{R}}-\frac{\gamma}{2} \boldsymbol{w}^\intercal \widebar{K_{\boldsymbol{RR}}} \boldsymbol{w}}\quad s.t\;\quad \textbf{1}^\intercal \boldsymbol{w}=1.
}
For example, \citet[Chapter 12]{fabozzi2007robust} considered the uncertainty set $$\scU=\scU_\delta(\widebar{\mu_{\boldsymbol{R}}})=\Big\{\mu_{\boldsymbol{R}}:(\mu_{\boldsymbol{R}}-\widebar{\mu_{\boldsymbol{R}}})^\intercal {K_{\mu_{\boldsymbol{R}}}}^{-1}(\mu_{\boldsymbol{R}}-\widebar{\mu_{\boldsymbol{R}}})\leq \delta^2 \Big\}.$$ Here, $K_{\mu_{\boldsymbol{R}}}$ is the covariance matrix of the errors in the estimation of the expected (average) returns. Under the assumption that the returns in a given sample of size $d$ are independent and identically distributed, we have $K_{\mu}=\widebar{K_{\boldsymbol{RR}}}/d$. The problem can be translated to its dual,
\begin{equation}\label{}
	\begin{aligned}
		&\max_{\boldsymbol{w}}\;\boldsymbol{w}^\intercal \widebar{\mu_{\boldsymbol{R}}}-\frac{\gamma}{2} \boldsymbol{w}^\intercal \widebar{K_{\boldsymbol{RR}}} \boldsymbol{w}-\delta\sqrt{\boldsymbol{w}^\intercal K_{\mu_{\boldsymbol{R}}} \boldsymbol{w}}\\
		&\quad s.t\;\;\textbf{1}^\intercal \boldsymbol{w}=1.
	\end{aligned}
\end{equation}
The $\delta$ could be specified by assuming some confidence interval around the historical expected return. Let $\sigma_d$ be the sample's standard deviation. Then, we let the $\delta=1.96$, which corresponds to the $95\%$ confidence interval for $\mu_{\boldsymbol{R}}$ by the Central Limit Theorem. We set $d=500$ (previous 2 year) and $\gamma=4$ for our simulations.

\subsubsection*{Simulation Results}
Table \ref{table2} shows the result of the optimal investment without transaction costs for the five methods.
\begin{table}[H]\centering
	\small
	\begin{tabular}{ |c||c|c|c|  }
		\hline
		2002.02.01 & Mean (Daily) & Std Dev (Daily) & Sharpe (Annualized)\\
		\hline
		equal-weighted & 0.000560395 & 0.014345112 & 0.620141254\\
		Markowitz & 0.000413440 & 0.010377771 & 0.632424947\\
		UM & 0.000426882 & 0.010438333 & 0.649199194\\
		single-period & 0.000545715 & 0.012848793 & 0.674223447
		\\
		2-period & 0.000385135 & 0.007563152 & 0.808371140\\
		\hline
		2004.06.01 & Mean (Daily) & Std Dev (Daily) & Sharpe (Annualized)\\
		\hline
		equal-weighted & 0.000508735 & 0.014763839 & 0.547007214\\
		Markowitz & 0.000472901 & 0.010955220 & 0.685251078\\
		UM & 0.000482110 & 0.011082446 & 0.690576211\\
		single-period & 0.000423354 & 0.011642227 & 0.577255661\\
		2-period & 0.000447016 & 0.007866766 & 0.902044465\\
		\hline
		2006.06.01 & Mean (Daily) & Std Dev (Daily) & Sharpe (Annualized)\\
		\hline
		equal-weighted & 0.000566047 & 0.014870618 & 0.604260345\\
		Markowitz & 0.000595641 & 0.009098311 & 1.039260405\\
		UM & 0.000609014 & 0.009163261 & 1.055061334\\
        single-period & 0.000521979 & 0.009156240 & 0.904975460\\
		2-period & 0.000516936 & 0.007580369 & 1.082548396\\
		\hline
		2008.08.01 & Mean (Daily) & Std Dev (Daily) & Sharpe (Annualized)\\
		\hline
		equal-weighted & 0.000582780 & 0.014381476 & 0.643281588\\
		Markowitz & 0.000639435 & 0.007996818 & 1.269344515\\
		UM & 0.000647746 & 0.008006844 & 1.284233376\\
		single-period & 0.000624190 & 0.007845931 & 1.262911166\\
		2-period & 0.000629368 & 0.008288125 & 1.205447689\\
		\hline
		2009.06.01	& Mean (Daily) & Std Dev (Daily) & Sharpe (Annualized)\\
		\hline
		equal-weighted & 0.000683486 & 0.009834511 & 1.103257959\\
		Markowitz & 0.000517496 & 0.006582772 & 1.247953883\\
		UM & 0.000531380 & 0.006568064 & 1.284305944\\
		single-period & 0.000637698 & 0.007654071 & 1.322583602\\
		2-period & 0.000555807 & 0.006527001 & 1.351793232\\
		\hline
	\end{tabular}
	\caption{100 Stocks without transaction costs}
	\label{table2}
\end{table}
The following plots are the rolling 1-year Sharpe ratios of each strategy without transaction costs, starting from different dates.

\begin{figure}[H]
	\centering
	\includegraphics[width=14cm]{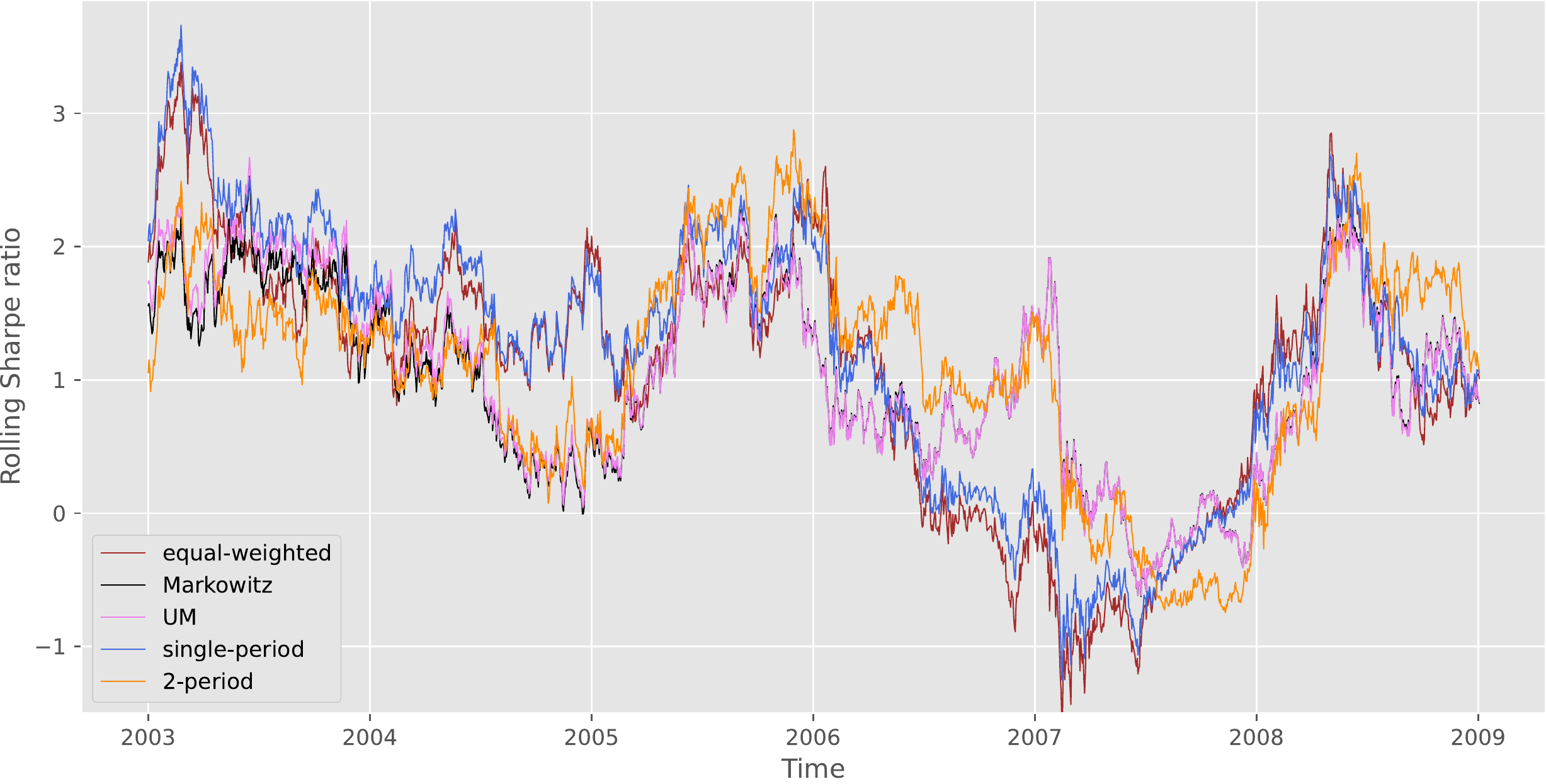}
	\caption{100 Stocks without transaction costs starting 2002.02.01}
	\label{fig1}
\end{figure}
\begin{figure}[H]
	\centering
	\includegraphics[width=14cm]{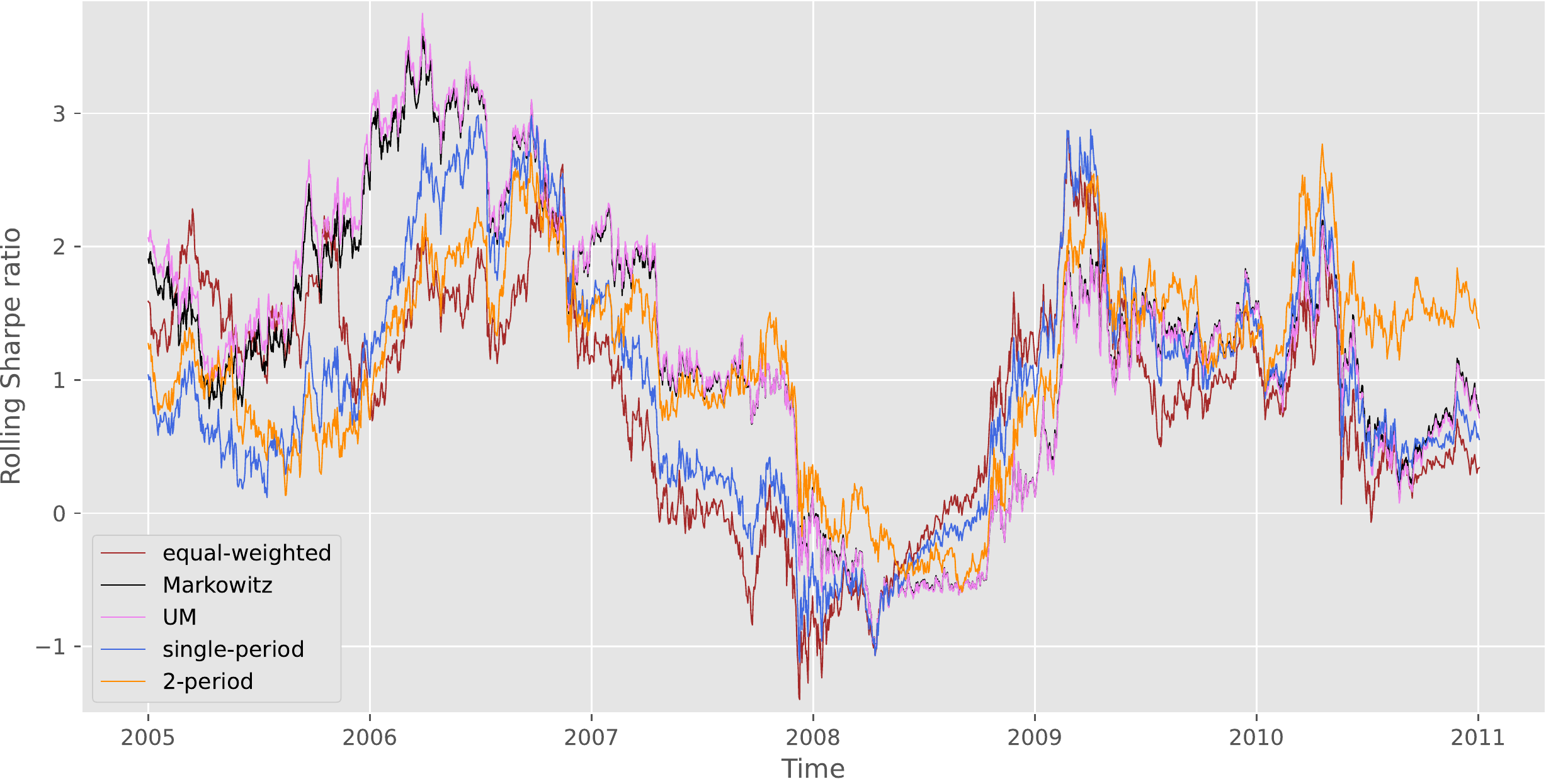}
	\caption{100 Stocks without transaction costs starting 2004.06.01}
	\label{fig1.0.1}
\end{figure}
\begin{figure}[H]
	\centering
	\includegraphics[width=14cm]{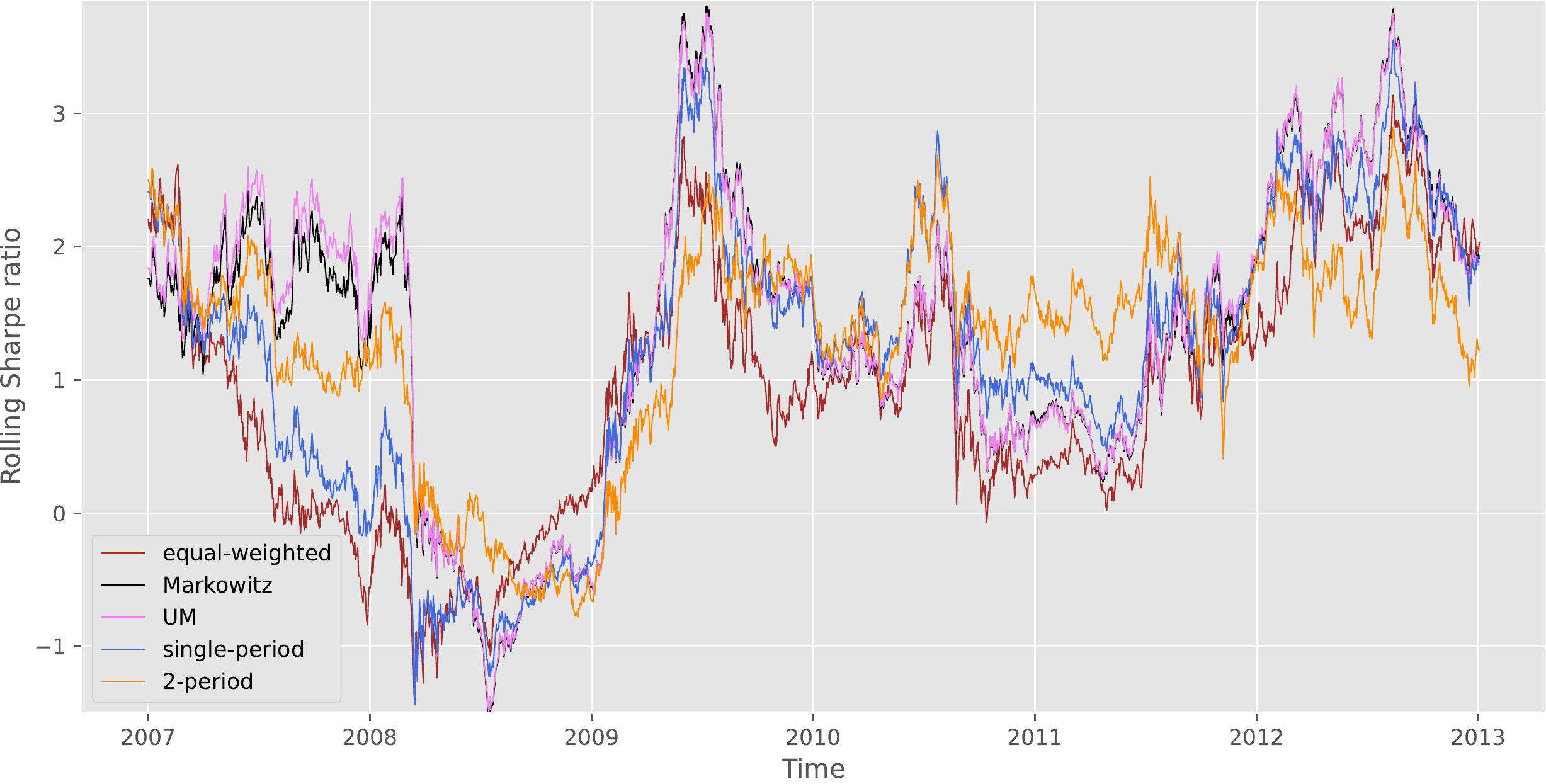}
	\caption{100 Stocks without transaction costs starting 2006.06.01}
	\label{fig1.0.2}
\end{figure}
\begin{figure}[H]
	\centering
	\includegraphics[width=14cm]{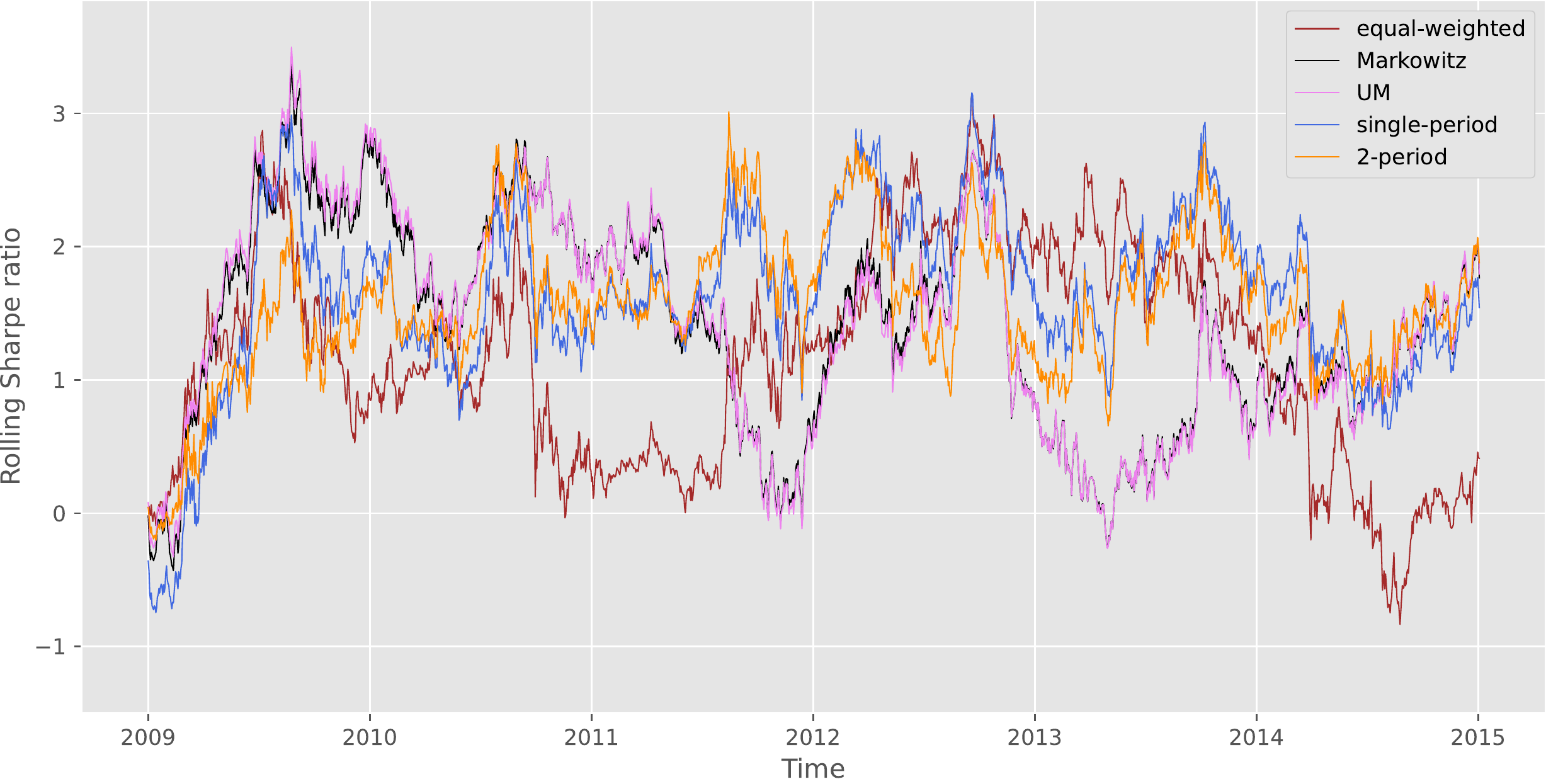}
	\caption{100 Stocks without transaction costs starting 2008.08.01}
	\label{fig2}
\end{figure}
\begin{figure}[H]
	\centering
	\includegraphics[width=14cm]{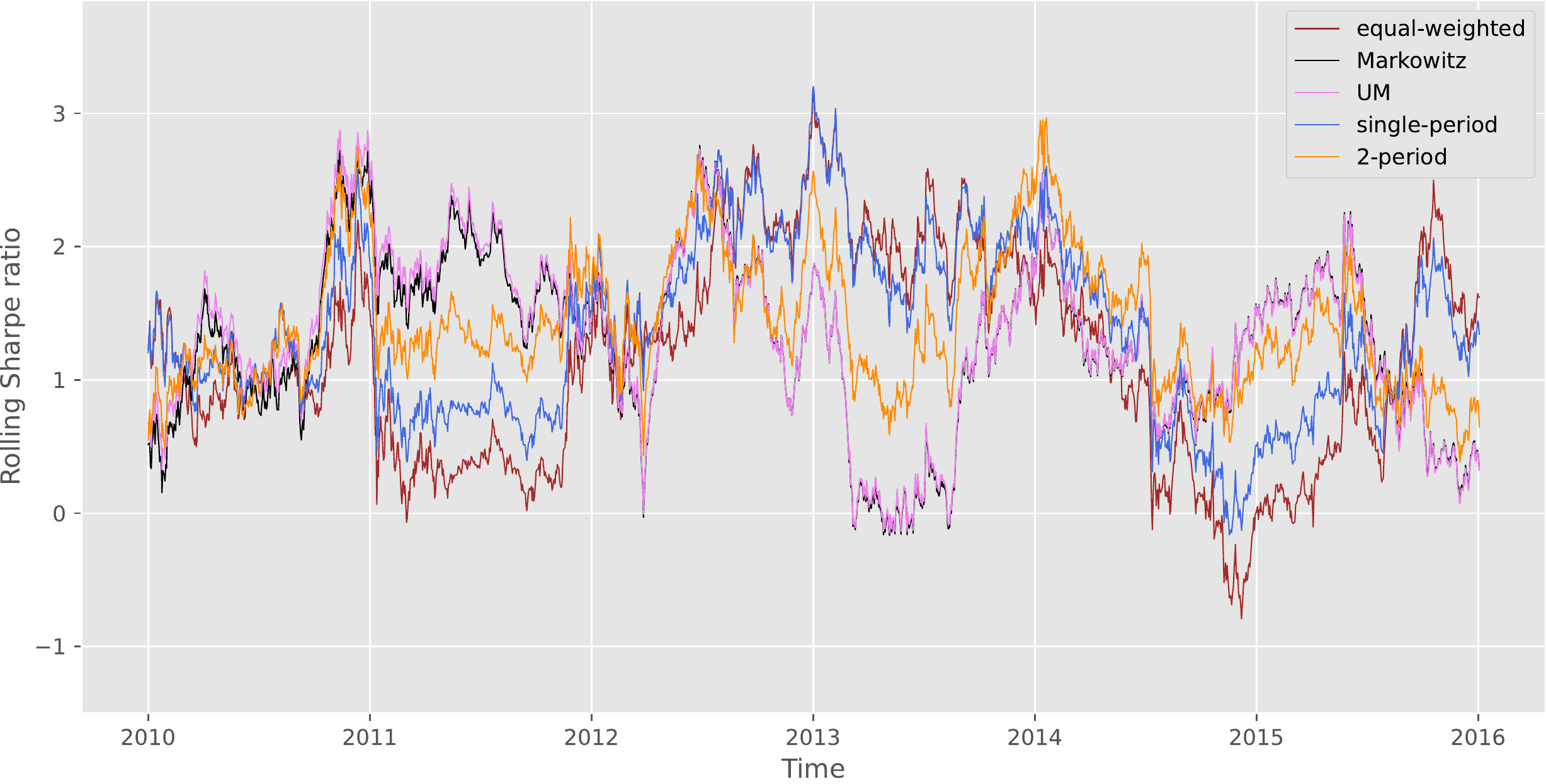}
	\caption{100 Stocks without transaction costs starting 2009.06.01}
	\label{fig3}
\end{figure}
Table \ref{table3} is the comparison between the investment strategies with the transaction costs.
\begin{table}[H]\centering
	\small
	\begin{tabular}{ |c||c|c|c|  }
		\hline
		2002.02.01 & Mean (Daily) & Std Dev (Daily) & Sharpe (Annualized)\\
		\hline
		equal-weighted & 0.000545436 & 0.014333250 & 0.604086435\\
		Markowitz & 0.000398957 & 0.010382485 & 0.609992667\\
		UM & 0.000412398 & 0.010442989 & 0.626891504\\
		single-period & 0.000531535 & 0.012841564 & 0.657074519\\
		2-period & 0.000377872 & 0.007573274 & 0.792065680\\
		\hline
		2004.06.01 & Mean (Daily) & Std Dev (Daily) & Sharpe (Annualized)\\
		\hline
		equal-weighted & 0.000504211 & 0.014754810 & 0.542474635\\
		Markowitz & 0.000463776 & 0.010955203 & 0.672030154\\
		UM & 0.000472937 & 0.011082582 & 0.677427621\\
		single-period & 0.000415804 & 0.011634557 & 0.567335551\\
		2-period & 0.000449246 & 0.007868584 & 0.906335002\\
		\hline
		2006.06.01 & Mean (Daily) & Std Dev (Daily) & Sharpe (Annualized)\\
		\hline
		equal-weighted & 0.000562507 & 0.014862114 & 0.600824808\\
		Markowitz & 0.000584489 & 0.009120416 & 1.017330936\\
		UM & 0.000598274 & 0.009186519 & 1.033832131\\
		single-period & 0.000514421 & 0.009149157 & 0.892561938\\
		2-period & 0.000512448 & 0.007584638 & 1.072544490\\
		\hline
		2008.08.01 & Mean (Daily) & Std Dev (Daily) & Sharpe (Annualized)\\
		\hline
		equal-weighted & 0.000582874 & 0.014372741 & 0.643777230\\
		Markowitz & 0.000637931 & 0.008029348 & 1.261229174\\
		UM & 0.000645801 & 0.008039414 & 1.275190017\\
		single-period & 0.000622847	& 0.007849569 & 1.259609801\\
		2-period & 0.000628367 & 0.008280399 & 1.204653856
		\\
		\hline
		2009.06.01	& Mean (Daily) & Std Dev (Daily) & Sharpe (Annualized)\\
		\hline
		equal-weighted & 0.000667442 & 0.009817072 & 1.079273577\\
		Markowitz & 0.000515463 & 0.006605026 & 1.238862087\\
		UM & 0.000529484 & 0.006590397 & 1.275386958\\
		single-period & 0.000625088 & 0.007648792 & 1.297326189\\
		2-period & 0.000546146 & 0.006535765 & 1.326516534
		\\
		\hline
	\end{tabular}
	\caption{100 Stocks with transaction costs}
	\label{table3}
\end{table}

The following plots are the rolling 1-year Sharpe ratios of each strategy with transaction costs, starting from different dates.
\begin{figure}[H]
	\centering
	\includegraphics[width=14cm]{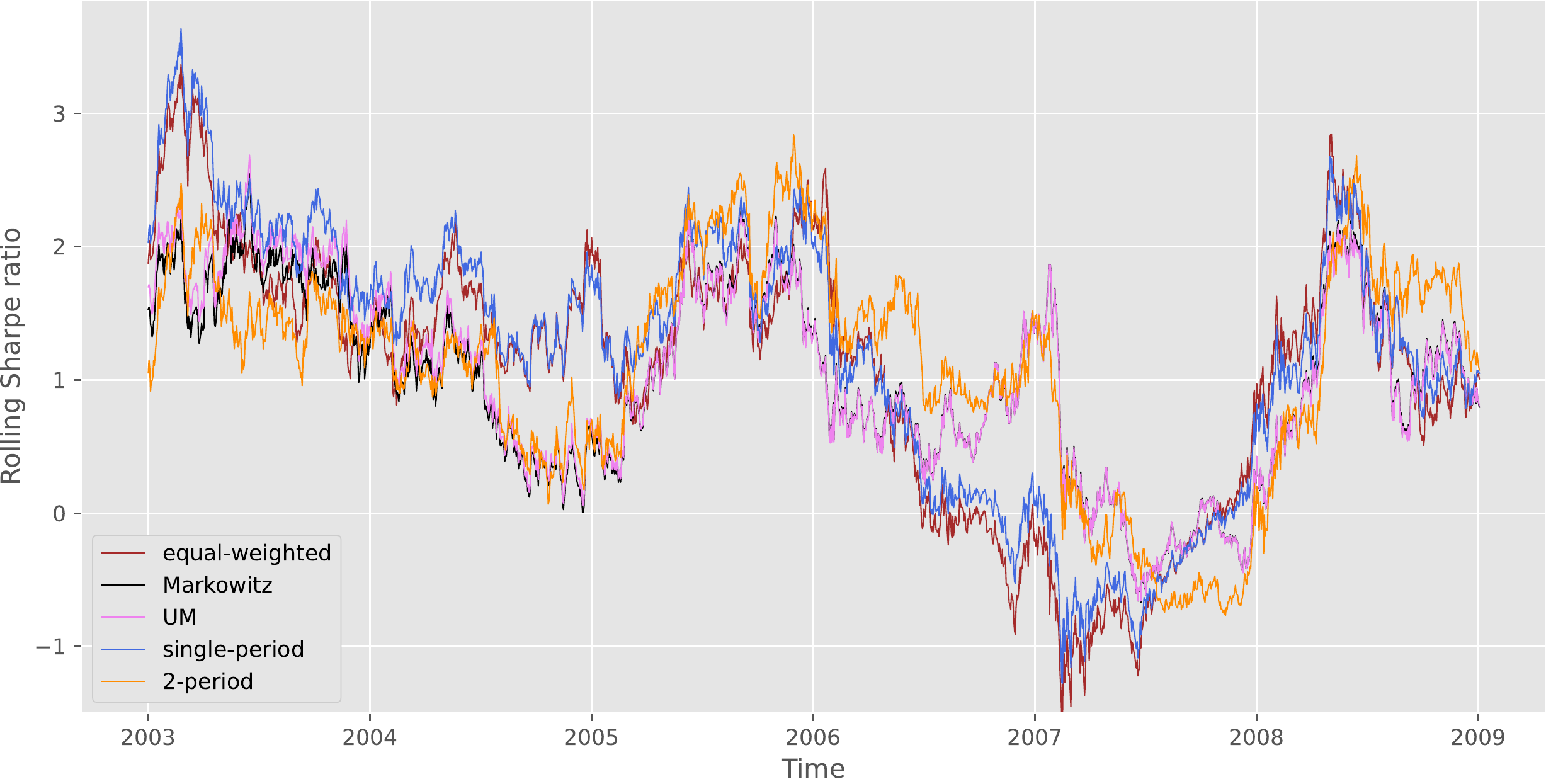}
	\caption{100 Stocks with transaction costs starting 2002.02.01}
	\label{fig4}
\end{figure}
\begin{figure}[H]
	\centering
	\includegraphics[width=14cm]{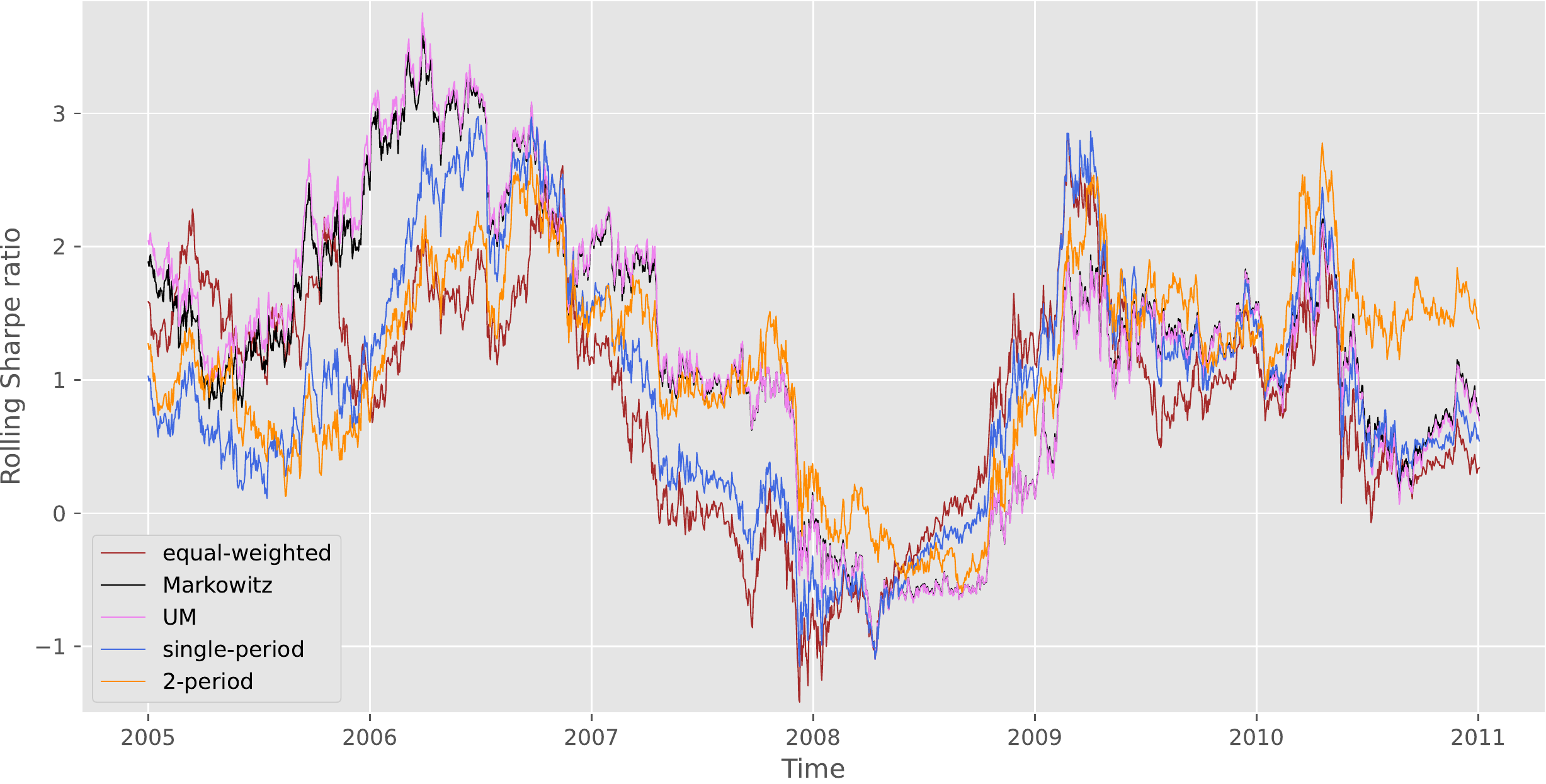}
	\caption{100 Stocks with transaction costs starting 2004.06.01}
	\label{fig1.0.3}
\end{figure}
\begin{figure}[H]
	\centering
	\includegraphics[width=14cm]{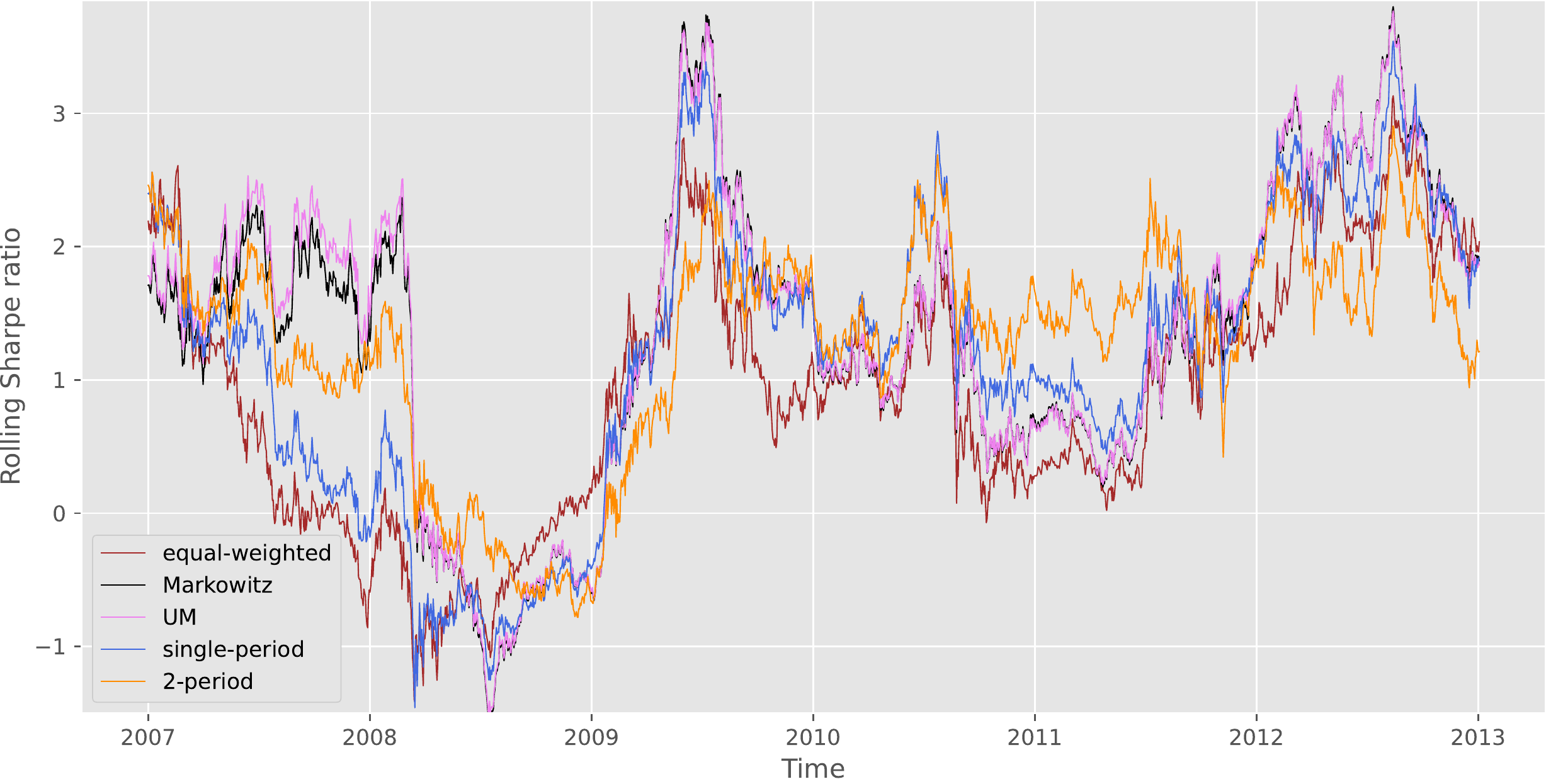}
	\caption{100 Stocks with transaction costs starting 2006.06.01}
	\label{fig1.0.4}
\end{figure}

\begin{figure}[H]
	\centering
	\includegraphics[width=14cm]{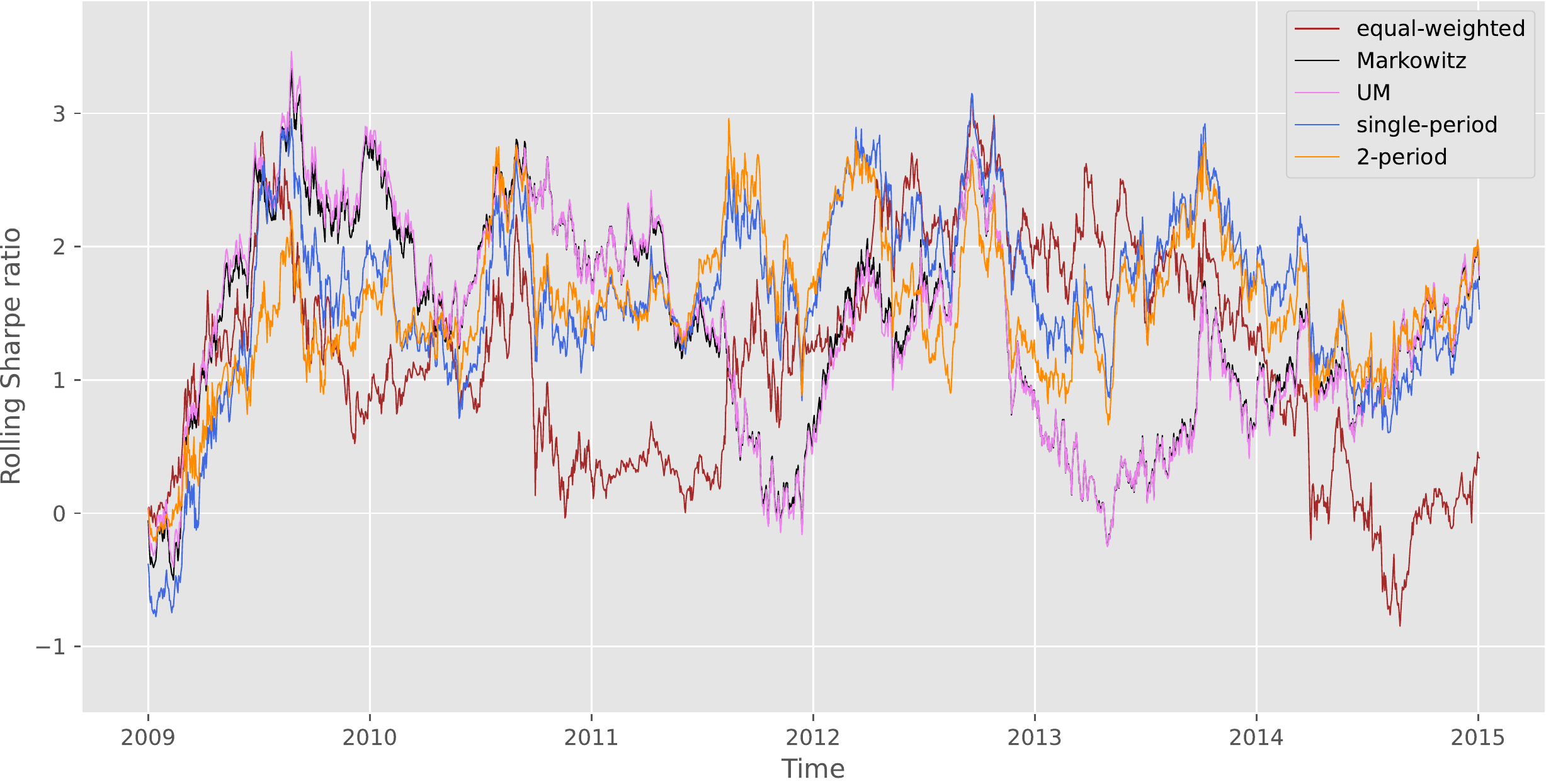}
	\caption{100 Stocks with transaction costs starting 2008.08.01}
	\label{fig5}
\end{figure}
\begin{figure}[H]
	\centering
	\includegraphics[width=14cm]{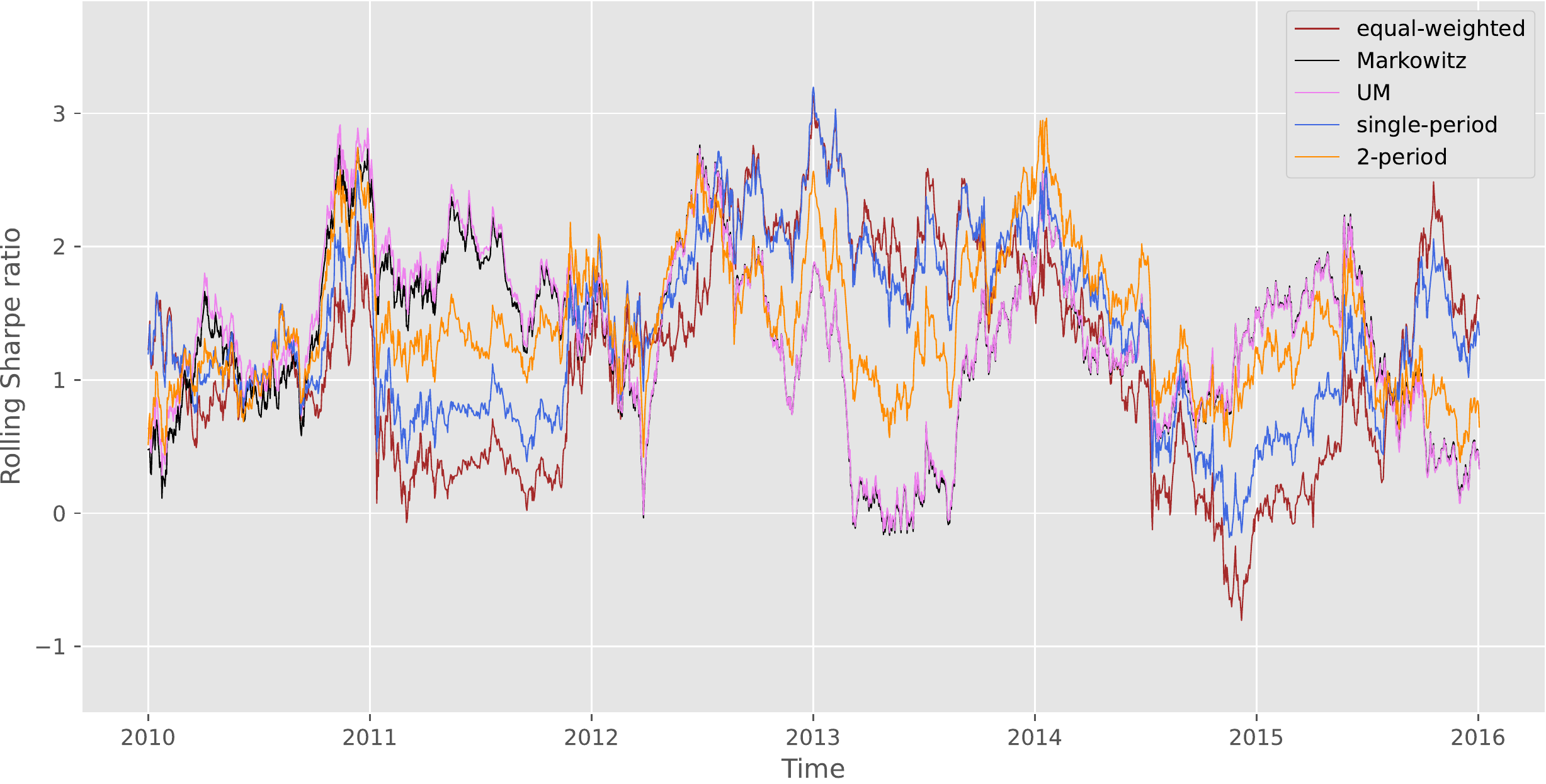}
	\caption{100 Stocks with transaction costs starting 2009.06.01}
	\label{fig6}
\end{figure}
\subsection{Comparison of 2-period model and robust minimum variance methods} 
In this subsection, we use the daily adjusted closing price of the 100 largest S\&P 500 companies by market cap and compare our 2-period model with different robust minimum variance single-period methods ($T=1$), which are Blanchet's single-period model (\cite{blanchet2021distributionally}), robust norm-constrained (NC) model (\cite{xing2014robust}), and sparse and robust mean-variance (SP) model (\cite{dai2019sparse}). The rebalancing of each strategy follows the same algorithm presented in Section \ref{section5}.

Let us briefly summarize NC and SP models. We let $\boldsymbol{w}=(w_1,...,w_n)$ be the vector of portfolio weights, $\widebar{K_{\boldsymbol{RR}}}$ be
the sample covariance matrix which is the estimation of the asset return covariance matrix $K_{\boldsymbol{RR}}$. \cite{xing2014robust} used the NC model as a robust method to control the sparsity or shrink of the estimated weights of assets. More precisely, they studied 
\begin{equation}
	\begin{aligned}
		&\min_{\boldsymbol{w}}\quad \boldsymbol{w}^\intercal {\widebar{K_{\boldsymbol{RR}}}} \boldsymbol{w}\\
		&\;s.t\quad \bf{1}^\intercal \boldsymbol{w}=1\text{ and } \norm{ \boldsymbol{w}}_1+\alpha\norm{\boldsymbol{w}}_\infty\leq c
	\end{aligned}
\end{equation}
where  $\alpha\geq 0 $ is the penalty parameter and $c$ is a constraint. They proved that the problem is equivalent to
\begin{equation}\label{Xing2}
\begin{aligned}
	&\min_{\boldsymbol{w}}\quad \boldsymbol{w}^\intercal \widehat{K_{\boldsymbol{RR}}}\boldsymbol{w}\\
	&\;s.t\quad {\bf 1}^\intercal \boldsymbol{w}=1\text{ and } \widehat{K_{\boldsymbol{RR}}} = \widebar{K_{\boldsymbol{RR}}} - \lambda(\boldsymbol{v1}^\intercal+\boldsymbol{1v}^\intercal)+\frac{1}{2}\lambda\alpha(\boldsymbol{o1}^\intercal+\boldsymbol{1o}^\intercal),
\end{aligned}
\end{equation}
where $\lambda$ is the penalty parameters determined by $c$, $\boldsymbol{v}$ is an $n\times 1$ vector whose $i$th component is 1 if $w_i$ is negative and is 0 otherwise, and $\boldsymbol{v}$ is an $n\times 1$ vector whose $i$th component is ${\rm sign}(w_i)$ if $w_i$ has the largest absolute value, and is $0$ otherwise. The vector $\boldsymbol{o}$ is an $n\times 1$ vector with the $i$-$th$ component being $o_i=({\rm rank}(|w_i|)-1){\rm sign}(w_i)$, and ${\rm rank}(|w_i|)$ is the rank of $|w_i|$ in $\crl{|w_1|,...,|w_n|}$. According to their simulation setup, we select risk aversion $\lambda=2$ and tuning parameter $\alpha=4$.

SP model aimed to reduce the undesired impact of parameter uncertainty and estimation errors of the mean-variance portfolio model. They proposed a new sparse and robust portfolio optimization model 
\begin{equation}
	\begin{aligned}
		&\min_{\boldsymbol{w}}\quad \boldsymbol{w}^\intercal \overline{K_{\boldsymbol{RR}}}\boldsymbol{w} + \frac{\tau}{T}\|\boldsymbol{w}\|_1\\
		&\;s.t\quad \min_{E_P[\boldsymbol{R}]\in \scM}\boldsymbol{w}^\intercal E_P[\boldsymbol{R}] = \rho\text{ and } \boldsymbol{1}^\intercal \boldsymbol{w}=1,
	\end{aligned}
\end{equation}
where $\scM$ denotes the uncertainty set for $E_P[\boldsymbol{R}]$ and $\tau$ is a parameter that allows investors to adjust the relative importance of the $l_1$ penalization in the optimization. For simplicity, let us consider the box uncertainty set $\scM_B=\{\mu:\mu=E_Q[\boldsymbol{R}]+\xi,|\xi_i|\leq\epsilon_i,i=1,...n\}$. Then, it is easy to verify the formulation is equivalent to
\begin{equation}
	\begin{aligned}
		&\min_{\boldsymbol{w}}\quad \|\rho \boldsymbol{1}-R_{d\times n}\boldsymbol{w}\|^2_2 + \tau\|\boldsymbol{w}\|_1\\
		&\;s.t\quad \boldsymbol{w}^\intercal E_Q[\boldsymbol{R}]-\epsilon^\intercal |\boldsymbol{w}| = \rho \text{ and } \boldsymbol{1}^\intercal \boldsymbol{w}=1,
	\end{aligned}
\end{equation}
where $d$ is sample size, $R_{d\times n}$ is $d\times n$ matrix of which row $t$ equals $\boldsymbol{R}_t$, and $|\boldsymbol{w}|=(|w_1|,...,|w_n|)^\intercal$.
They choose $\tau=120$ for monthly data corresponds to 10 years. 
In our simulation, we set $\tau=500$ since we are using the daily data for 2 years to find $\boldsymbol{w}$.

\subsubsection*{Simulation Results}
Table \ref{table4} is the comparison between the investment strategies without the transaction costs.
\begin{table}[H]\centering
\small
\begin{tabular}{ |c||c|c|c|  }
 \hline
2002.02.01 & Mean (Daily) & Std Dev (Daily) & Sharpe (Annualized)\\
 \hline
NC & 0.000560395 & 0.014345112 & 0.620141255\\
SP & 0.000548045 & 0.013289725 & 0.654637011 \\
single-period & 0.000545715 & 0.012848793 & 0.674223447
\\
2-period & 0.000385135 & 0.007563152 & 0.808371140\\
\hline
2004.06.01 & Mean (Daily) & Std Dev (Daily) & Sharpe (Annualized)\\
\hline
NC & 0.000508735 & 0.014763839 & 0.547007214\\
SP & 0.000509687 & 0.013512330 & 0.598789645 \\
single-period & 0.000423354 & 0.011642227 & 0.577255661\\
2-period & 0.000447016 & 0.007866766 & 0.902044465\\
\hline
2006.06.01 & Mean (Daily) & Std Dev (Daily) & Sharpe (Annualized)\\
\hline
NC & 0.000566047 & 0.014870618 & 0.604260345\\
SP & 0.000564831 & 0.012963183 & 0.691684099 \\
single-period & 0.000521979 & 0.009156240 & 0.904975460\\
2-period & 0.000516936 & 0.007580369 & 1.082548396\\
 \hline
 2008.08.01 & Mean (Daily) & Std Dev (Daily) & Sharpe (Annualized)\\
 \hline
NC & 0.000582780 & 0.014381476 & 0.643281589\\
SP & 0.000606592 & 0.012224146 & 0.787731381\\
single-period & 0.000624190 & 0.007845931 & 1.262911166\\
2-period & 0.000629368 & 0.008288125 & 1.205447689\\
\hline
2009.06.01	& Mean (Daily) & Std Dev (Daily) & Sharpe (Annualized)\\
\hline
NC & 0.000683486 & 0.009834511 & 1.103257959\\
SP & 0.000656613 & 0.008177180 & 1.274694655\\
single-period & 0.000637698 & 0.007654071 & 1.322583602\\
2-period & 0.000555807 & 0.006527001 & 1.351793232\\
\hline
\end{tabular}
\caption{100 Stocks without transaction costs}
\label{table4}
\end{table}
The following plots are the rolling 1-year Sharpe ratios of each strategy without transaction costs, starting from different dates.
\begin{figure}[H]
\centering
\includegraphics[width=14cm]{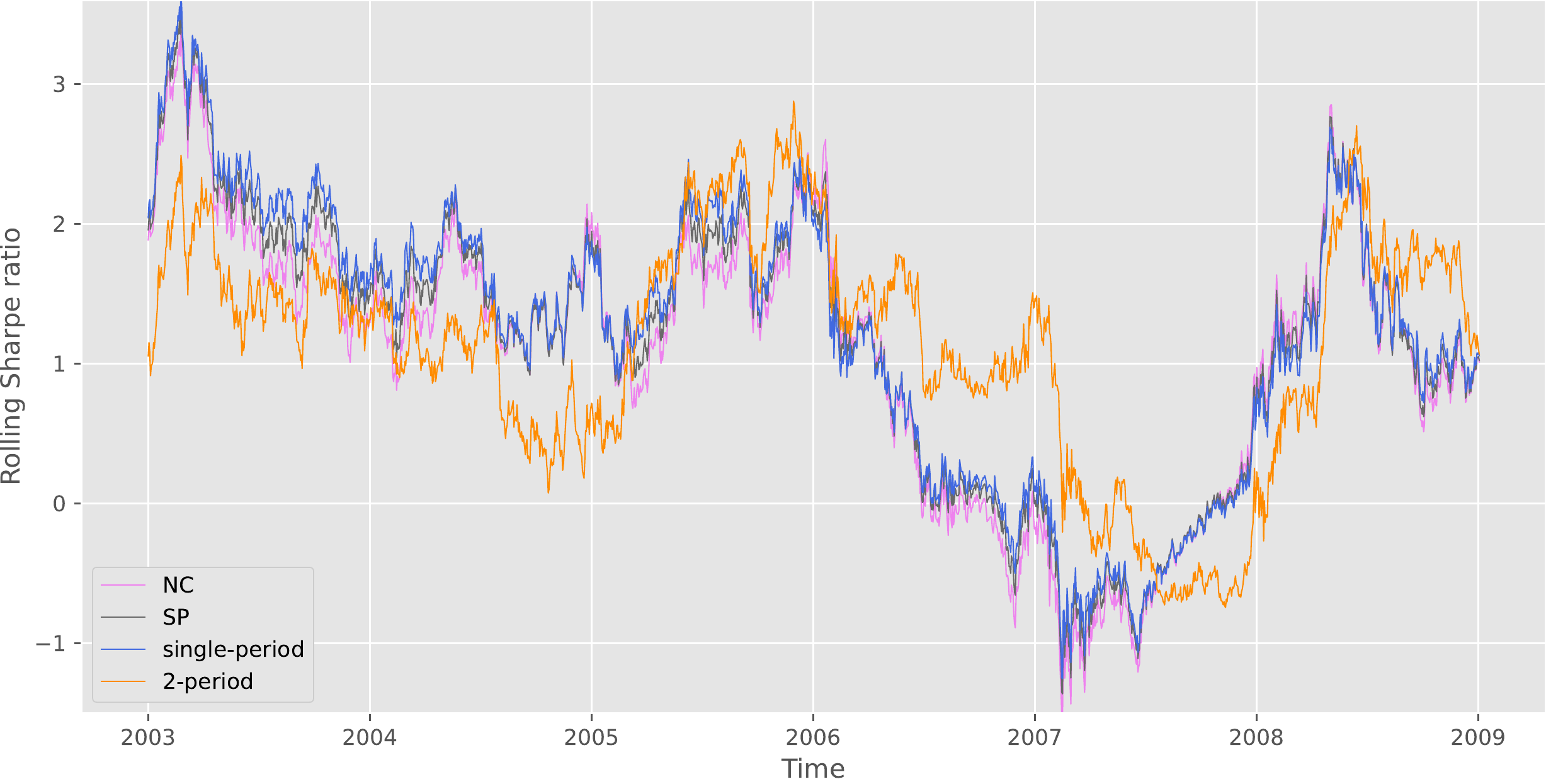}
\caption{100 Stocks without transaction costs, starting 2002.02.01}
\label{fig7}
\end{figure}
\begin{figure}[H]
\centering
\includegraphics[width=14cm]{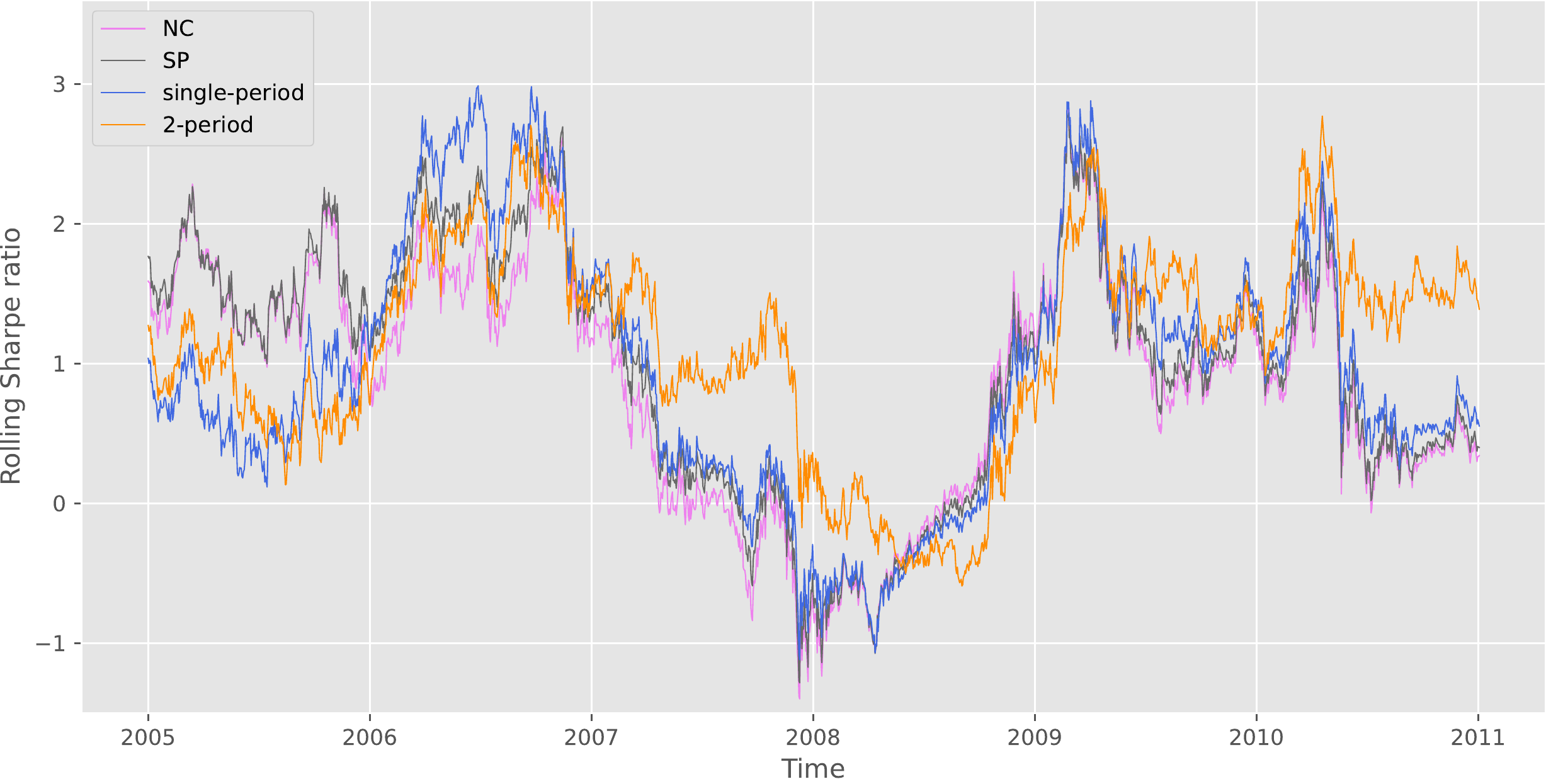}
\caption{100 Stocks without transaction costs, starting 2004.06.01}
\label{fig1.0.5}
\end{figure}
\begin{figure}[H]
\centering
\includegraphics[width=14cm]{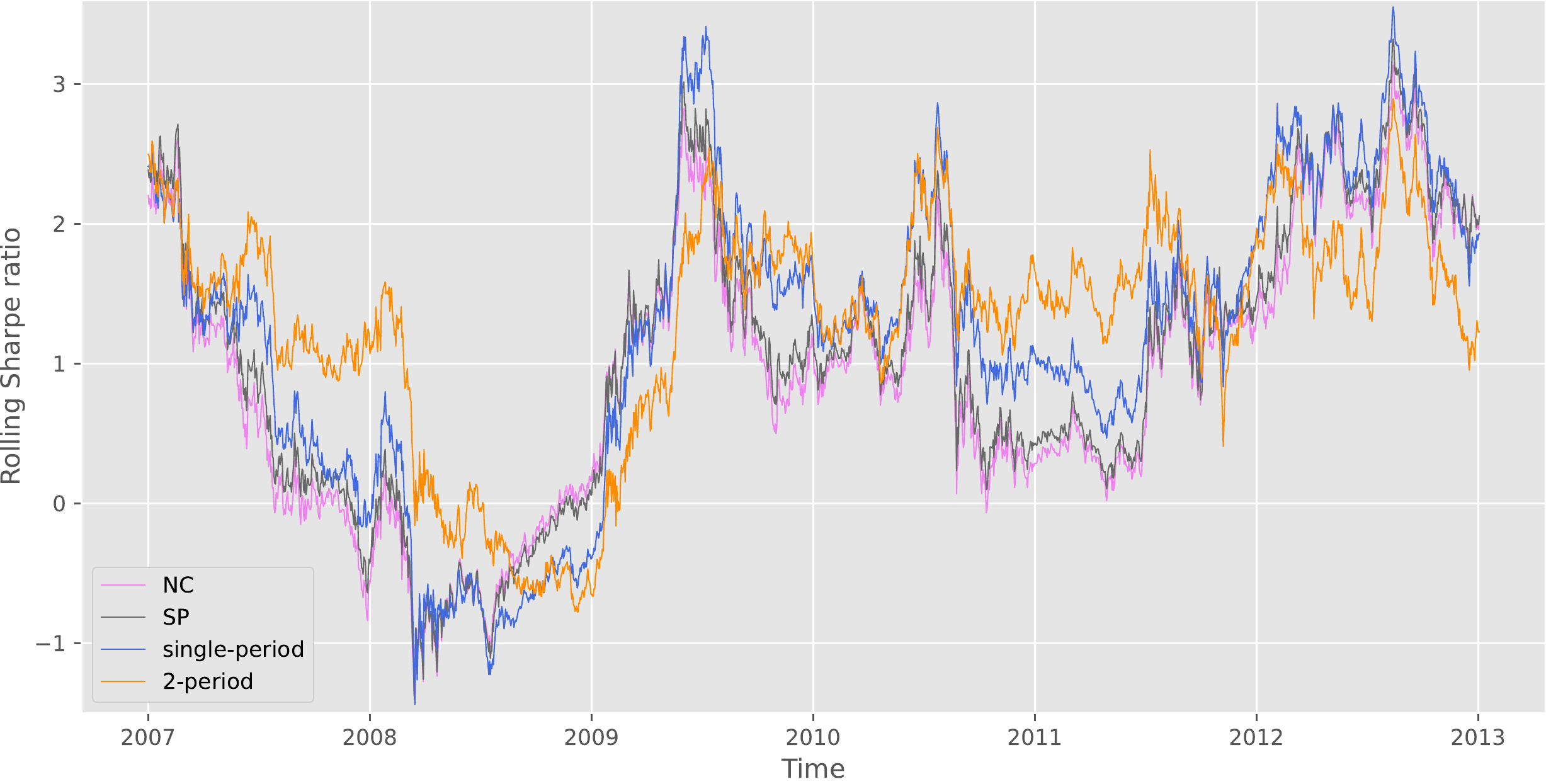}
\caption{100 Stocks without transaction costs, starting 2006.06.01}
\label{fig1.0.6}
\end{figure}

\begin{figure}[H]
\centering
\includegraphics[width=14cm]{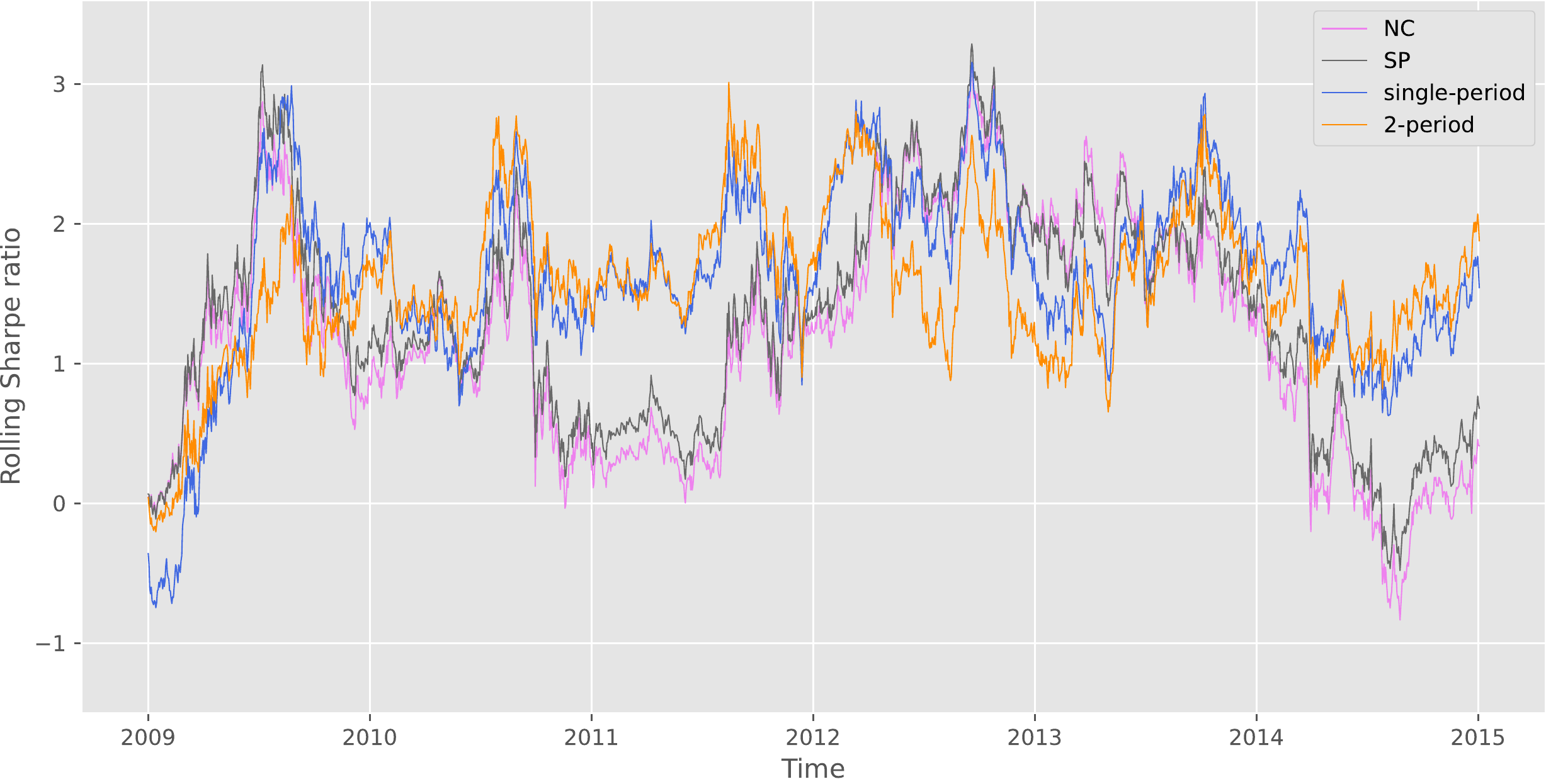}
\caption{100 Stocks without transaction costs, starting 2008.08.01}
\label{fig8}
\end{figure}
\begin{figure}[H]
\centering
\includegraphics[width=14cm]{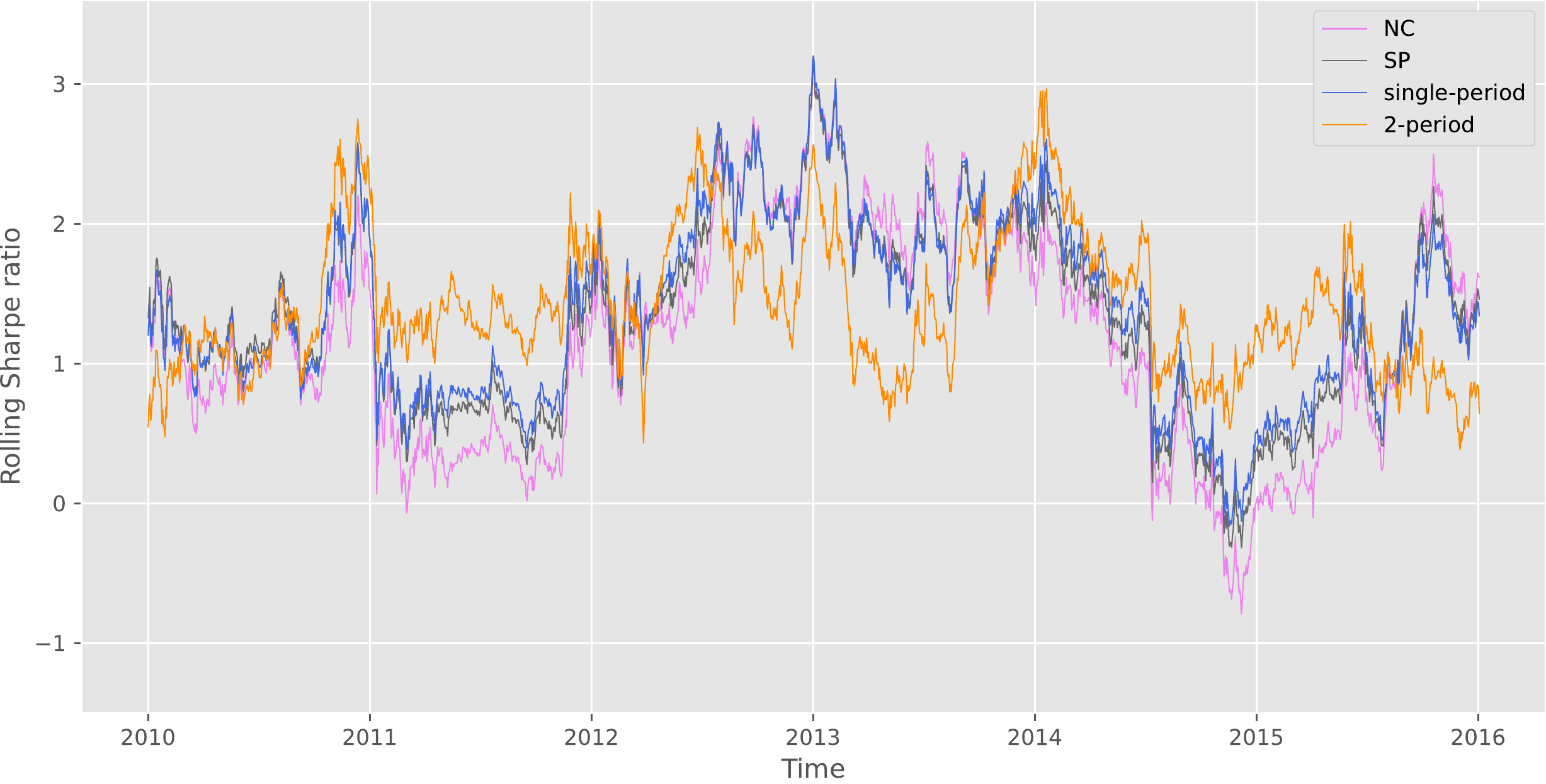}
\caption{100 Stocks without transaction costs, starting 2009.06.01}
\label{fig9}
\end{figure}

Table \ref{table5} is the comparison between the investment strategies with the transaction costs.
\begin{table}[H]\centering
\small
\begin{tabular}{ |c||c|c|c|  }
 \hline
2002.02.01 & Mean (Daily) & Std Dev (Daily) & Sharpe (Annualized)\\
 \hline
NC & 0.000545436 & 0.014333250 & 0.604086436\\
SP & 0.000533860 & 0.013281575 & 0.638084474\\
single-period & 0.000531535 & 0.012841564 & 0.657074519\\
2-period & 0.000377872 & 0.007573274 & 0.792065680\\
\hline
2004.06.01 & Mean (Daily) & Std Dev (Daily) & Sharpe (Annualized)\\
\hline
NC & 0.000504211 & 0.014754810 & 0.542474635\\
SP & 0.000504155 & 0.013504141 & 0.592648936 \\
single-period & 0.000415804 & 0.011634557 & 0.567335551\\
2-period & 0.000449246 & 0.007868584 & 0.906335002\\
\hline
2006.06.01 & Mean (Daily) & Std Dev (Daily) & Sharpe (Annualized)\\
\hline
NC & 0.000562506 & 0.014862119 & 0.600824228\\
SP & 0.000559966 & 0.012954736 & 0.686173440 \\
single-period & 0.000514421 & 0.009149157 & 0.892561938\\
2-period & 0.000512448 & 0.007584638 & 1.072544490\\
 \hline
 2008.08.01 & Mean (Daily) & Std Dev (Daily) & Sharpe (Annualized)\\
 \hline
NC & 0.000582874 & 0.014372741 & 0.643777231\\
SP & 0.000605829 & 0.012215743 & 0.787281677\\
single-period & 0.000622847	& 0.007849569 & 1.259609801\\
2-period & 0.000628367 & 0.008280399 & 1.204653856\\
\hline
2009.06.01	& Mean (Daily) & Std Dev (Daily) & Sharpe (Annualized)\\
\hline
NC & 0.000667442 & 0.009817072 & 1.079273577\\
SP & 0.000642856 & 0.008167310 & 1.249496006\\
single-period & 0.000625088 & 0.007648792 & 1.297326189\\
2-period & 0.000546146 & 0.006535765 & 1.326516534\\
\hline
\end{tabular}
\caption{100 Stocks with transaction costs}
\label{table5}
\end{table}
The following plots are the rolling 1-year Sharpe ratios of each strategy with transaction costs, starting from different dates.
\begin{figure}[H]
\centering
\includegraphics[width=14cm]{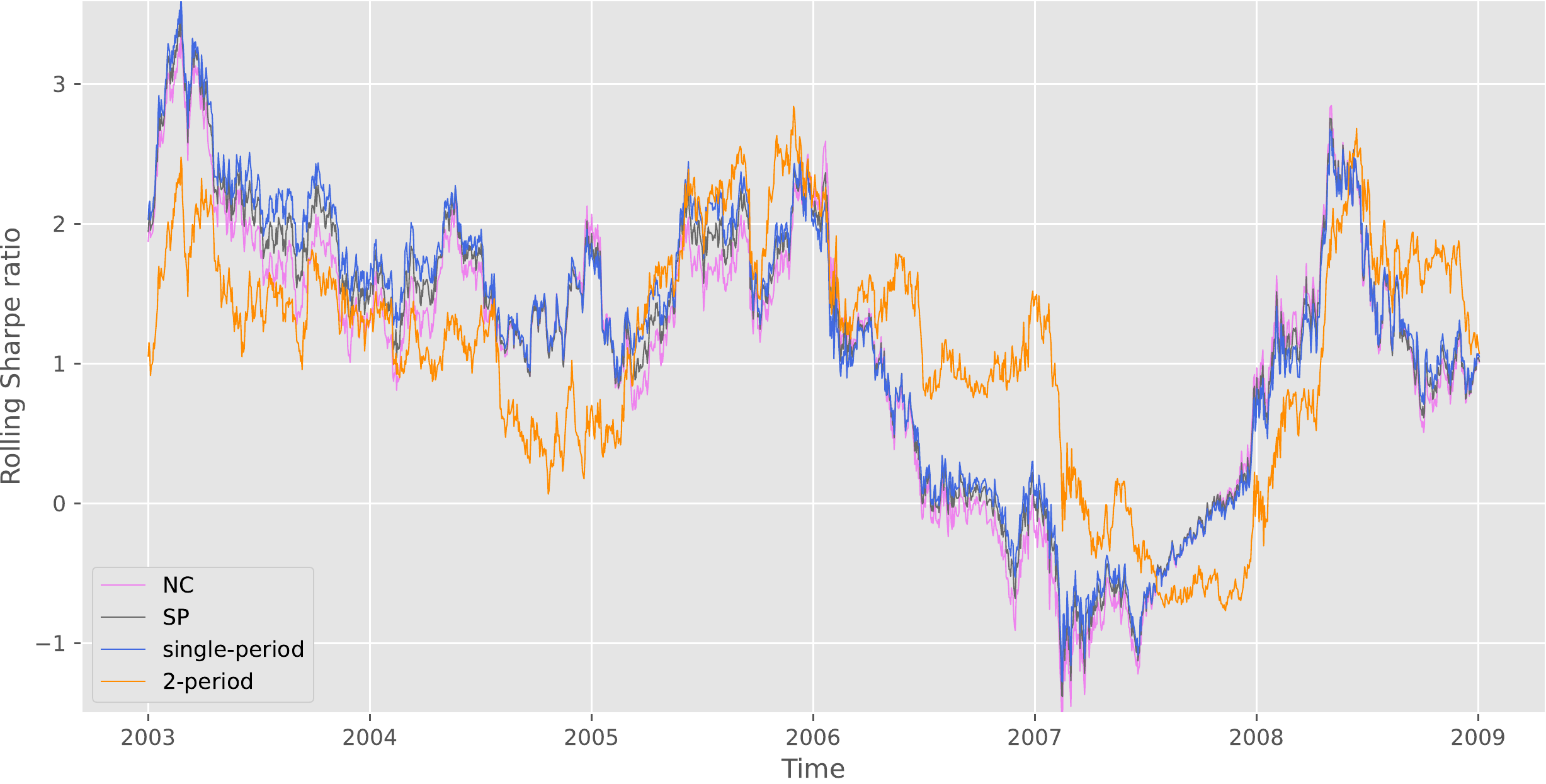}
\caption{100 Stocks with transaction costs, starting 2002.02.01}
\label{fig10}
\end{figure}
\begin{figure}[H]
\centering
\includegraphics[width=14cm]{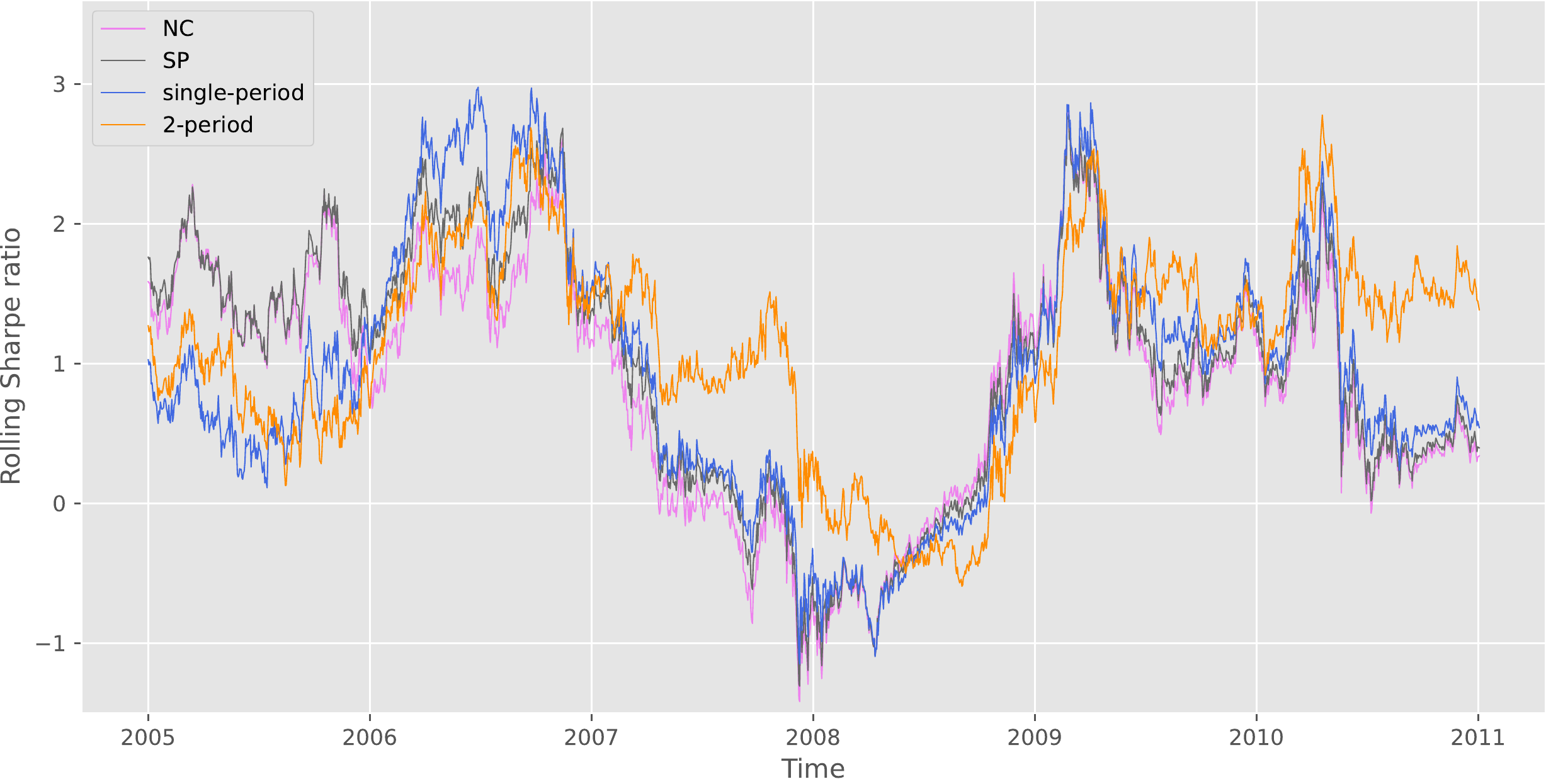}
\caption{100 Stocks with transaction costs, starting 2004.06.01}
\label{fig1.0.7}
\end{figure}
\begin{figure}[H]
\centering
\includegraphics[width=14cm]{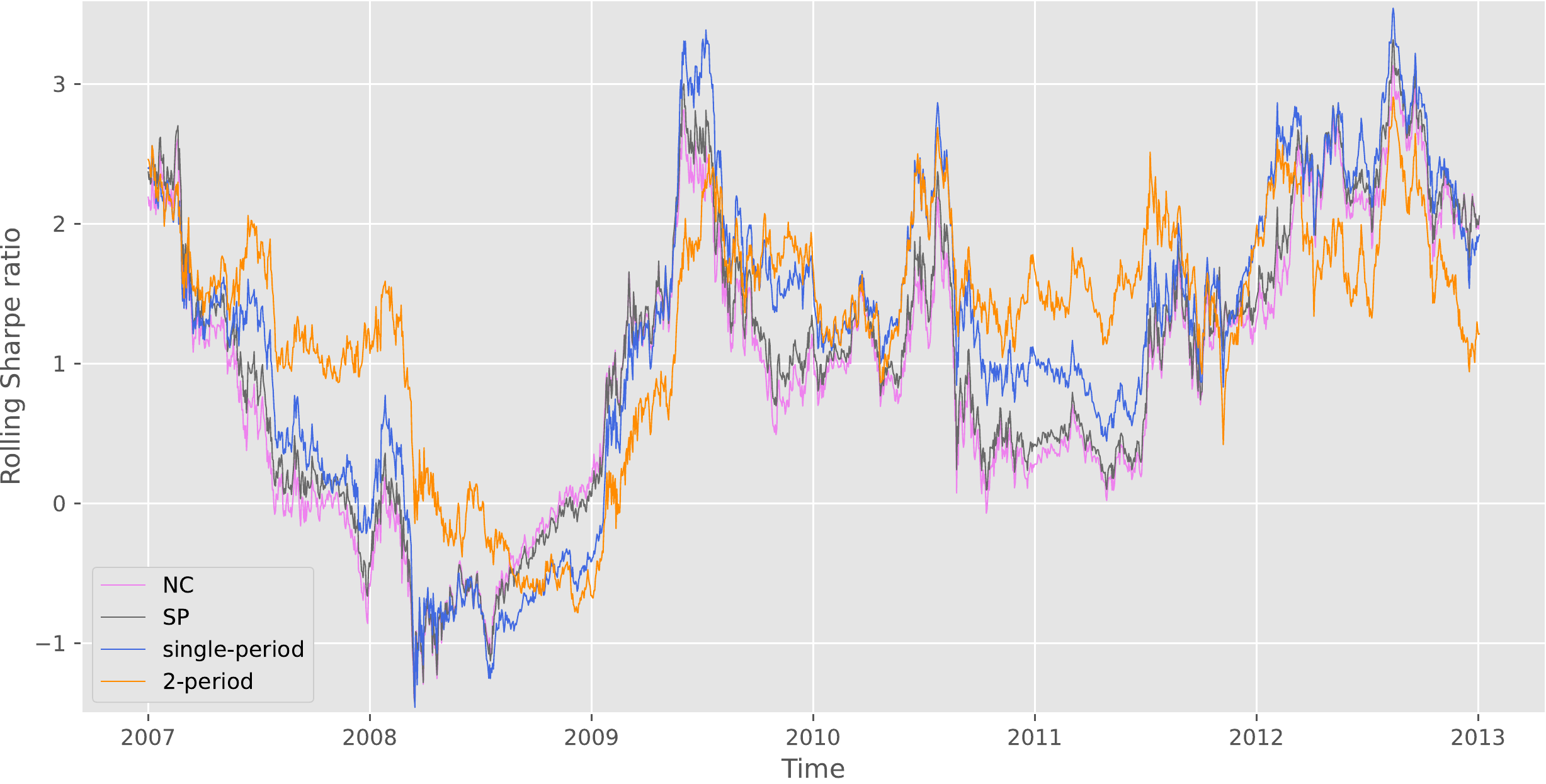}
\caption{100 Stocks with transaction costs, starting 2006.06.01}
\label{fig1.0,8}
\end{figure}

\begin{figure}[H]
\centering
\includegraphics[width=14cm]{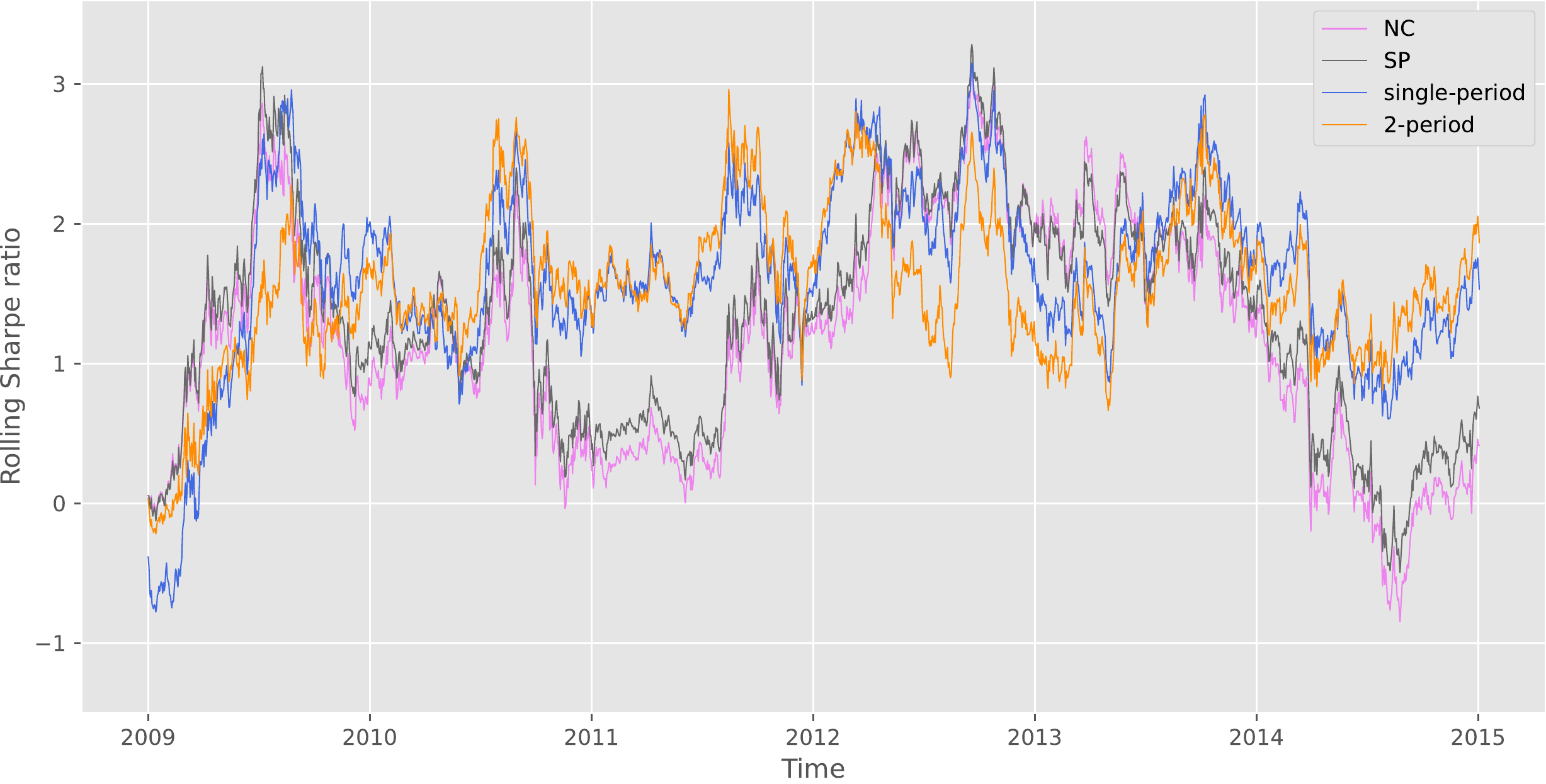}
\caption{100 Stocks with transaction costs, starting 2008.08.01}
\label{fig11}
\end{figure}
\begin{figure}[H]
\centering
\includegraphics[width=14cm]{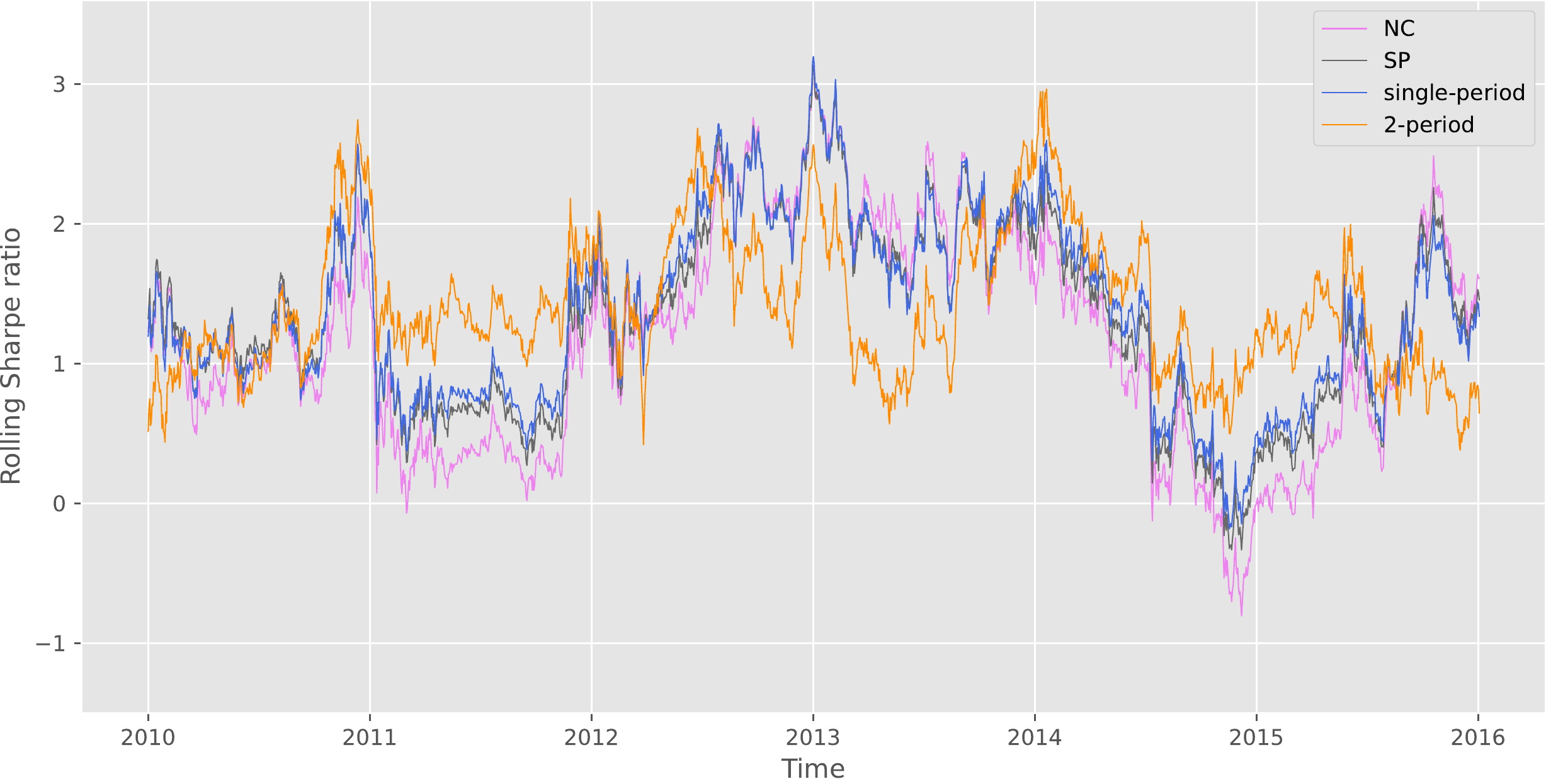}
\caption{100 Stocks with transaction costs, starting 2009.06.01}
\label{fig12}
\end{figure}

\subsection{Conclusion}
In our simulations, the 2-period model performed competitively compared to equal-weighted, classical Markowitz, UM, NC, SP, and single-period model. The simulation results are similar whether there are transaction costs or not.

For the four simulations starting from 2002.02.01, 2004.06.01, 2006.06.01, and 2009.06.01, the 2-period model showed the highest Sharpe ratio and the lowest standard deviation. The only exception is the simulation starting from 2008.08.01 that the 2-period model showed the moderately high Sharpe ratio among all strategies and consistently outperformed the equal-weighted, NC, and SP strategies. On the other hand, qualitatively speaking, the rolling 1-year Sharpe ratios of the two period model generally have smaller magnitude from a peak to a trough compared to the other strategies.

While there is a limitation that we simulate these strategies only for three time frames, our empirical results provide a promising performance of 2-period model we suggested.
\section{The Effect of the Number of Periods}
\label{section7}
We continue to think about multi-period without transaction costs and extend 2-period to the 3-period robust model under the first order Taylor expansion. Before doing the numerical simulations, we select the top 50 market cap of S\&P 500 index companies at five different time points 2002.02.01, 2004.06.01, 2006.06.01, 2008.08.01, and 2009.06.01 from the data provided by Bloomberg Terminal.\footnote{We limit our number of stocks due to the computational cost for 3-period model.} We use these 50 companies' stock daily prices to calculate the daily return. Then we input these data into our 3-period robust model and get the optimal results. Then we compare our 2- and 3-period with the single-period method and study the effect of the number of periods.

We implement the rebalancing method and the transaction costs illustrated in Section \ref{section5}.



\subsubsection*{Simulation Results}
Table \ref{table6} is the comparison between the investment strategies with different number of periods when there is no transaction costs.
\begin{table}[H]\centering
\small
\begin{tabular}{ |c||c|c|c|  }
 \hline
2002.02.01 & Mean (Daily) & Std Dev (Daily) & Sharpe (Annualized)\\
 \hline
single-period & 0.000515459 & 0.013311754 & 0.614694581\\
2-period & 0.000490590 & 0.009018402 & 0.863553332\\
3-period & 0.000556922 & 0.009458527 & 0.934697764\\
 \hline
 2004.06.01 & Mean (Daily) & Std Dev (Daily) & Sharpe (Annualized)\\
 \hline
single-period & 0.000467369 & 0.014623968 & 0.507335540\\
2-period & 0.000552433 & 0.009214212 & 0.951747573\\
3-period & 0.000537145 & 0.009232455 & 0.923580659\\
 \hline
 2006.06.01 & Mean (Daily) & Std Dev (Daily) & Sharpe (Annualized)\\
 \hline
single-period & 0.000583143 & 0.011638202 & 0.795407933\\
2-period & 0.000555751 & 0.008779726 & 1.004847740\\
3-period & 0.000587899 & 0.009221986 & 1.011996474\\
 \hline
 2008.08.01 & Mean (Daily) & Std Dev (Daily) & Sharpe (Annualized)\\
 \hline
single-period & 0.000666363 & 0.009507792 & 1.112580792\\
2-period & 0.000647658 & 0.009150105 & 1.123620969\\
3-period & 0.000699562 & 0.009448067 & 1.175394394\\
\hline
 2009.06.01 & Mean (Daily) & Std Dev (Daily) & Sharpe (Annualized)\\
 \hline
single-period & 0.000616230 & 0.007546618 & 1.296257439\\
2-period & 0.000599220 & 0.007416430 & 1.282600767\\
3-period & 0.000578595 & 0.007437629 & 1.234923724\\
\hline
\end{tabular}
\caption{50 Stocks without transaction costs}
\label{table6}
\end{table}

The following plots are the rolling 1-year Sharpe ratios of each strategy without transaction costs, starting from different dates.

\begin{figure}[H]
\centering
\includegraphics[width=14cm]{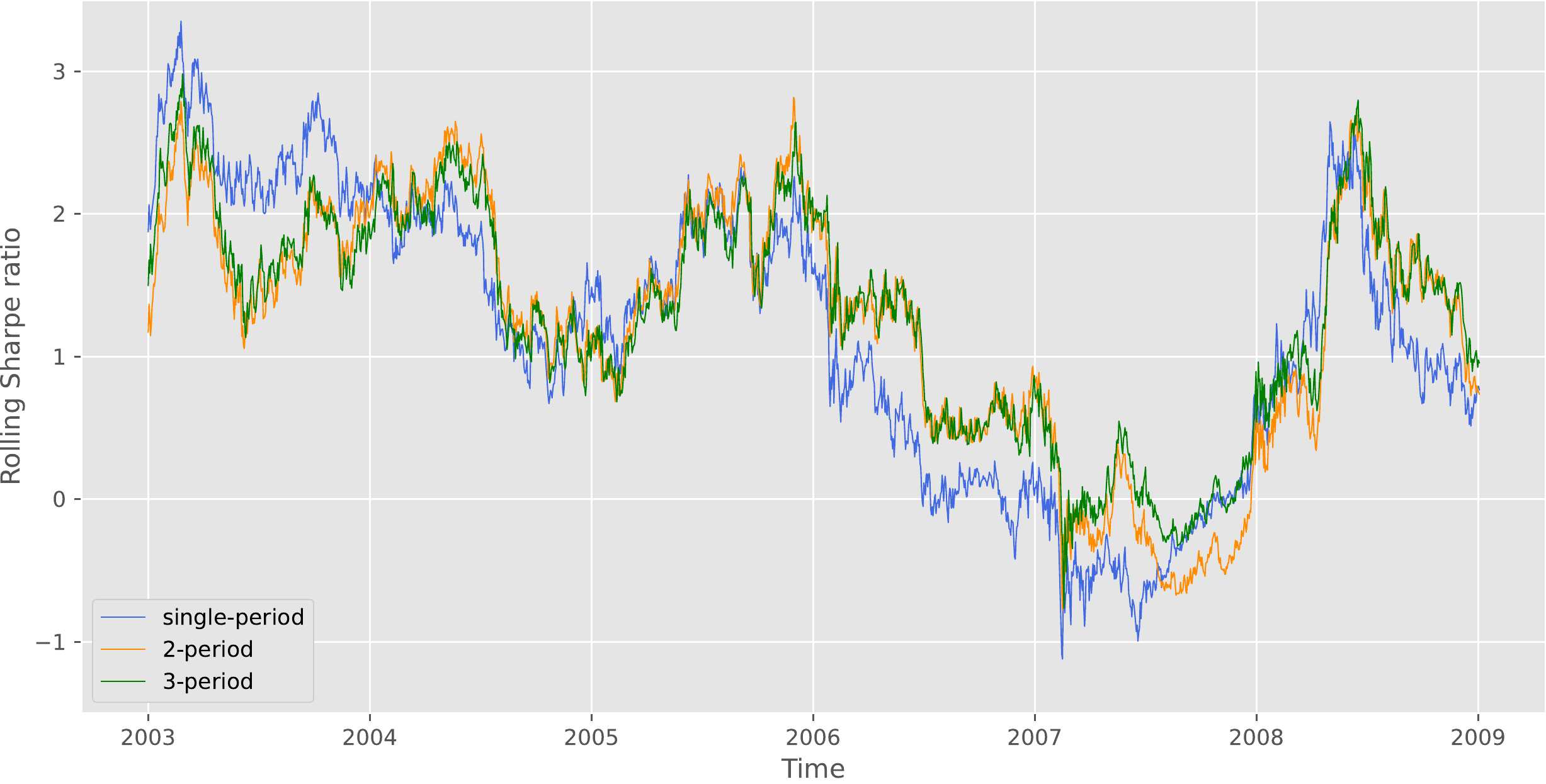}
\caption{50 Stocks without transaction costs, starting 2002.02.01}
\label{fig13}
\end{figure}
\begin{figure}[H]
\centering
\includegraphics[width=14cm]{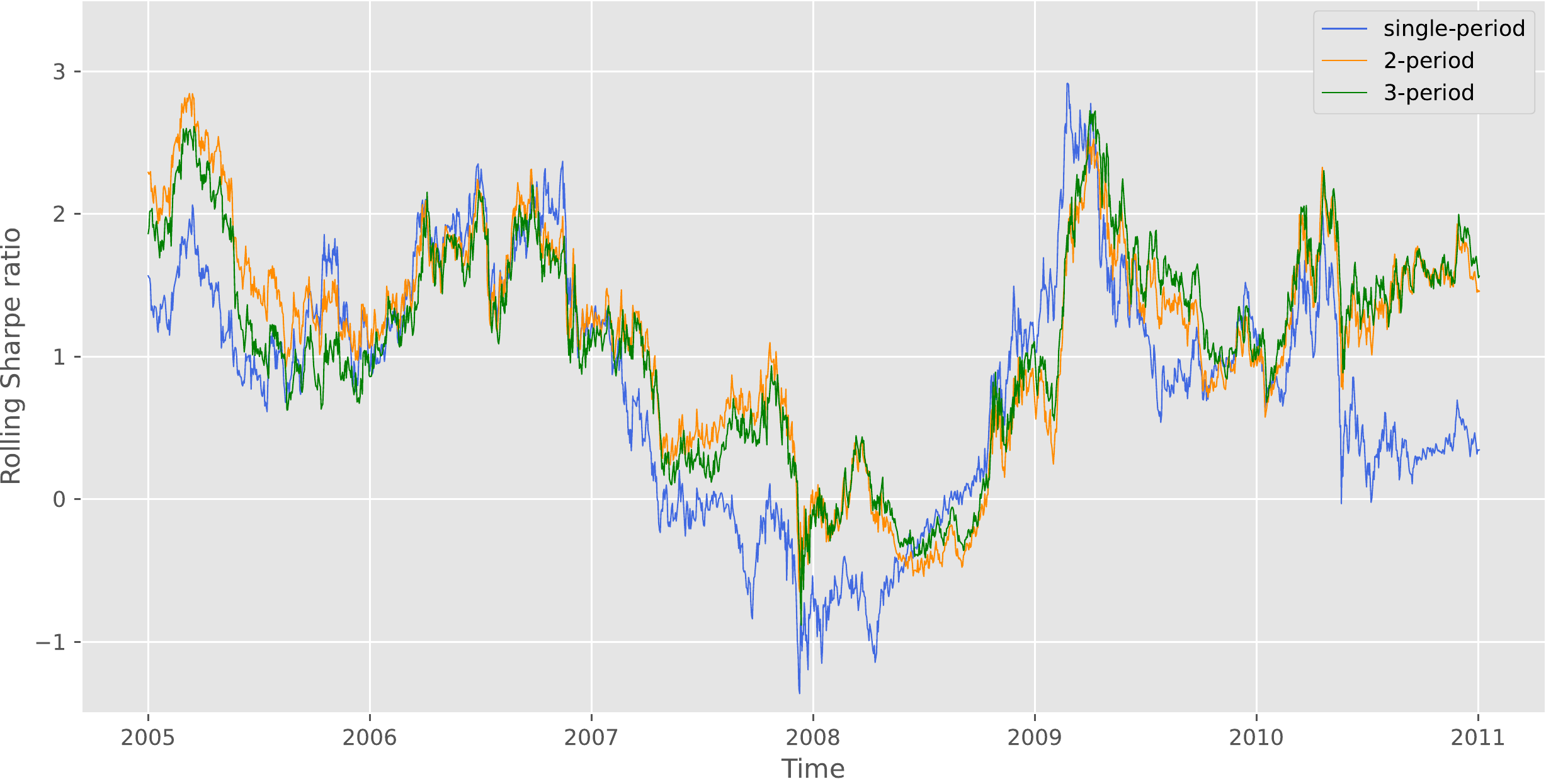}
\caption{50 Stocks without transaction costs, starting 2004.06.01}
\label{fig1.0.9}
\end{figure}
\begin{figure}[H]
\centering
\includegraphics[width=14cm]{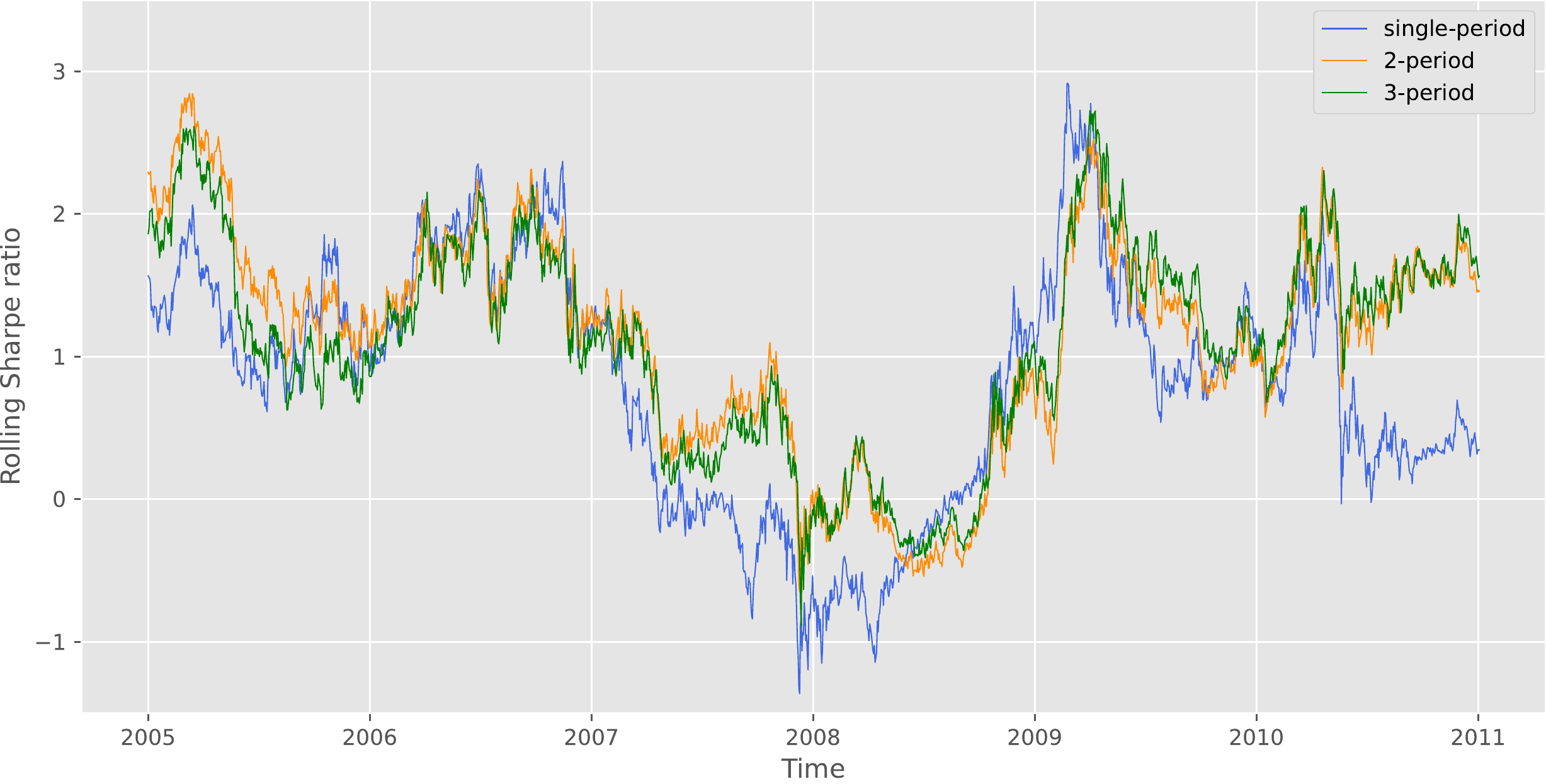}
\caption{50 Stocks without transaction costs, starting 2006.06.01}
\label{fig1.0.10}
\end{figure}

\begin{figure}[H]
\centering
\includegraphics[width=14cm]{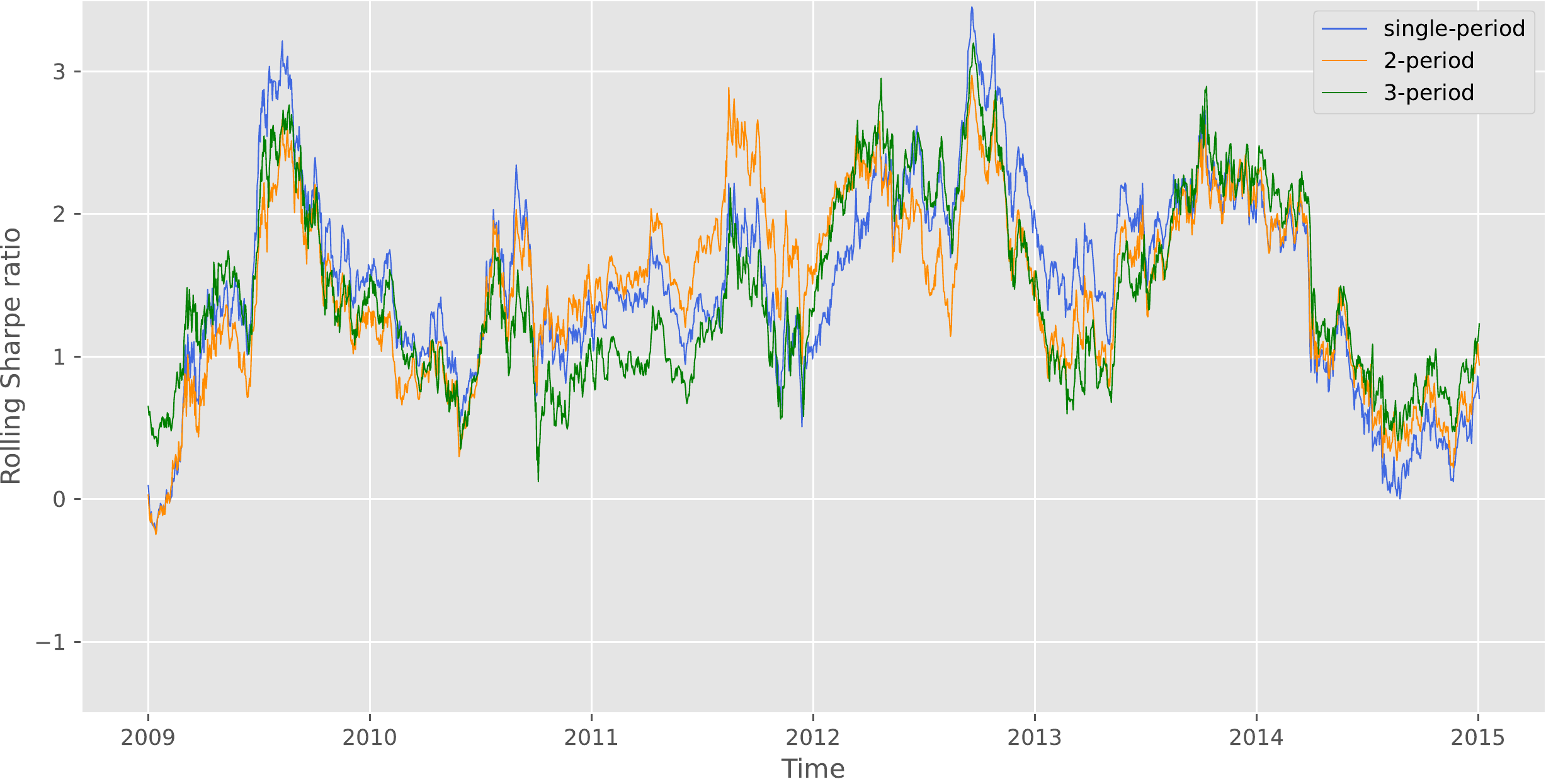}
\caption{50 Stocks without transaction costs, starting 2008.08.01}
\label{fig14}
\end{figure}
\begin{figure}[H]
\centering
\includegraphics[width=14cm]{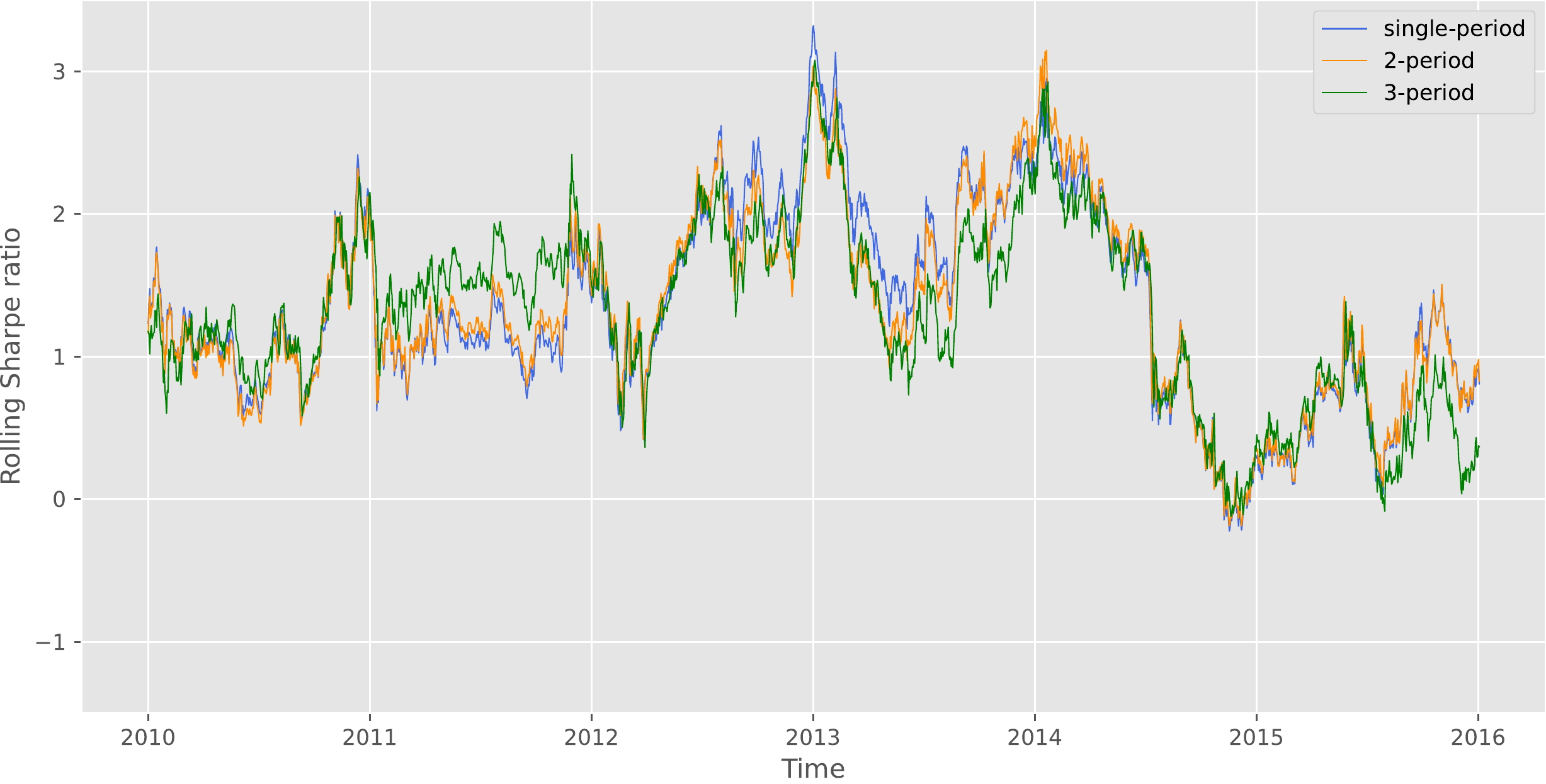}
\caption{50 Stocks without transaction costs, starting 2009.06.01}
\label{fig15}
\end{figure}
Table \ref{table7} is the comparison between the investment strategies with the transaction costs.
\begin{table}[H]\centering
\small
\begin{tabular}{ |c||c|c|c|  }
 \hline
2002.02.01 & Mean (Daily) & Std Dev (Daily) & Sharpe (Annualized)\\
 \hline
single-period & 0.000500910 & 0.013312917 & 0.597292435\\
2-period & 0.000475919 & 0.009031897 & 0.836478401\\
3-period & 0.000550335 & 0.009429044 & 0.926530660\\
 \hline
2004.06.01 & Mean (Daily) & Std Dev (Daily) & Sharpe (Annualized)\\
 \hline
single-period & 0.000463723 & 0.014620126 & 0.503510369\\
2-period & 0.000548167 & 0.009224016 & 0.943394049\\
3-period & 0.000535506 & 0.009213172 & 0.922690207\\
 \hline
2006.06.01 & Mean (Daily) & Std Dev (Daily) & Sharpe (Annualized)\\
 \hline
single-period & 0.000577600 & 0.011625840 & 0.788684216\\
2-period & 0.000545246 & 0.008792993 & 0.984366355\\
3-period & 0.000583867 & 0.009193292 & 1.008192565\\
 \hline
 2008.08.01 & Mean (Daily) & Std Dev (Daily) & Sharpe (Annualized)\\
 \hline
single-period & 0.000667794 & 0.009498350 & 1.116078218\\
2-period & 0.000647608 & 0.009162976 & 1.121955506\\
3-period & 0.000702317 & 0.009423333 & 1.183120873\\
\hline
 2009.06.01 & Mean (Daily) & Std Dev (Daily) & Sharpe (Annualized)\\
 \hline
single-period & 0.000604619 & 0.007545104 & 1.272087363\\
2-period & 0.000590169 & 0.007421212 & 1.262415012\\
3-period & 0.000572986 & 0.007426424 & 1.224797421\\
\hline
\end{tabular}
\caption{50 Stocks with transaction costs}
\label{table7}
\end{table}
The following plots are the rolling 1-year Sharpe ratios of each strategy with transaction costs, starting from different dates.
\begin{figure}[H]
\centering
\includegraphics[width=14cm]{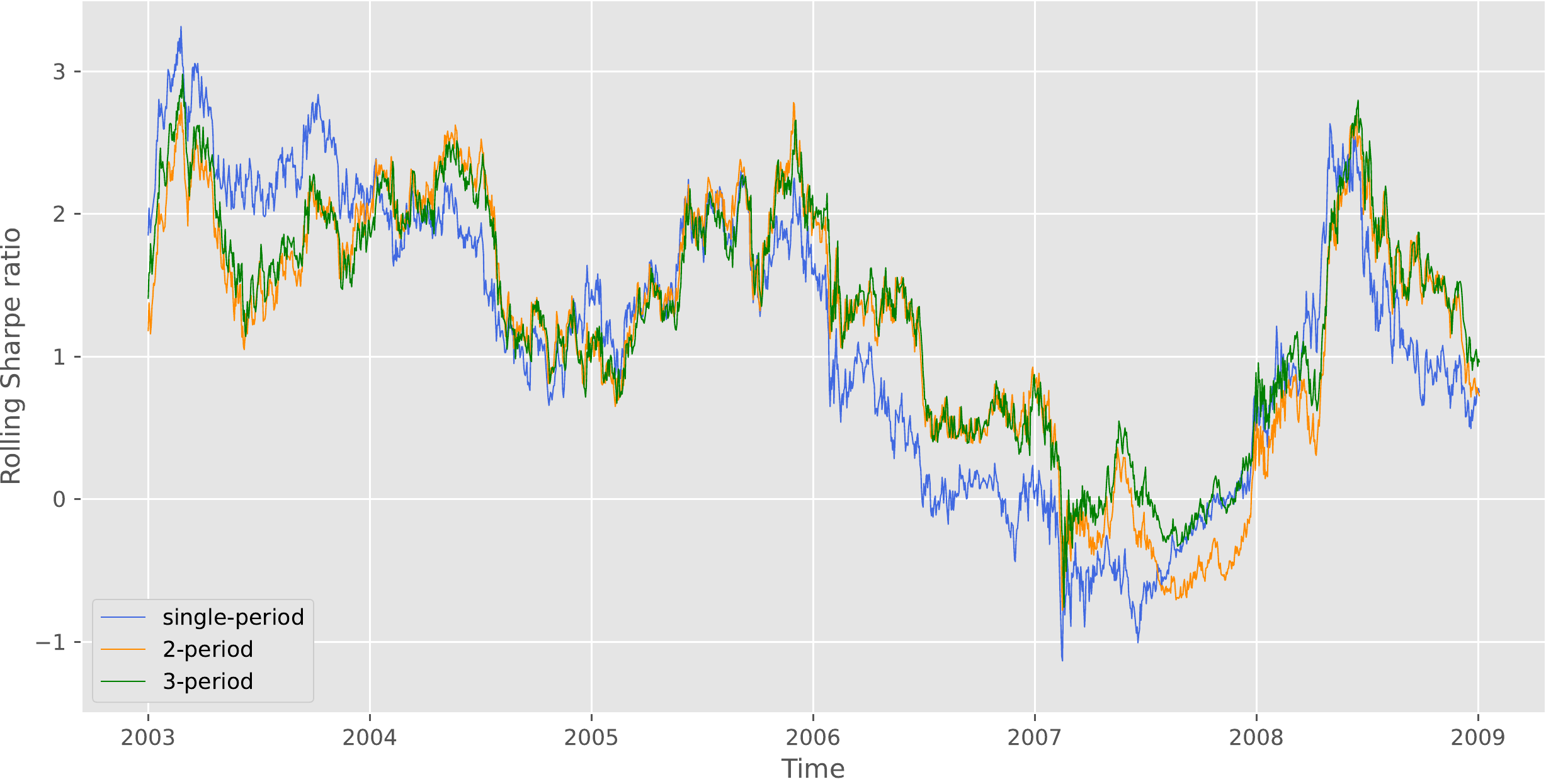}
\caption{50 Stocks with transaction costs, starting 2002.02.01}
\label{fig16}
\end{figure}
\begin{figure}[H]
\centering
\includegraphics[width=14cm]{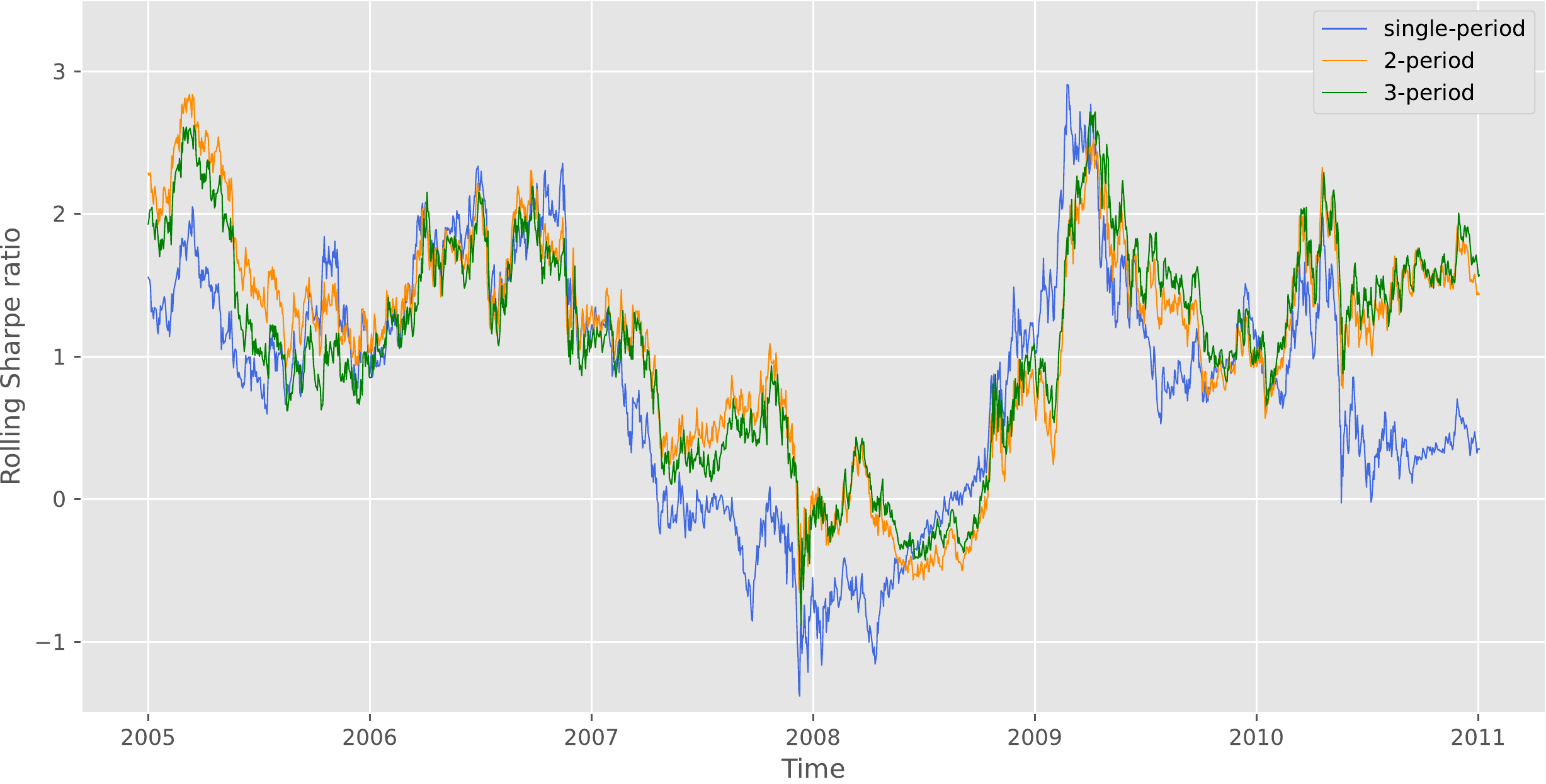}
\caption{50 Stocks with transaction costs, starting 2004.06.01}
\label{fig1.0.11}
\end{figure}
\begin{figure}[H]
\centering
\includegraphics[width=14cm]{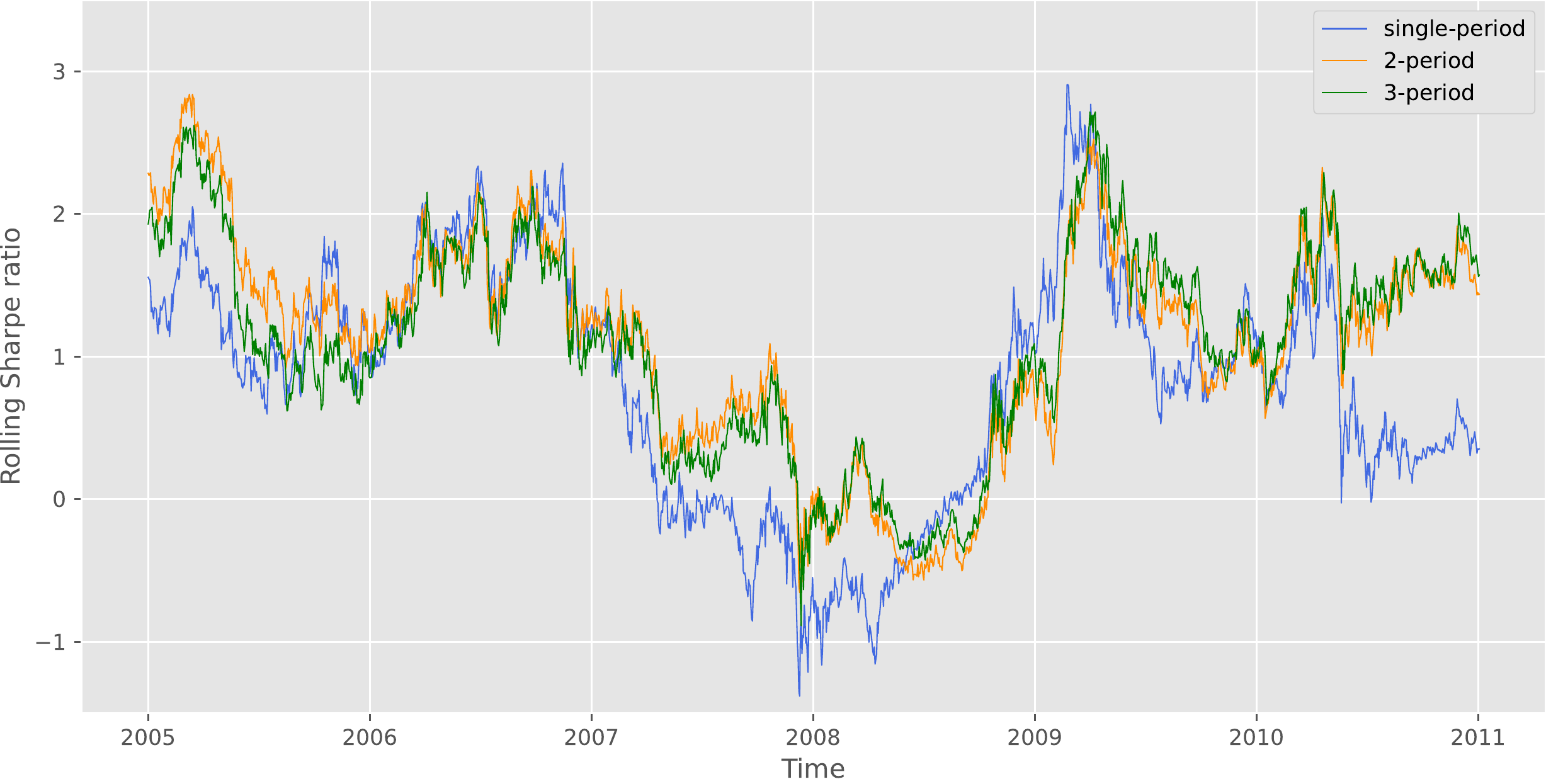}
\caption{50 Stocks with transaction costs, starting 2006.06.01}
\label{fig1.0.12}
\end{figure}

\begin{figure}[H]
\centering
\includegraphics[width=14cm]{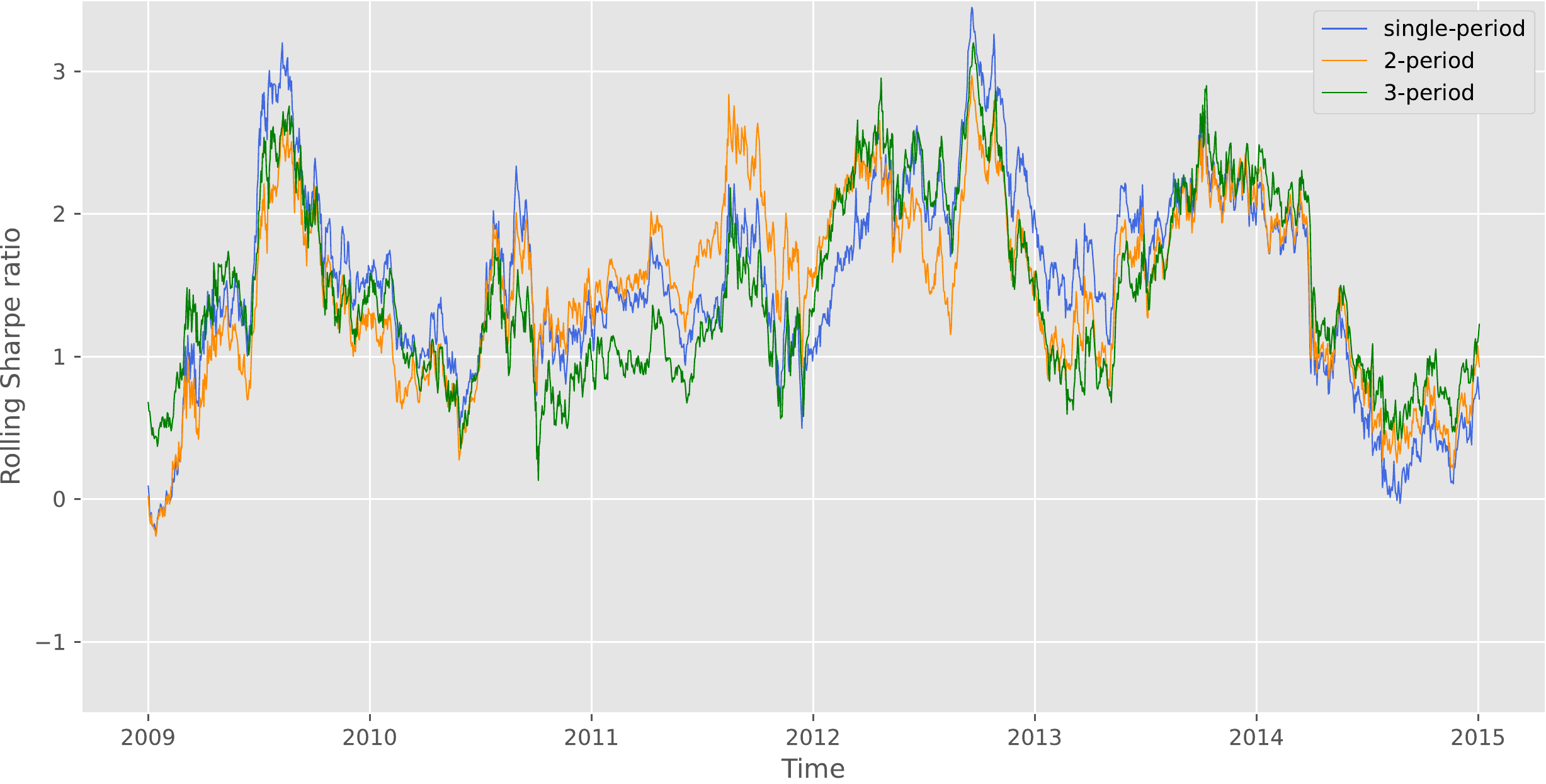}
\caption{50 Stocks with transaction costs, starting 2008.08.01}
\label{fig17}
\end{figure}
\begin{figure}[H]
\centering
\includegraphics[width=14cm]{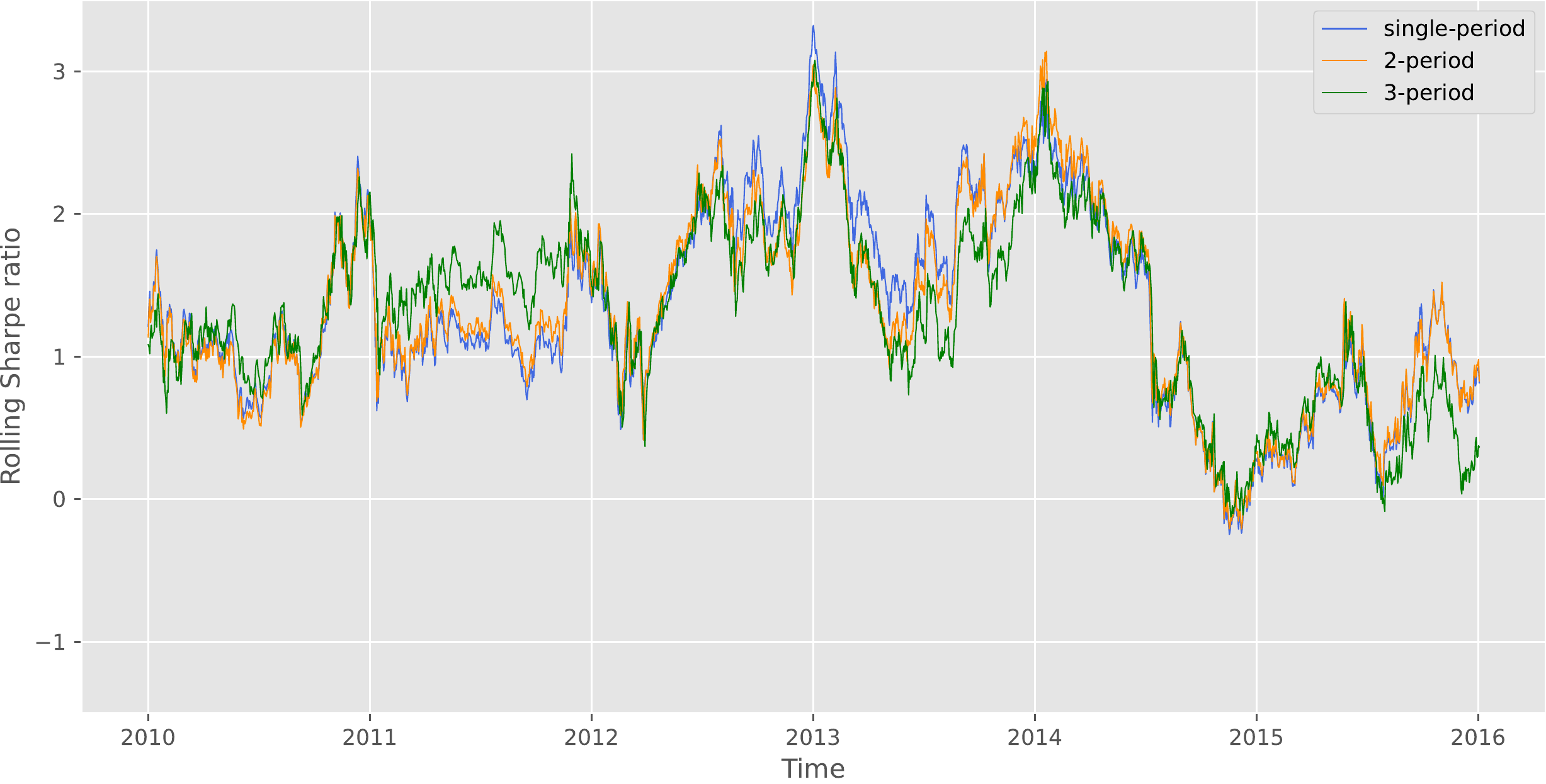}
\caption{50 Stocks with transaction costs, starting 2009.06.01}
\label{fig18}
\end{figure}

The following figure is the change in Sharpe ratio as the number of periods increases.
\begin{figure}[H]
\centering     
\subfigure[Without transaction costs]{\label{fig25a}\includegraphics[width=7.8cm]{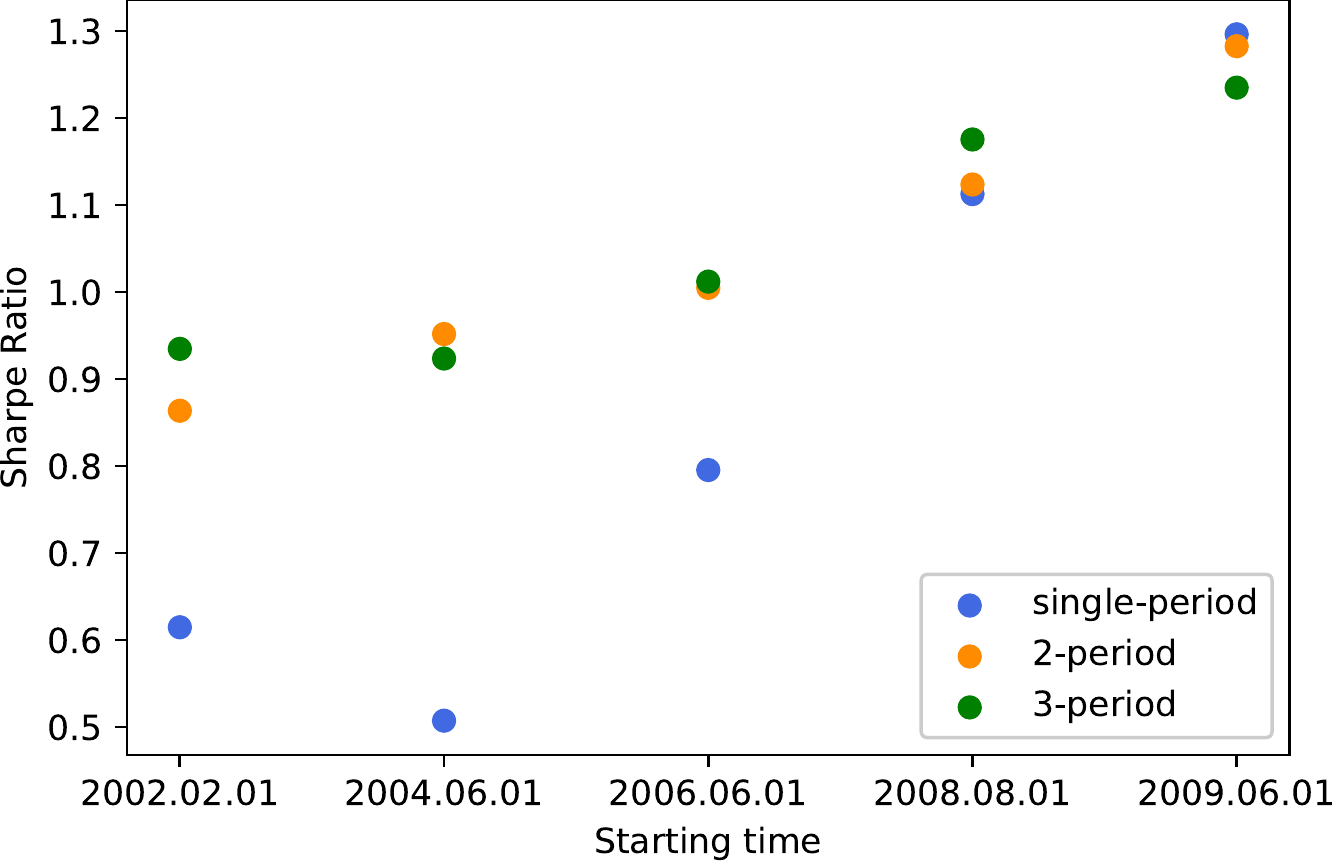}}
\subfigure[With transaction costs]{\label{fig25b}\includegraphics[width=7.5cm]{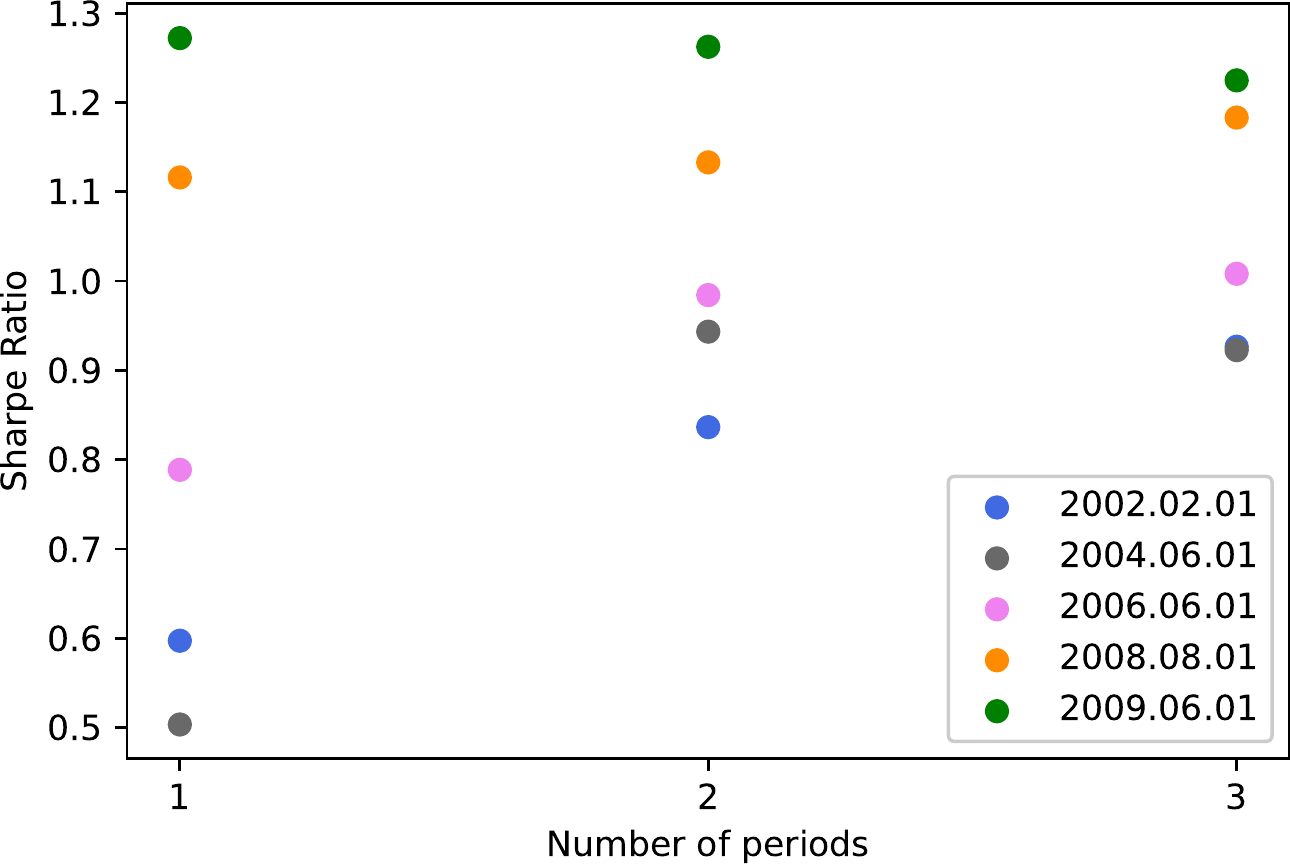}}
\label{fig25}
\caption{Sharpe ratio VS Number of periods}
\end{figure}

\subsection*{Conclusion}
According to our empirical studies, as the number of periods increases, the Sharpe ratio is likely to increase as well: see Figure \ref{fig25a} and Figure \ref{fig25b}. The two exceptions are the simulations starting from 2004.06.01 and 2009.06.01. For the simulation starting from 2004.06.01, the 2-period model showed the best Sharpe ratio and single-period model showed the worst one. In the other case starting from 2009.06.01, the single-period model performed the best and 3-period model performed the worst.

As the number of periods in the model increases, the parameters we need to estimate and the computational cost increases dramatically. Depending on the in-sample data we used for parameter estimation, it is possible that 3-period performs worse than single- or 2-period model.

While we need more experiments, our simulations suggest 2- or 3-period is desirable in terms of the portfolio performance to apply our multiperiod model to the real-world data.
\section{Conclusion}
\label{section8}
In this article, we study robust mean-variance optimization in multiperiod portfolio selection. We allow the true probability measure to be inside a Wasserstein ball that is specified by the empirical data and a given confidence level. We represent the optimal control as a function of historical stock prices and approximate it using Taylor expansion, which allows us to extend the single-period to a multiperiod model. Then, we apply our framework to some numerical simulations of the US stock market, which provides a promising result compared to other popular strategies.

\newpage
\bibliographystyle{abbrvnat}
\bibliography{ethan}

\newpage
\begin{appendices}
\section{}
\label{appendix}
We need to prove some components of $A$ are $0$ does not change the value of the optimization problem (\ref{eq15}).

Our initial optimization problem is
\begin{equation*}
	\begin{aligned}
	\inf_{\boldsymbol{\pi}\in\scF_{(\delta,\Bar{\alpha})}}\max_{\lambda\geq \bar\alpha}\brak{\sup_{P\in \scU_\delta(Q),E_P[\boldsymbol{\pi}^{\intercal}\boldsymbol{R}]=\lambda}E_P\edg{\brak{\boldsymbol{\pi}^\intercal\boldsymbol{R}}^2-\lambda^2}}.
	\end{aligned}
\end{equation*}
After using the Taylor expansion to the investment strategy, for given $\boldsymbol{\pi}$, the middle and inner problems finally become
\eqn{\Big(\sqrt{A^\intercal Var_Q(M)A}+\sqrt{\delta}\|A\|_2\Big)^2.}
We know $A$ has the restriction that its some components are $0$. We can get $\boldsymbol{\pi}$ from $A$ and historical data $N$ (see Section \ref{section4}). Then $\boldsymbol{\pi}\approx A^\intercal N$.

So we can rewrite the outer problem as 
\eqn{\inf_{ A\in\scF_{(\delta,\Bar{\alpha})}}\Big(\sqrt{A^\intercal Var_Q(M)A}+\sqrt{\delta}\|A\|_2\Big)^2.}
Denote $A_0$ is all $0$ components in $A$ and the rest components of $A$ is $A_1$. We rearrange the structure of $A$ and $M$ and keep the value of $A^\intercal M$ unchanged. 

Let $A=(A_0,A_1)$ and $M=(M_0,M_1)$, where $M_0$ is some components of $M$ corresponding to $A_0$ in $A$, $M_1$ is the rest components of $M$ corresponding to $A_1$ in $A$. $A_0$ and $M_0$, $A_1$ and $M_1$ have the same dimension respectively.

Therefore, the above middle and inner problems can be written as
\eqn{\Big(\sqrt{(A_0,A_1)^\intercal Var_Q((M_0,M_1))(A_0,A_1)}+\sqrt{\delta}\|(A_0,A_1)\|_2\Big)^2,}
which is equal to 
\eqn{\Big(\sqrt{A_1^\intercal Var_Q(M_1)A_1}+\sqrt{\delta}\|A_1\|_2\Big)^2.}
For the feasible region (see Section \ref{section4})
\eqn{\mathcal{F}_{(\delta,\Bar{\alpha})}:=\Big\{A:E_{Q}[A^{\intercal}M]-\sqrt{\delta}\|A\|_2\geq\Bar{\alpha}\Big\},}
it becomes
\eqn{\mathcal{F}_{(\delta,\Bar{\alpha})}:=\Big\{A_1:E_{Q}[A_1^{\intercal}M_1]-\sqrt{\delta}\|A_1\|_2\geq\Bar{\alpha}\Big\}.}
So the final optimization problem is 
\eqn{\inf_{ A_1\in\scF_{(\delta,\Bar{\alpha})}}\Big(\sqrt{A_1^\intercal Var_Q(M_1)A_1}+\sqrt{\delta}\|A_1\|_2\Big)^2}
or
\eqn{\inf_{ A_1\in\scF_{(\delta,\Bar{\alpha})}}\sqrt{A_1^\intercal Var_Q(M_1)A_1}+\sqrt{\delta}\|A_1\|_2.}
\end{appendices}
\end{document}